\documentclass[11pt,english]{article}
\usepackage{amsmath,amssymb,amsthm}
\usepackage{multicol}
\usepackage[margin=1in]{geometry}
\usepackage{graphicx,color}
\usepackage{enumitem}
\usepackage{fullpage}
\usepackage[noblocks]{authblk}
\usepackage{tcolorbox}
\usepackage{babel}
\usepackage{wrapfig}
\usepackage{MnSymbol}
\usepackage{mdframed}

\usepackage{algorithm}
\usepackage{algpseudocode}

\usepackage{tablefootnote}
\usepackage{thm-restate}
\usepackage{bbding}
\usepackage{pifont}
\usepackage{nicefrac}

\definecolor{Darkblue}{rgb}{0,0,0.4}
\definecolor{Brown}{cmyk}{0,0.61,1.,0.60}
\definecolor{Purple}{cmyk}{0.45,0.86,0,0}
\definecolor{Darkgreen}{rgb}{0.133,0.543,0.133}

\usepackage[colorlinks,linkcolor=Darkblue,filecolor=blue,citecolor=blue,urlcolor=Darkblue,pagebackref]{hyperref}
\usepackage[nameinlink]{cleveref}

\usepackage[colorinlistoftodos,prependcaption,textsize=tiny]{todonotes}
\newif\ifdraft 
\draftfalse

\newtheorem{theorem}{Theorem}
\newtheorem{lemma}{Lemma}
\newtheorem{Lem}{Lemma}
\newtheorem{definition}{Definition}
\newtheorem{claim}{Claim}
\newtheorem{observation}{Observation}
\newtheorem{corollary}{Corollary}

\newtheorem{question}{Question}

\let\int\undefined

\newcommand{\port}{\mathsf{Port}}

\newcommand{\polylog}{\mathrm{polylog}}
\newcommand{\poly}{\mathrm{poly}}
\newcommand{\depth}{\mathsf{depth}}

\newcommand{\parent}{\mathsf{parent}}
\newcommand{\im}{\mathsf{cpr}}

\newcommand{\ext}{\mathsf{Ext}}
\newcommand{\int}{\mathsf{Int}}
\newcommand{\leavs}{\mathsf{L}}
\newcommand{\home}{\mathsf{home}}
\newcommand{\ansc}{\mathsf{Ansc}}

\newcommand{\SP}{\mathsf{SP}}
\newcommand{\cpr}{\mathsf{cpr}}
\newcommand{\spc}{\mathsf{Space}}
\newcommand{\md}{\mathcal{D}}
\newcommand{\mt}{\mathcal{T}}
\newcommand{\mx}{\mathcal{X}}
\newcommand{\my}{\mathcal{Y}}
\newcommand{\mr}{\mathcal{R}}
\newcommand{\lcads}{\mathcal{LCA}}
\newcommand{\mword}{\overline{\omega}}

\newcommand{\lport}{\mathsf{LPort}}
\newcommand{\portrev}{\overline{\mathsf{Port}}}
\newcommand{\lportrev}{\overline{\mathsf{LPort}}}

\newcommand{\lcadepth}{\mathsf{LCADepth}}

 \def\rank{\operatorname{rank}}
  \def\head{\operatorname{head}}

 \newcommand{\eps}{\epsilon}

\DeclareMathOperator*{\argmax}{arg\,max}

\def\eps{\epsilon}

\newcommand{\initOneLiners}{%
	\setlength{\itemsep}{0pt}
	\setlength{\parsep }{0pt}
	\setlength{\topsep }{0pt}
}

\title{Optimal Approximate Distance Oracle for Planar Graphs}
\author{Hung Le}
\affil{University of Massachusetts at Amherst}
\author{Christian Wulff-Nilsen}
\affil{University of Copenhagen}

\date{}
\begin{document}
\maketitle
\begin{abstract}
	A $(1+\epsilon)$-approximate distance oracle of an edge-weighted graph is a data structure that returns an approximate shortest path distance between any two query vertices up to a $(1+\epsilon)$ factor. Thorup (FOCS 2001, JACM 2004) and Klein (SODA 2002) independently constructed a $(1+\epsilon)$-approximate distance oracle with $O(n\log n)$ space, measured in number of words, and $O(1)$ query time when $G$ is an undirected planar graph with $n$ vertices and $\epsilon$ is a fixed constant. Many follow-up works gave $(1+\epsilon)$-approximate distance oracles with various trade-offs between space and query time. However, improving $O(n\log n)$ space bound without sacrificing query time remains an open problem for almost two decades. In this work, we resolve this problem affirmatively by constructing a $(1+\epsilon)$-approximate distance oracle with optimal $O(n)$ space and $O(1)$ query time for undirected planar graphs and fixed $\epsilon$.

We also make substantial progress for planar digraphs with non-negative edge weights. For fixed $\epsilon > 0$, we give a $(1+\epsilon)$-approximate distance oracle with space $o(n\log(Nn))$ and $O(\log\log(Nn)$ query time; here $N$ is the ratio between the largest and smallest positive edge weight. This improves Thorup's (FOCS 2001, JACM 2004) $O(n\log(Nn)\log n)$ space bound by more than a logarithmic factor while matching the query time of his structure. This is the first improvement for planar digraphs in two decades, both in the weighted and unweighted setting.

\end{abstract}
\pagebreak
{\small \setcounter{tocdepth}{2} \tableofcontents}
\newpage
\pagenumbering{arabic}

\section{Introduction}

Computing shortest path distances in edge-weighted planar graphs\footnote{Graphs considered in this paper are edge-weighted unless specified otherwise.} is a fundamental problem with lots of practical applications, such as planning, logistics, and traffic simulation~\cite{Sommer14}. The celebrated algorithm of Henzinger et al.~\cite{HKRS97} can compute the shortest path distance in a planar graph between any given pair of vertices in linear time. However, when we are given a large number of distance queries and the network is huge, a linear time algorithm for answering \emph{each} distance query becomes unsatisfactory. This motivates the development of \emph{(exact or approximate) distance oracles}: a data structure that can quickly answer any (exact or approximate) distance query.  

Trees are simplest planar graphs that admit exact distance oracles with linear space and constant query time; the construction is by a simple reduction to constructing a lowest common ancestor (LCA) data structure. This basic fact leads to the following fundamental question that has been driving the field:

\begin{question}\label{quest:motivating}
	Is it possible to construct  a distance oracle, exact or approximate, for edge-weighted  planar graphs, directed or undirected, with guarantees matching those for trees: linear space and constant query time?
\end{question}

Despite significant research efforts spanning more than two decades~\cite{Klein02,Thorup04,MZ07,WulffNilsen16,KKS11,GX19,CS19,KST13,ACG12,MS12,GMWW18,CDW17,CGMW19,LP20}, \Cref{quest:motivating} remains largely open in all four basic settings: exact/approximate oracles for  directed/undirected planar graphs.

Cohen-Addad, Dahlgaard and Wulff-Nilsen~\cite{CDW17} constructed the first exact distance oracle with truly subquadratic space and $O(\log(n))$\footnote{In this paper, $n$ is the number of vertices of the graph.} query time for \emph{directed} planar graphs; previous exact oracles with truly subquadratic space had polynomial query time. Several follow-up works~\cite{GMWW18,CGMW19,LP20} improved the oracle of Cohen-Addad, Dahlgaard and Wulff-Nilsen~\cite{CDW17} in several ways, getting either space $\tilde O(n)$ and $n^{o(1)}$ query time or space $n^{1+o(1)}$ and $\tilde O(1)$ query time~\cite{LP20}. ($\tilde{O}$ notation hides a polylogarithmic factor of  $n$.) However, no exact distance oracle with \emph{constant query time} and \emph{truly subquadratic} space is known, even for undirected weighted or unweighted directed planar graphs. For a special case of \emph{unweighted}, undirected planar graphs,  Fredslund-Hansen, Mozes and Wulff-Nilsen~\cite{FMW20} recently constructed an exact distance oracle with $O(n^{\frac{5}{3}+\eps})$ space and $O(\log\frac{1}{\eps})$ query time for any choice of parameter $\epsilon > 0$. If we only want to query exact distances of value at most a \emph{constant} $k$ in an unweighted, directed planar graph, Kowalik and Kurowski~\cite{KK03} showed that an oracle with linear space and constant query time exists. For a more thorough review of exact distance oracles in planar graphs, see \Cref{subsec:related-work}.

Given the remote prospect of answering \Cref{quest:motivating} for exact distance oracles, we focus on  \emph{$(1+\eps)$-approximate} distance oracles where the query output is never smaller than the true shortest path distance and not larger by more than a factor of $(1+\epsilon)$, for any given $\epsilon > 0$. Note that \Cref{quest:motivating} is only relevant when $\eps$ is a fixed constant, and this is the most basic regime that we are interested in.

Over the past 20 years, significant progress has been made in constructing $(1+\eps)$-approximate distance oracle for \emph{undirected} planar graphs.  In their seminal papers, Thorup~\cite{Thorup04} and Klein~\cite{Klein02} indpendently constructed distance oracles with $O(n(\log n)\epsilon^{-1})$ space\footnote{Unless otherwise stated, each space bound in this paper is in the number of \emph{words} that the oracle uses.} and $O(\epsilon^{-1})$ query time. When $\epsilon = \Theta(1)$, their oracles have $O(n\log n)$ space and $O(1)$ query time. (Henceforth, we do not spell out the dependency on $\eps$ unless it is important to do so.) Kawarabayashi, Klein and Sommer~\cite{KKS11} reduced the space to $O(n)$ at the cost of increasing the query time to $O(\log^2 n)$. Kawarabayashi, Sommer and Thorup~\cite{KST13} designed a distance oracle with $\overline{O}(n)$ space and $\overline{O}(1)$ query time when \emph{every edge has weight at most} $\log^{O(1)}(n)$ (and at least $1$); $\overline{O}(.)$ notation hides a $\log\log n$ factor.  Wullf-Nilsen~\cite{WulffNilsen16} was the first to break the $\Omega(n\log n)$ bound on the trade-off between space and query time  by giving  an oracle with $O(n(\log\log n)^2)$ space and $O((\log\log n)^3)$ query time\footnote{The dependency on $\epsilon$ of the oracles in~\cite{KKS11,WulffNilsen16,KST13} is  polynomial in $\frac{1}{\epsilon}$.}.   While the oracle of Wulff-Nilsen implies that the $O(n\log n)$ space-time tradeoff is not the best possible, it is suboptimal in both space and query time.  In this paper, for the first time, we provide an affirmative answer to~\ref{quest:motivating} in one of the four basic settings: approximate distance oracles for undirected planar graphs.

\begin{restatable}{theorem}{Main}
	\label{thm:main}
	Given an edge-weighted $n$-vertex planar undirected graph $G$ and a fixed parameter $\epsilon < 1$, there is a $(1+\epsilon)$-approximate distance oracle with $O(n)$ space and $O(1)$ query time. Furthermore, the oracle can be constructed in worst-case time $O(n\polylog(n))$ time. 
\end{restatable}

The precise dependencies of the space, query time, and construction time on $\eps$ are $O(n\epsilon^{-2})$, $O(\epsilon^{-2})$ and  $O(n\epsilon^{-3}\log^6(n))$, respectively. For the simplicity of the presentation, we do not try to optimize the dependency on $\eps$  as well as the logarithmic factor in the construction time.

Our result is optimal in the sense that any $(1+\epsilon)$-approximate distance oracle for $n$-vertex weighted undirected planar graphs must use $\Omega(n)$ space. This lower bound holds regardless of query time and holds even for $n$-vertex simple paths with unique integer edge weights in $\{2^0, 2^1,\ldots,2^{n-1}\}$ and with $\epsilon < 1$: the number of such paths is $n!$ and since $1 + \epsilon < 2$, a $(1+\epsilon)$-approximate distance oracle for any such path $P$ can be queried to derive the weight of each edge of $P$. Hence, any $(1+\epsilon)$-approximate oracle needs $\Omega(n\lg n)$ bits of space. Although an integer edge weight like $2^{n-1}$ cannot be stored in a $\Theta(\lg n)$ bit word using the standard binary representation of an integer, it can be stored in $1$ (or possibly $O(1)$) words when represented as a floating point since the exponent can be represented using $O(\lg n)$ bits. Also note that for unweighted undirected planar graphs, a $(1+\epsilon)$-approximate oracle with $o(n)$ space would not allow the user to freely assign unique labels to the vertices since there are $n!$ such assignments.

For planar digraphs\footnote{Digraphs is a shorthand for directed graphs.}, Thorup~\cite{Thorup04} designed the first $(1+\epsilon)$-approximate distance oracle with $O(n \log n \log(nN))$ space and $O(\log\log(nN))$ query time where edge weights are non-negative and $N$ is the ratio between the largest and smallest positive edge weight.\footnote{Thorup in fact assumes that edge weights are integers in $\{0,1,\ldots,N\}$ but this is only needed to get a small preprocessing time since it allows the use of fast priority queues. Since we do not focus on preprocessing for planar digraphs, we avoid the integer weight assumption.} (The precise dependency on $\eps$ of Thorup's oracle is $O(n \log n \log(nN)/\eps)$ for space and $O(\log\log(nN) + 1/\eps)$ for query time.) Since its introduction two decades ago, Thorup's oracle has not been improved. An added challenge in the directed setting is that \emph{portals} -- an important concept in approximate distance oracles -- in digraphs are less well-behaved  than their undirected counterparts. Specifically, for a given vertex $v$ and a shortest path $P$, the number of vertices on $P$, called portals, through which we need to re-route the shortest paths from $v$ to all  vertices on $P$ (with $1+\eps$ multiplicative error) is $O(\frac{1}{\eps})$ in the undirected case, while the number of such portals could be up to $\Theta(|P|)$ in the directed case.  Even for the special case of unweighted digraphs, the bounds of Thorup's oracle (by taking $N = 1$) has remained state-of-the-art: $O(n\log^2 n)$ space and $O(\log\log n)$ query time for constant $\eps$.

In this paper, we give the first improvement over the space bound Thorup's oracle by more than a logarithmic factor, while keeping the same query time for a constant $\eps$. For $x > 0$, define $\log^{(1)}x = \log x$ and for integer $k > 1$, $\log^{(k)}x = \log(\log^{(k-1)}x)$. We show the following.

\begin{restatable}{theorem}{Digraphs}
	\label{thm:Digraphs}
	Given a planar $n$-vertex digraph $G$ with non-negative edge weights and given $\epsilon > 0$ and any integer $k = \Theta(1)$, there is a $(1+\epsilon)$-approximate distance oracle for $G$ with space\linebreak $O(n\log(Nn)\log^{(k)}(n)/(\epsilon\log\log\log(Nn)))$ and query time $O(\log\log(Nn) + 1/\epsilon^{5.01})$ where $N$ is the ratio between the largest and smallest positive edge weight.
\end{restatable}
Note that for $k\geq 4$, the space bound is $o(n\log(Nn)/\epsilon)$ in Theorem~\ref{thm:Digraphs}.

By setting $N = 1$, we obtain the first approximate distance oracle for unweighted planar digraphs with $o(n\log^2 n)$ space and $O(\log\log n)$ query time (\Cref{cor:Digraphs} below). In fact, we get a space bound of only $o(n\log n)$. Our result might suggest that it is possible to construct an oracle with $O(n)$ space and $O(1)$ query time for unweighted planar digraphs. We believe that constructing such an oracle is an important step towards resolving \Cref{quest:motivating} for approximate distance oracles in edge-weighted planar digraphs.

\begin{corollary}\label{cor:Digraphs}
	Given an unweighted planar $n$-vertex digraph $G$ and given $\epsilon > 0$ and any integer $k = \Theta(1)$, there is a $(1+\epsilon)$-approximate distance oracle for $G$ with  $O(n\log(n)\log^{(k)}(n)/(\epsilon\log\log\log n))$ space and  $O(\log\log n + 1/\epsilon^{5.01})$ query time.
\end{corollary}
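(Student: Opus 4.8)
The plan is to derive Corollary~\ref{cor:Digraphs} as an immediate specialization of Theorem~\ref{thm:Digraphs}. First I would observe that an unweighted digraph is precisely an edge-weighted digraph in which every edge has weight $1$; hence the ratio between the largest and smallest edge weight is $N = 1$. Substituting $N = 1$ into the statement of Theorem~\ref{thm:Digraphs}, every occurrence of $\log(Nn)$ becomes $\log n$ and every occurrence of $\log\log\log(Nn)$ becomes $\log\log\log n$, while the $\log^{(k)}(n)$ factor (which does not depend on $N$) is unchanged. This turns the space bound into $O(n\log(n)\log^{(k)}(n)/(\epsilon\log\log\log n))$ and the query time into $O(\log\log n + 1/\epsilon^{5.01})$, exactly as claimed. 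One should also note that the hypotheses of Theorem~\ref{thm:Digraphs}---planarity, $\epsilon > 0$, and $k = \Theta(1)$---are inherited verbatim.

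The only points that require (minor) care are degenerate ranges of the parameters. For $n$ smaller than a fixed constant the quantity $\log\log\log n$ is not well-defined or not positive; in that regime the graph has $O(1)$ vertices, so one can simply precompute and store all $O(1)$ pairwise distances, making the oracle trivial while the asymptotic bounds hold vacuously. I do not anticipate any real obstacle here: the corollary is a direct reading-off of Theorem~\ref{thm:Digraphs} with $N$ set to its minimum possible value, and essentially all of the work has already been done in establishing that theorem.
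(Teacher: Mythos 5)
Your proposal is correct and matches the paper's (implicit) treatment: the corollary is stated as an immediate specialization of Theorem~\ref{thm:Digraphs}, obtained by noting that an unweighted digraph has $N=1$, so every $\log(Nn)$ and $\log\log\log(Nn)$ factor becomes $\log n$ and $\log\log\log n$ respectively. Your remark about trivially handling constant-size $n$ is a harmless extra precaution.
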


For unweighted planar digraphs, there is a space lower bound of $\Omega(n\lg n)$ bits, i.e., $\Omega(n)$ words with word size $\Theta(\lg n)$, and this lower bound holds even for a data structure that answers reachability queries regardless of query time~\cite{HolmRT15}. Thus the space bound in Corollary~\ref{cor:Digraphs} is only a factor of $o(\log(n)/\epsilon)$ away from optimal.

We have not focused on preprocessing time of our oracle for planar digraphs but it is easy to see from the description of this oracle that preprocessing time is bounded by a polynomial in $n$ times $\log N$.

\paragraph{Model of computation.~} The model of computation considered in our paper is the standard WORD RAM with word size $\mword = \Omega(\log n)$. In this model, arithmetic operations  ($+,-,*,/,\%$), comparisons ($ <,  >,  =, \le, \ge$), and bitwise operations (AND, OR, XOR, SHIFT)  on words take constant time each. 

\subsection{Techniques}\label{subsec:technique}

\paragraph{Approximate distance oracles for undirected planar graphs.~} Our technique for undirected planar graphs is inspired by that of Kawarabayashi, Sommer, and Thorup~\cite{KST13} who constructed a distance oracle with $\overline{O}(n\log n)$ space and  $\overline{O}(1/\epsilon)$ query time. Their idea is to construct an oracle with \emph{multiplicative} stretch $(1+\epsilon)$ from (a collection of) distance oracles with \emph{additive stretch} via a clever use of \emph{sparse covers}\footnote{See Section~\ref{sec:prelim} for a formal definition of a sparse cover.}.  Specifically, sparse covers are used to construct a set of subgraphs of $G$, called \emph{clusters}, and then a  distance oracle with additive stretch $\epsilon D$ is constructed for each cluster; here $D$ is the diameter of the cluster. However, the space bound of the oracle by Kawarabayashi, Sommer, and Thorup is superlinear for two reasons: (1) the additive distance oracle has space that is superlinear in the number of vertices of each cluster and (2)  the total size (the number of vertices) of all clusters is $\Omega(n\log n)$.  Nevertheless, the technique of Kawarabayashi, Sommer, and Thorup~\cite{KST13} suggests an interesting connection to geometry since sparse covers have a very natural geometric interpretation\footnote{A sparse cover of a Euclidean space is simply a tiling of the space by overlapping hypercubes.}.  

In doubling metrics, it was shown how to construct a distance oracle with $O(n)$ space and $O(1)$ query time~\cite{HM05,HM06,BGKLR11} for  constant values of $\epsilon$ and constant dimensions. Thus, it is natural to ask: Can we exploit techniques developed for doubling metrics to construct a $(1+\eps)$-approximate distance oracle for planar graphs with $O(n)$ space and $O(1)$ query time? To be able to answer this question positively, there are several technical barriers that we need to overcome. The most fundamental one is that space bounds of all known oracles for doubling spaces have an \emph{exponential dependency} on the dimension~\footnote{Interestingly, the query time can be made independent of $\epsilon$ and $d$~\cite{BGKLR11}}, while very simple planar graphs, such as star graphs, have doubling dimension $\Omega(\log n)$. Thus, it is somewhat counter-intuitive that we are able to exploit the geometric techniques in the construction of our apporixmate oracle for planar graphs.

We overcome all the technical barriers, as detailed below, to resolve two obstacles in the construction of Kawarabayashi, Sommer, and Thorup~\cite{KST13} by capitalizing on the techniques developed in the geometric context. Like the previous distance oracle constructions in doubling metrics~\cite{HM06,HM05,BGKLR11}, we start with a \emph{net tree} in which each level $i$ of the net tree is a $2^{i}$-net\footnote{An $r$-net of  a metric $(V,d_G)$ is a subset of points $N$ such that $d_G(x,y) > r$ for every $x\not=y \in N$ and for every $z \not\in N$, there exists $x\in N$ such that $d_G(x,z) \leq r$.} of the shortest path metric of $G(V,E)$. (There are $O(\log \Delta)$ levels where $\Delta$ is the spread\footnote{Spread of a metric is the ratio of the maximum pairwise distance to the minimum pairwise distance.} of the metric.)  In doubling metrics, for each level-$i$ of the net tree, we could store all distances from a net point $p$ to other net points in the same level within radius $O(\frac{2^i}{\epsilon})$ from $p$; there are only $O(\epsilon^{-d}) = O(1)$ such points when the dimension is a constant. However, in planar graphs, for each net point\footnote{We use points and vertices interchangeably.} $p$, there could be $\Omega(n)$ net points at the same level $i$ within radius  $O(\frac{2^i}{\epsilon})$ from $p$. We cannot afford to store all such distances. An important observation that we rely on in our construction is that we only need to (approximately) preserve distances from $p$ to net points  at level $i$ within in radius  $\Theta(\frac{2^i}{\epsilon})$ (rather than $O(\frac{2^i}{\epsilon})$) from $p$. Note that there could still be $\Omega(n)$ such net points.

Our first contribution is a technique to construct a \emph{$(d,\alpha,S)$-restricted } distance oracle with $O(|S|)$ space and $O(1)$ query time for every parameter $d$ and a \emph{constant} $\alpha$.  The oracle guarantees that,  for every pair $(u,v)\in S\times S$ such that $d_G(u,v)\in [d,\alpha d]$, the returned distance is within $[d_G(u,v), (1+\eps)d_G(u,v)]$. An important property of a  $(d,\alpha,S)$-restricted distance oracle, beside having $O(|S|)$ space, is that its space does not depend on $n$, the number of vertices of the graph. Essentially, we overcome the first obstacle in the construction of Kawarabayashi, Sommer, and Thorup~\cite{KST13}. We use $(d,\alpha,S)$-restricted  oracles in the following way: for each level $i$ of the net tree, we construct a $(d,\alpha,S)$-restricted distance oracle  with $d = \frac{2^i}{\epsilon}$, $\alpha = O(1)$ and $S = N_i$ where $N_i$ is the set of net points at level $i$. This oracle will guarantee multiplicative stretch $(1+\epsilon)$ for  every pair of net points at level $i$ whose distance from each other is $\Theta(\frac{2^i}{\epsilon})$ -- for other pairs, the distance error could be arbitrarily large. The idea behind this construction is that the space incurred \emph{per net point on average} is $O(1)$. We remark that the construction of restricted oracles heavily relies on planarity.

However, the net tree has its own problem: the number of vertices of $T$ could be $\Omega(n\log \Delta)$ where $\Delta$ could be exponential in $n$. A simple idea to reduce the number of vertices of $T$ is to \emph{compress} degree-$2$ vertices. But the compression of degree-2 vertices introduces two new problems. First, each point still ``participates'' in  the construction of  the restricted distance oracles of  up to $O(\log \Delta)$ different levels, and hence, the compression of degree-2 vertices does not really help. Second, compressing the net-tree makes it harder to \emph{navigate}. That is, given a leaf point $p$, we want to find an ancestor of $p$ at a given level $i$ in $O(1)$ time. In doubling metrics, to efficiently navigate the net-tree, previous constructions  heavily rely on the fact that each vertex of the net tree has $O(1)$ children, which is not the case in our setting.   Resolving both problems can be seen as overcoming the second obstacle in the construction of Kawarabayashi, Sommer, and Thorup~\cite{KST13}. 

We resolve the first problem by distinguishing two types of degree-2 vertces in $T$: those that are required in the construction of restricted oracles and those that are not. For the later type, it is safe to compress. For the former type, we are able to show that there are only $O(n \log(\frac{1}{\eps}))$ such degree-2 vertices; this linear bound is crucial to our construction. We resolve the second problem  -- navigating the net tree -- by designing a new \emph{weighted level ancestor} (WLA) data structure. The WLA problem, introduced by Farach and Muthukrishnan~\cite{FM96}, is a generalization of the predecessor problem~\cite{KL07} where various  \emph{super-constant} lower bounds in query time when the space usage is restricted to $O(n)$ have been established~\cite{Ajtai88,Miltersen94,MNSW98,BF02,SV08,PT06}. Here we need a data structure with linear space and constant query time. Our key insight is that for trees with \emph{polylogarithmic depth}, we can design such a data structure\footnote{Alstrup and Holm\cite{AH00} mentioned the construction of a data structure than can handle WLA in trees of polylogarithmic weights, which could be applicable in our work. However, the details were not given. On the other hand, our data structure instead exploits the polylogarithmic depth.}. Observe that when $\log(\Delta) = \polylog(n)$, the net tree has a polylogarithmic depth. Hence, for planar graphs with \emph{quasi-polynomial spread}, we can design a distance oracle with $O(n)$ space and $O(1)$ query time with all the ideas we have discussed. This turns out to be the hardest case: we adapt the contraction trick of Kawarabayashi, Sommer, and Thorup~\cite{KST13} and devise the bit packing technique to reduce the general problem to the case where the spread is quasi-polynomial.
 
The final tool we need is a data structure that allows us to quickly determine the level of the ancestors of $u$ and $v$ in the net tree for each query pair $(u,v)$. Once the ancestors and their level are found, we can perform a distance query from the restricted oracle at that level. We observe that the level of the ancestors can be approximated within a constant additive error if we can determine the distance between $u$ and $v$ within \emph{any constant factor}. To that end, we design an oracle with $O(1)$ multiplicative stretch, linear space, and constant query time. Our technique is simple and based on $r$-division, a basic tool to design algorithms for planar graphs, and our construction can be implemented in nearly linear time. Furthermore, our results apply to any graph in a hereditary class with sublinear separators.  A similar distance oracle for minor-free graphs can be obtained from the tree cover with $O(1)$ trees and $O(1)$ distortion by Bartal, Fandina, and Neiman~\cite{BFN19B}. However, it is unclear that the tree cover can be constructed in nearly linear time.
   
To obtain a nearly linear time preprocessing time, the major obstacle in our construction is to compute the net tree. Indeed, to the best of our knowledge, it is not known how to  construct an $r$-net of the shortest path metric of planar graphs in \emph{sub-quadratic time}\footnote{In doubling metrics, it is possible to construct a net tree in nearly linear time~\cite{HM05,HM06}. The construction relies heavily on the fact that the metric has a constant doubling dimension.}. Instead, we show that a weaker version of $r$-nets (see \Cref{sec:prelim} for a precise definition) can be computed in $O(n)$ time, and that we can use weak $r$-nets in place of $r$-nets in the net tree. A corollary of our weak net construction is a linear time algorithm to find a sparse cover, which improves upon the $O(n\log n)$ time algorithm of Busch, LaFortune, and Tirthapura~\cite{BLT07}.

\paragraph{Approximate distance oracles for planar digraphs.~} 
The techniques for our $(1+\epsilon)$-approximate distance oracle for planar digraphs build to some extent on those of Thorup~\cite{Thorup04}. Recall that $N$ is the ratio between the largest and smallest edge weight. After normalizing, the shortest path distances can thus be partitioned into $O(\log(Nn))$ distance scales of the form $[\alpha,2\alpha)$ where $\alpha$ is a power of $2$. For each distance scale, Thorup uses an oracle with an additive error of at most $\epsilon\alpha$. A query is answered by applying binary search on the distance scales, querying one of the oracles in each step, and the final oracle queried then gives the desired multiplicative $(1+\epsilon)$-approximation. Thus, $O(\log\log(Nn))$ queries to oracles are needed. Thorup shows how each oracle can answer a query in constant time (for fixed $\epsilon$).

Our first idea for improving space is to not have every oracle answer a query in constant time. In fact, it suffices for every $\Theta(\log\log(Nn))$ distance scale to have an oracle with $O(1)$ query time since binary search on these oracles brings us down to only $\Theta(\log\log(Nn))$ distance scales; since the total query time should be $O(\log\log(Nn))$, we can thus afford slower but more space-efficient oracles for the remaining $O(\log\log\log(Nn))$ steps of the binary search. Hence, only a $1/\Theta(\log\log(Nn))$ fraction of our oracles have $O(1)$ query time.

To improve space further, we make use of a recursive decomposition of the input digraph $G$ into more and more refined $r$-divisions. Now, instead of storing portals for each vertex of each oracle (as in Thorup's paper), we instead store portals only for boundary vertices of pieces of $r$-divisions at each level of the recursive decomposition. Furthermore, each such boundary vertex stores only local portals belonging to the same parent piece, allowing us to use labels of length much smaller than $\lg n$ for each such local portal.

A query for a vertex pair $(u,v)$ is then answered by starting at the lowest level of the recursive decomposition and obtaining local portals for $u$ and $v$. For each combination of a local portal $p(u)$ of $u$ and local portal $p(v)$ of $v$, a query for an approximate distance from $p(u)$ to $p(v)$ is then answered recursively by going one level up in the recursive decomposition. To avoid an exponential explosion in the number of local portals during the recursion, the recursive decomposition is set to have only $k = \Theta(1)$ height where $k$ is the parameter in Theorem~\ref{thm:Digraphs}.

Since our final space bound is $o(n\log(Nn))$, we are only allowed space sublinear in $n$ on each distance scale. This creates some additional obstacles not addressed by the above techniques. A main obstacle is to answer $\lcads$-queries in $O(1)$ time using $o(n)$ space. There are static tree data structures that can do this using only $O(n/\lg n)$ space (i.e., $O(n)$ bits of space). Unfortunately, these structures require labels of query vertices to be of a special form; for instance, the data structure in~\cite{JanssonSS07} that we rely on requires them to be preorder numbers in the tree. We give a new data structure that can convert in $O(1)$ time vertex labels in the input graph $G$ to preorder numbers in such a tree using $o(n)$ space plus $O(n)$ additional space independent of the current distance scale. Here, we again make use of our recursive decomposition and show how it allows for a compact representation of preorder numbers for each tree.

\subsection{Related Work} \label{subsec:related-work}

\paragraph{Optimizing the dependency on $\eps$ for planar undirected graphs.~} A closely related and somewhat orthogonal line of work initiated by Kawarabayashi, Sommer and Thorup~\cite{KST13} is to treat $\epsilon$ as a part of the input and optimize for the dependency on $\epsilon$ in the trade-off between space and query time. They showed that it is possible to  achieve a nearly linear dependency on $\epsilon$ in the space and query time trade-off. Specifically, they constructed an oracle of $\overline{O}(n\log n)$ space and  $\overline{O}(1/\epsilon)$ query time where $\overline{O}(.)$ notation hides $\poly(\log \log(n)) $ and $\polylog(\frac{1}{\epsilon})$ factors. The dependency of space and query time product  on $\epsilon$ in previous work~\cite{Klein02,Thorup04,KKS11} is at least quadratic. Other recent developments~\cite{GX19,CS19} focus on improving the dependency on $\epsilon$ in the query time: Gu and Xu~\cite{GX19} constructed a distance oracle with $O(1)$ query time and $O(n\log n (\log n/\epsilon + 2^{O(\frac{1}{\epsilon})}))$ space; Chan and Skerpetos constructed an oracle with $O(\log \frac{1}{\epsilon})$ query time and $O(n\poly(\frac{1}{\epsilon})\polylog(n))$ space.  

\paragraph{Exact distance oracles for directed planar graphs.~} Arikati et al. \cite{ACCDSZ96} constructed the first distance oracle for directed planar graphs with $O(S)$ space and $O(\frac{n^2}{S})$ query time for $S \in [n^{3/2},n^2]$. Independently from the work of Arikati et al. \cite{ACCDSZ96},  Djidiev~\cite{Djidjev96} constructed two oracles with different space-query time trade-offs: (1) an oracle with $O(S)$ space and $O(\frac{n^2}{S})$ query time for $S \in [n,n^2]$ and (2) an oracle with $O(S)$ space and $\tilde{O}(\frac{n}{\sqrt{S}})$ query time for $S \in [n^{4/3}, n^{3/2}]$. ($\tilde{O}$ notation hides a $\polylog(n)$ factor.)  Subsequent works aimed to widen the range of $S$ in the space-query time tradeoff in the second oracle of Djidiev~\cite{Djidjev96}. Specifically, Chen and Xu~\cite{CX00} pushed the range of $S$ to $[n^{4/3}, n^{2}]$; Fakcharoenphol and Rao~\cite{FR06} constructed an oracle with $S = n\log n$; Mozes and Sommer~\cite{MS12} widened the range of $S$ to $[n\log\log n, n^{2}]$. 

The work of Cabello~\cite{Cabello10} focused on improve the preprocessing time; specifically,  Cabello~\cite{Cabello10} constructed an oracle with $O(S)$ space, $\tilde{O}(\frac{n}{\sqrt{S}})$ query time and $O(S)$ construction time for any $S\in [n^{4/3}\log^{1/3}(n), n^2]$.  Wulff-Nilsen~\cite{Wulff-Nilsen10} designed an oracle with constant query time and $o(n^2)$ space; this space bound has not been improved for oracles with constant query time. The first \emph{linear space} oracle with $O(n^{\frac{1}{2}+\eps})$ query time for any constant $\epsilon > 0$ was obtained independently by Mozes and Sommer~\cite{MS12} and Nussbaum~\cite{Nussbaum11}; the result remains the state-of-the-art if we insist on having an oracle with linear space.

All exact distance oracles mentioned so far were constructed based on (variants) of \emph{$r$-division}, a technique introduced by Frederickson~\cite{Frederickson87}. None of them achieves \emph{truly subquaratic} space and polylogarithmic query time. In a breakthrough work, Cohen-Addad, Dahlgaard, and Wulff-Nilsen~\cite{CDW17} broke this barrier by constructing an  exact distance oracle with  $O(n^{\frac{5}{3}})$ space and $O(\log n)$ query time. Indeed, they obtained a more general trade-off: $O(S)$ space and $\tilde{O}(\frac{n^{5/3}}{S^{3/2}})$ query time. Their construction is based on planar Voronoi diagrams introduced by Cabello~\cite{Cabello18}.  Gawrychowski et al.~\cite{GMWW18} improved the result Cohen-Addad, Dahlgaard, and Wulff-Nilsen~\cite{CDW17} to obtain an oracle with $O(S)$ space and $\tilde{O}(\max\{1, \frac{n^{3/2}}{S}\})$ query time. Recently, Charalampopoulos et al.~\cite{CGMW19} obtained three exact distance oracles with \emph{almost optimal} space-query time trade-offs (ignoring low order terms): (1) $\tilde{O}(n^{1+\eps})$ space and $\tilde{O}(1)$ query time for any constant $\epsilon > 0$, (2) $\tilde{O}(n)$ space and $O(n^{\epsilon})$ query time, and (3) $n^{1 + o(1)}$ space and $n^{o(1)}$ query time. Their trade-offs were furthered improved by Long and Pettie~\cite{LP20} in two regimes: $n^{1 +o(1)}$ space with $\tilde{O}(1)$ query time and $\tilde{O}(n)$ space with $n^{o(1)}$ query time.

\paragraph{Distance oracles for low dimensional metrics} The idea of using net-tree in the construction of $(1+\epsilon)$-approximate distance oracles, which we also use in this paper, was introduced by Har-Peled and Mendel~\cite{HM06} in metrics of constant doubling dimension. Their oracle  has $O(n\log n)$ space, $O(1)$ query time, and $O(n\log n)$ construction time with $\epsilon$ and $d$ being fixed constants. Their result improved an earlier result by  Gudmundsson et al.~\cite{GLNS08} who constructed  a $(1+\epsilon)$-approximate distance oracle for $t-$spanners of point sets in $\mathbb{R}^d$. In the same paper, Har-Peled and Mendel~\cite{HM06} presented a $(1+\epsilon)$-approximate distance oracle for doubling metrics of dimension $d$ with $\epsilon^{-O(d)}n$ space, $O(d)$ query time and $\poly(n)$ construction time. That is, the query time depends linearly, instead of exponentially, on the dimension. Bartal et al.~\cite{BGKLR11} improved the result of Hard-Peled and Mendel by designing  an oracle with $(\epsilon^{-O(d)} + 2^{O(d\log d)})n$ space, $O(1)$  query time and nearly linear \emph{ expected} construction time.

\section{Proof Overview}\label{sec:high-level-ideas}
In this section, we give an overview of the proof of our main result for planar undirected graphs. We omit a proof overview for the part of our paper on planar digraphs since it is self-contained and short compared to the undirected result.

\subsection{An Approximate Distance Oracle for Undirected Planar Graphs}

We refer readers to~\Cref{subsec:technique} for a bird's-eye view of our construction.  We construct our oracles by viewing graph $G(V,E)$ as a metric space $(V,d_{G})$ induced by shortest path distances. There are two major steps in our construction. In Step 1, we construct  an approximate oracle with linear space, constant query time, and \emph{constant stretch}.  We use this oracle later as a tool to navigate the net tree.  In Step 2, we construct a \emph{$(d,\alpha, S)$-restricted} distance oracle with $O(|S|)$ space and $O(1)$ query time for any parameter $d$ and any constant $\alpha$. The oracle returns a distance estimate in $[d_G(u,v), (1+\epsilon)d_G(u,v)]$ for any query pair $(u,v)\in S\times S$ such that $d_{G}(u,v) \in [d,\alpha d]$; the oracle may make an arbitrarily big error on the distance estimate if  $d_{G}(u,v) \not\in [d,\alpha d]$.

\paragraph{Step 1: Constant stretch approximate oracle.~} For a planar graph $H$ and a parameter $r$, an $r$-division of $H$ is a partition of the edges of $H$ into $O(|H|/r)$ subgraphs, each of size $O(r)$ and with $O(\sqrt r)$ boundary vertices shared with other pieces.

For our constant stretch approximate oracle, we use a hierarchical decomposition $\mathcal{T}$ with 3 levels, where each node of $\mathcal{T}$ is associated with a subgraph, called a \emph{piece}, of $G$. The root of $\mathcal{T}$ is $G$, and children of an internal node $P$ at level $i$, $i \in \{0,1\}$, are pieces of an $(r_{i+1})$-division of $P$. Here $r_0 = n, r_{1} = \Theta((\log n)^2)$, and $r_2 = \Theta((\log r_1)^2) = \Theta((\log \log n)^2)$.

Let $P$ be a non-leaf piece. Let $B_P$ be the set of all boundary vertices of \emph{child pieces} of $P$. For each vertex $u\in P$,  we store the distance from $u$ to a nearest vertex in $B_P$. In addition, we store with $P$ an oracle that allows us to query $(1+\eps)$-approximate distances between pairs of vertices in $B_P$; we call this oracle a \emph{$B_P$-restricted oracle}. For a query pair $(u,v)$ whose shortest path is fully contained in $P$ but not in a child piece, the weight of this path can now be approximated up to a constant factor by summing up (1) the distances from $u$ and from $v$ to their respective nearest boundary vertices in $B_P$, and (2) the approximate distance between these two boundary vertices obtained by querying the $B_P$-restricted oracle. We show that the restricted oracle requires only $O(|P|)$ space.

If $P$ is a leaf piece, we show how to encode distances between vertices inside $P$ into a compact lookup table with $O(|P|)$ words. A key insight is that, to encode a constant approximation of the shortest path distance between two vertices $u$ and $v$ in $P$, it suffices to store \emph{the encoding} of the pair $(e_{uv},\rho_{uv})$ where  $e_{uv}$ is the heaviest edge on a shortest path between $u$ and $v$ and $\rho_{u,v} = \lceil \frac{d_P(u,v)}{w(e_{uv})}\rceil$. We show that for each vertex $u \in P$, one can pack $\{(e_{uv}, \rho(u,v))\}_{v\in P}$ into a single word, and that the approximate distance (up to a factor of $2$) can be retrieved in constant time. Our result is summarized in~\Cref{thm:constant-stretch} whose proof is given in Section~\ref{sec:const-stretch}.

\begin{restatable}{theorem}{ConstantStretch}
	\label{thm:constant-stretch}
	Given an edge-weighted $n$-vertex planar graph $G$, there is an $8$-approximate distance oracle with space $O(n)$ and query time $O(1)$ that can be constructed in worst-case time $O(n\log^3n)$.
\end{restatable}

We remark that \Cref{thm:constant-stretch} applies to any graph in a hereditary class with sublinear separators, such as minor-free graphs and graphs with polynomial expansion~\cite{DN16}, assuming that the separator can be found in nearly linear time.  

\paragraph{Step 2: An oracle with $(1+\epsilon)$ stretch.~} 
In this step,  we treat $G$ as a metric space $(V,d_G)$.  We construct a hierarchy of nets $V = N_0 \supseteq N_1 \supseteq \ldots, N_{\lceil \log(\Delta) \rceil}$ where $N_i$ is a $2^i$-net of  $(V,d_G)$ and $\Delta$ is the spread of $(V,d_G)$. (In reality, we use weak nets instead of nets for fast preprocessing time; for the simplicity of the presentation in this section, we describe the construction in terms of nets.) This hierarchy induces a net-tree $T$. For each $2^{i}$-net $N_i$, we identify a subset $N'_{i}\subseteq N_i$ and construct a $(2^{i}, O(\frac{1}{\epsilon}), N'_i)$-restricted distance oracle with $O(|N'_i|)$ space and $O(1)$ query time, as guaranteed by~\Cref{thm:restricted-oracle} below.

\begin{restatable}{theorem}{RestrctedOracle}
	\label{thm:restricted-oracle}
	Given parameters $d,\epsilon > 0, \alpha > 1$ and a subset of vertices $S$ of an edge-weighted $n$-vertex planar graph $G$, there exists an approximate distance oracle $\mathcal{D}$ with $O(|S| \alpha^2 \epsilon^{-2})$ space and $O(\alpha^2\epsilon^{-2})$ query time such that the distance returned by $\mathcal{D}$ for a given query pair $(u,v)\in S\times S$, denoted by $d_{\mathcal{D}}(u,v)$, is always at least $d_G(u,v)$, and:
	\begin{equation*}
		\qquad d_{\mathcal{D}}(u,v)\leq (1+\eps)d_G(u,v) \mbox{ if } d_G(u,v) \in [d, \alpha d].
	\end{equation*}
	Furthermore, $\md$ can be constructed in time $O(\epsilon^{-3}\alpha^3 n\log^3 n)$. 
\end{restatable}

Our construction of the oracle in \Cref{thm:restricted-oracle} uses shortest path separators (see \Cref{sec:prelim} for a formal definition), whose existence follows from planarity.  This is the only place where we need planarity, aside from \Cref{thm:constant-stretch} that only requires sublinear separator.

Next, we show that our chosen family of sets $\mathcal{S} = \{N_i^{'}\}_{i=1}^{\lceil \log(\Delta) \rceil}$ has linear size: $\sum_{S \in \mathcal{S}} |S| = O(n \log \frac{1}{\epsilon}) = O(n)$ for a constant $\eps$. Thus, the total size of all the oracles restricted to sets in $\mathcal{S}$ is $O(n)$.  Note that  $\sum_{i=1}^{\lceil \log (\Delta)\rceil} |N_i|$ could be $\Theta(n\log \Delta)$.

To query for the distance between $u$ and $v$, we find two points in $N_i$, say $x$ and $y$ that cover $u$ and $v$, respectively, where $i \approx \log_2(\epsilon d_G(u,v))$. That is, $d(u,x), d_G(u,y) \leq 2^i \approx \epsilon d_G(u,v)$, and thus $d_G(x,y) \approx (1\pm O(\epsilon))d_G(u,v)$. Querying the estimated distance between $x$ and $y$ gives us an estimate of $d_G(u.v)$ within a factor of $(1\pm O(\epsilon))$. There are two  problems that we need to address: (a) finding the level $i$ and (b) finding the ancestor of a leaf at a given level in the net tree $T$ in $O(1)$ time.

To resolve problem (a), we use the constant stretch oracle to estimate the distance $d_G(u,v)$ within a constant factor, denoted by $\tilde{d}_G(u,v)$. Then the ancestor $\tilde{x}$ ($\tilde{y}$) of $u$ ($v$) at level  $\tilde{i} = \log(\epsilon \tilde{d}_G(u,v))$ is within $O(1)$ hops of $x$ ($y$) in the net-tree $T$. 

Problem (b) is a special case of the  level ancestor problem on $T$: given any (leaf or non-leaf) node $u \in T$, find the ancestor of $u$ at level $i$. Many different data structures have been invented to solve this problem with $O(|T|)$ space and $O(1)$ query time~\cite{HT84,BV94,Dietz91,AH00,BF04}. Unfortunately, the size of $T$ could be as big as $\Omega(n\log \Delta)$. We get around this by working with a compressed version $T_{\im}$ of $T$ that only has $O(n)$ vertices: $T_{\im}$ is obtained from $T$ by contracting \emph{a subset of degree-2} vertices to their neighbors. We cannot contract all degree-2 nodes because some of them will participate in the construction of the restricted oracle at some level of the net tree as discussed in \Cref{subsec:technique}. The level ancestor problem in $T$ is now reduced to the \emph{weighted level ancestor} (WLA) problem in $T_{\cpr}$: given a node $x\in T_{\cpr}$ that has integer weights on the edges and  an integer $\ell$, find the lowest ancestor of $x$ whose depth in $T_{\cpr}$ is at most $\ell$.  Here the depth of a vertex is the distance from that vertex to the root of $T_{\cpr}$. The WLA problem is a generalization of the predecessor search problem  where various  \emph{superconstant} lower bounds for query time have been established~\cite{Ajtai88,Miltersen94,MNSW98,BF02,SV08,PT06} even when the space is superlinear; see~\cite{PT06} for a thorough discussion on the lower bounds.  Here we need a data structure with linear space and constant query time. Our key insight is that  when $\log(\Delta) = \polylog(n)$, we can design such a data structure. Thus,  we obtain a $(1+\epsilon)$-approximate distance oracle with $O(n)$ space and $O(1)$ query time for planar graphs with \emph{quasi-polynomial spread}.

By adapting the contraction trick of Kawarabayashi, Sommer, and Thorup~\cite{KST13} and devising a bit-packing technique, we can reduce the general problem to the case when $\log(\Delta) = O(\polylog(n))$; this completes our data structure.

\paragraph{Fast Preprocessing Time} To obtain a fast construction time, the major obstacle is to construct a net in nearly linear time. Unfortunately, such an algorithm is not known. Here we show how to construct a weaker version of nets. 

Given an edge-weighted graph $G(V,E)$ and a set of terminals $K\subseteq V$. A \emph{weak  $(r,\gamma)$-net} of $K$, for $\gamma \geq 1$, is a subset of vertices $N\subseteq K$ such that: (a) $d_G(p,q)\geq r$ for every $p\not= q \in N$ and (b) for every $x\in K$, there exists a $p \in N$ such that $d_G(p,x)\leq \gamma r$.  An assignment $\mathcal{A}$ associated with a weak $(r,\gamma)$-net $N$ is a family of subsets of $K$ such that for each $x \in N$, there exists a set, denoted by $\mathcal{A}[x] \subseteq K$, in $\mathcal{A}$ that contains $x$ and satisfies $d_G(x,y)\leq \gamma r$ for every $y \in \mathcal{A}[x]$. We say that an assignment $\mathcal{A}$ \emph{covers} $K$ if $\cup_{A\in \mathcal{A}} = K$.

A $(\beta,s,\Delta)$-sparse cover of a graph $G$ is a decomposition of $G$ into subgraphs $\mathcal{C} = \{C_1,\ldots, C_k\}$, called clusters, such that: (a) each subgraph $C\in \mathcal{C}$ has diameter at most $\Delta$, (b) for every $v\in G$, $B_G(v,\Delta/\beta)\subseteq C$ for some $C\in \mathcal{C}$ and (c) each vertex $v\in V(G)$ belongs to at most $s$ clusters.

\begin{theorem}\label{thm:weak-net-const}
	Given an edge-weighted $n$-vertex planar graph $G(V,E)$, a subset of terminals $K\subseteq V$, and a distance parameter $r$, there is an algorithm that can construct in $O(n)$ time an $(r,O(1))$-net $N$ and an associated assignment $\mathcal{A}$ of $N$ that covers $K$. 
\end{theorem}

A corollary of our construction of weak nets Section~\ref{section:weak-net} is a construction of an  $(O(1),O(1),\Delta)$-sparse cover for planar graphs in \emph{linear time}.

\begin{lemma}\label{lm:sparsecover-time} Given a parameter $\Delta  > 0$ and an edge-weighted $n$-vertex planar graph $G(V,E)$, there is an algorithm that can construct in $O(n)$ time an $(O(1),O(1),\Delta)$-sparse cover  for $G(V,E)$.
\end{lemma}

\section{Preliminaries}\label{sec:prelim}

Let $G$ be a graph. We denote by $V(G)$ the vertex set of $G$ and by $E(G)$ the edge set of $G$. We sometimes write $G(V,E)$ to explicitly indicate that $V(G) = V$ and $E(G) = E$, and write $G(V,E,w)$ to indicate that $w$ is the weight function on the edges of $G$. We denote by $\SP_G(x,y)$ a  shortest path between two vertices $x,y\in V$.

In this paper, we sometimes view $G$ as a metric $(V,d_G)$ with the shortest path distance. The \emph{spread}, denoted by $\Delta$, of $G$ is defined to be the spread of $(V,d_G)$, which is:
\begin{equation}
	\Delta = \frac{\max_{u,v}d_G(u,v)}{\min_{u\not=v} d_G(u,v)}
\end{equation}

Let $C$ be a simple closed curve on the plane $\mathbb{R}^2$. Removing $C$ from $\mathbb{R}^2$ divides the plane into two parts, called the \emph{interior} and \emph{exterior} of $C$, denoted by $\int(C)$ and $\ext(C)$, respectively. 

Let $T$ be a tree and $x,y$ be two vertices of $T$. We  denote by $T[x,y]$ the (unique) path between $x$ and $y$ in $T$.

\paragraph{Shortest path separators of planar graphs.~} Let $G$ be given as a planar embedded graph and $G_{\Delta}$ be a triangulation of $G$. We call edges in $E(G_{\Delta})\setminus E(G)$ \emph{pseudo-edges}. Let $T$ be a shortest path tree of $G$; $T$ is also a spanning tree of $G_{\Delta}$.  A path $P$ of $T$ is \emph{monotone} if one endpoint of $P$ is  an ancestor of all vertices in $P$; this endpoint is called the  \emph{root} of $P$. 

A \emph{shortest path separator} $C$ of $G$ is a fundamental cycle of $G_{\Delta}$ w.r.t $T$. Since edges of $G_{\Delta}$ may not be in $G$, $C$ consists of two monotone paths $P_1,P_2$ of $G$ rooted at the same endpoint and a  (possibly imaginary) edge $(u,v)$ between two other endpoints of $P_1$ and $P_2$.  Thorup~\cite{Thorup04} and Klein~\cite{Klein02} observed that the following lemma is implicit the proof of the planar separator theorem by Lipton and Tarjan~\cite{LT79}.

\begin{lemma}[Lipton and Tarjan~\cite{LT79}]\label{thm:shortest-path-sep} Let $T$ be a shortest path tree rooted at a vertex $r$ of an edge-weighted planar graph $G$. Let $\omega: V\rightarrow \mathbb{R}^+$ be a weight function on vertices of $G$, and $W = \sum_{u\in V(G)}\omega(u)$. There is a shortest path separator $C$ of $G$ such that $\max(\omega(V(G)\cap \int(C)),  \omega(V(G)\cap \ext(C))) \leq \frac{2W}{3}$. Furthermore, $C$ can be found in $O(n)$ time.
\end{lemma}

\paragraph{$r$-division} Given an integer $r \geq 1$ and a planar graph $G$, an \emph{$r$-division} of $G$ with $n$ vertices is a partition of the edge set of $G$ into subsets inducing subgraphs $\{R_1,R_2,\ldots,R_k\}$ of $G$, called \emph{pieces}, such that:
\begin{enumerate}
	\item $k = O(\frac{n}{r})$ and $|V(R_i)| \leq r$ for all $i\in [1,k]$,
	\item $|\partial R_i| = O(\sqrt{r})$ where $\partial R_i$ is the set of vertices in $R_i$, called \emph{boundary vertices}, such that each has at least one neighbor outside $R_i$. 
\end{enumerate}

Frederickson~\cite{Frederickson87} introduced the notion of \emph{$r$-division} and devised an algorithm to compute an $r$-division for any given $r$ in time $O(n \log r + \frac{n}{\sqrt{r}}\log n)$. Klein, Mozes, and Sommer~\cite{KMS13} recently improved the running time of finding an $r$-division to linear.

\paragraph{Approximate labeling schemes.~} Thorup~\cite{Thorup04} and Klein~\cite{Klein02} independently came up with similar constructions of a $(1+\eps)$-approximate distance oracle for an edge-weighted, undirected planar graph $G(V,E)$ of $n$ vertices with $O(n\log n\epsilon^{-1})$ space and $O(\epsilon^{-1})$ query time. Thorup~\cite{Thorup04} showed that the oracle can be constructed in nearly linear time. Furthermore, Thorup~\cite{Thorup04} observed that the distance oracle could be distributed as a \emph{labeling scheme}: each vertex $u$ is assigned a label $\ell(u)$ of $O(\epsilon^{-1}\log n)$ words and there is a decoding function $\mathcal{D}$ that, given two vertices $u$ and $v$, returns a $(1+\epsilon)$-approximate distance between $u$ and $v$ in $O(\epsilon^{-1})$ time by looking at their labels only.

\begin{theorem}[Theorem 3.19~\cite{Thorup04}, Lemma 4.1~\cite{Klein02}]\label{thm:Thorup-labeling}
	Given an edge-weighted undirected planar graph $G$, we can construct in $O(n\epsilon^{-2}\log^3n )$ time a labeling scheme for $(1+\epsilon)$-approximate distances with maximum label size $O(\log n \epsilon^{-1})$ and decoding time $O(\epsilon^{-1})$.
\end{theorem}

\paragraph{Sparse cover.~}  A $(\beta,s,\Delta)$-sparse cover of a graph $G(V,E)$, denoted by $\mathcal{C} = \{C_1,\ldots, C_k\}$, is a collection of subgraphs, called \emph{clusters}, of $G$ such that:
\begin{itemize}[noitemsep]
	\item[(1)] The diameter of $C_i$ is at most $\Delta$ for any $i \in [1,k]$.
	\item[(2)] For every $v\in G$, $B_G(v,\Delta/\beta)\subseteq C_i$ for some $i\in [1,k]$.
	\item [(3)] Every vertex $v\in V$ is contained in at most $s$ clusters.
\end{itemize}
We say that $G$ admits a \emph{$(\beta,s)$-sparse covering scheme} if there exists a $(\beta, s,\Delta)$-sparse cover of $G$ for any given $\Delta \in \mathbb{R}^+$. $\Delta$ is called the diameter of the sparse cover. 

Busch, LaFortune, and Tirthapura~\cite{BLT07} constructed an $(O(1),O(1))$-sparse covering scheme for planar graphs; they slightly improved the constants in the journal version~\cite{BLT13}.  Abraham, Gavoille, Malkhi, and Wieder~\cite{AGMW10} constructed an  $(O(r^2),2^r(r+1)!)$-sparse covering scheme for $K_r$-minor-free graphs.

\begin{theorem}[Theorem 5.2~\cite{BLT07}]\label{thm:sparse-cover}Edge-weighted planar graphs admits an $(O(1), O(1))$-sparse covering scheme.
\end{theorem}

\noindent Our~\Cref{lm:sparsecover-time}  shows how to construct a sparse cover in linear time given a diameter parameter $\Delta$.

\paragraph{Weak nets.~} Given an edge-weighted graph $G(V,E)$ and a set of terminals $K\subseteq V$. A \emph{weak  $(r,\gamma)$-net} of $K$, for $\gamma \geq 1$, is a subset of vertices $N\subseteq K$ such that: (a) $d_G(p,q)\geq r$ for every $p\not= q \in N$ and (b) for every $x\in K$, there exists a $p \in N$ such that $d_G(p,x)\leq \gamma r$.   An assignment $\mathcal{A}$ associated with a weak $(r,\gamma)$-net $N$ is a family of subsets of $K$ such that for each $x \in N$, there exists a set, denoted by $\mathcal{A}[x] \subseteq K$, in $\mathcal{A}$ that contains $x$ and satisfies $d_G(x,y)\leq \gamma r$ for every $y \in \mathcal{A}[x]$. We say that an assignment $\mathcal{A}$ \emph{covers} $K$ if $\cup_{A\in \mathcal{A}} = K$.

\paragraph{Lowest common ancestor (LCA).~} Given a tree $T$, an LCA data structure for $T$, denoted by $\lcads_T$, answers the following type of queries: given two vertices $u$ and $v$ in $T$, return the LCA of $u$ and $v$ in the tree $T$, denoted by $\lcads_T(u,v)$. It is well known that we can construct in linear time a data structure $\lcads_T$ with $O(|V(T)|)$ space and $O(1)$ query time~\cite{HT84}; see also \cite{BF00} for a simpler construction.

\section{An Approximate Distance Oracle with Constant Stretch}\label{sec:const-stretch}

In this section, we prove~\Cref{thm:constant-stretch} that we restate below.

\ConstantStretch*

%\begin{Lem}
%Given an undirected planar graph $G$ with $n$ vertices, there is an $O(1)$-approximate distance oracle for $G$ with $O(n)$ space and $O(1)$ query time.
%\end{Lem}

Thorup~\cite{Thorup04} and Klein~\cite{Klein02} independently gave a $(1+\epsilon)$-approximate distance oracle for undirected planar graphs with $O(\frac 1 \epsilon n\log n)$ space and $O(\frac 1 \epsilon)$ query time, and Thorup showed how to obtain worst-case construction time $O(n\log^3n/\epsilon^2)$. We will make use of the special case where $\epsilon = 1$ and we exploit that the distance oracle of Thorup is a labeling scheme, meaning that a query for vertex pair $(u,v)$ can be answered using only the $O(\log n)$ words associated with $u$ and $v$, respectively (\Cref{thm:Thorup-labeling}):
\begin{Lem}[Thourup~\cite{Thorup04}]\label{Lem:ThorupKlein}
Given an undirected planar graph $G$ with $n$ vertices and given a subset $S$ of the vertices of $G$, there is a $2$-approximate distance oracle for $G$ with $O(|S|\log n)$ space and $O(1)$ query time which can answer queries for any vertex pair in $S\times S$. The oracle can be constructed in $O(n\log^3 n)$ worst-case time.
\end{Lem}
In the following, we shall refer to the data structure of~\Cref{Lem:ThorupKlein} as an \emph{$S$-restricted oracle} of $G$.
\begin{wrapfigure}{r}{0.5\textwidth}
	\vspace{-15pt}
	\begin{center}
		\includegraphics[width=0.4\textwidth]{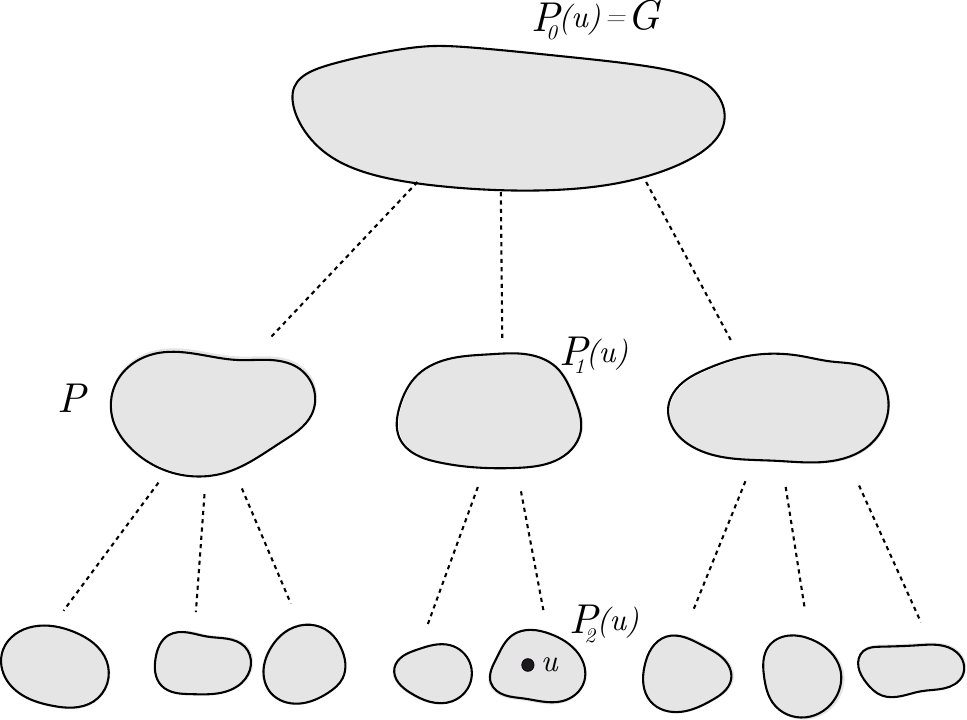}
	\end{center}
	\caption{\footnotesize{A 3-level recursive decomposition $\mathcal{T}$ of $G$.}}
	\vspace{-5pt}
	\label{fig:div-tree-const}
\end{wrapfigure}
We now present our approximate distance oracle for planar graph $G$, ignoring its efficient construction as well as some space-saving tricks for later.

The oracle keeps a $3$-level recursive decomposition of $G$ (see Figure~\ref{fig:div-tree-const}). This decomposition has an associated tree $\mathcal T$ where at level $0$, the root is $G$ having $r_0 = n$ vertices. Letting $r_1 = (\log r_0)^2 = (\log n)^2$, the children of $G$ in $\mathcal T$ are the pieces of an $r_1$-division of $G$ and these are the level $1$-nodes of $\mathcal T$. Finally, each piece $P$ at level $1$ of $\mathcal T$ has as children the pieces of an $r_2$-division of $P$ where $r_2 = (\log r_1)^2 = O((\log\log n)^2)$.

%For each piece $P$ of $\mathcal T$, let $P^+$ be a super-graph of $P$, defined constructively as follows. Initialize $P^+ = P$. For each pair of boundary vertices $b_1$ and $b_2$ of $P$, add to $P^+$ the shortest path (if any) between $b_1$ and $b_2$ in $G\setminus E(P)$. Finally, replace each maximal path $Q$ of $P^+$ whose internal vertices all have degree $2$ in $P^+$ by a single edge connecting the endpoints of $Q$ and let the weight of this edge equal the weight of $Q$. Note that $P^+ = P$ for the root piece $P = G$.\htodo{Add figure}

Each vertex $u\in V$ is associated with a leaf piece $P_2(u)$ containing $u$ as well as the two ancestor pieces $P_0(u)$ and $P_1(u)$ of $P_2(u)$ at levels $0$ and $1$, respectively.

For each non-leaf piece $P$ of $\mathcal T$, let $i\in\{0,1\}$ be its level. Associated with $P$ is a $B_P$-restricted oracle of $P$ where $B_P$ is the set of boundary vertices in the $r_{i+1}$-division of $P$.

For $i = 1,2$, each vertex $u$ is associated with a nearest boundary vertex $b_i(u)$ of $P_i(u)$. We also associate the distance $d_{P_i(u)}(u,b_i(u))$ with $u$.

%Note that a query to the $B_P$-restricted oracle of $P$ for any pair $(b_1,b_2)\in B_P^2$ gives an $O(1)$-approximation of $d_G(b_1,b_2)$, assuming that there is a shortest path from $b_1$ to $b_2$ in $G$ is fully contained in $P$.

For each leaf piece $P$ of $\mathcal T$, we essentially store a lookup table containing $2$-approximations of distances $d_P(u,v)$ for each pair of vertices $u$ and $v$ in $P$. However, we need some space-saving tricks to ensure that these tables require only linear space in total; for now, we delay the details (see \Cref{subsec:compact-table}) on how to do this and just assume black box lookup tables with constant query time.

\subsection{Answering a query}
A query for a vertex pair $(u,v)$ is answered as follows. For $i = 0,1$, the query algorithm first computes
\[
  d_i = \left\{\begin{array}{ll}d_{P_{i+1}(u)}(u,b_{i+1}(u)) + \tilde d_{P_i(u)}(b_{i+1}(u),b_{i+1}(v)) + d_{P_{i+1}(v)}(v,b_{i+1}(v)) & \mbox{if } P_i(u) = P_i(v)\\
              \infty & \mbox{otherwise,}
\end{array}\right.
\]
where $\tilde d_{P_i(u)}(b_{i+1}(u),b_{i+1}(v))$ is the output of the oracle for $P_i(u) = P_i(v)$ when queried with the pair $(b_{i+1}(u),b_{i+1}(v))$ of vertices from $B_{P_i(u)} = B_{P_i(v)}$ (see Figure~\ref{fig:const-stretch-path}).

If $P_2(u) = P_2(v)$, let $d_2$ be the $2$-approximation of $d_{P_2(u)}(u,v)$ that is output using the lookup table associated with $P_2(u) = P_2(v)$. Otherwise, let $d_2 = \infty$. The query algorithm computes $d_2$ and then outputs $\min\{d_0,d_1,d_2\}$.

%Let $P_{uv}$ be the LCA piece in $\mathcal T$ of $P_1(u)$ and $P_1(v)$ and let $P_u$ and $P_v$ be the children of $P_{uv}$ containing $P_2(u)$ and $P_2(v)$, respectively. The output is then the minimum of $d_2$ and of the sum $d_{P_u}(u,b_{P_u}(u)) + \tilde d(b_{P_u}(u),b_{P_v}(v)) + d_{P_v}(v,b_{P_v}(v))$ where $\tilde d(b_{P_u}(u),b_{P_v}(v))$ is the approximate distance from the $B_{P_{uv}}$-restricted oracle associated with $P_{uv}$.

\subsection{Bounding space, stretch, and query time}
We now analyze our oracle. We start by showing that stretch is $8$. Note that the approximate distance output cannot be an underestimate of $d_G(u,v)$. Let $Q$ be a shortest path from $u$ to $v$ in $G$.  If $Q$ is fully contained in $P_2(u)$ then $P_2(u) = P_2(v)$ and  $d_2$ is a $2$-approximation of $d_G(u, v)$ (see~\Cref{lm:2approx-lookup}) so the output will be a $2$-approximation as well.

Now, assume that $Q$ is not fully contained in $P_2(u)$. Pick $i\in\{0,1\}$ so that $Q$ is fully contained in $P_i(u)$ but not in $P_{i+1}(u)$. Note that $i$ must exist since $P_0(u) = G$. Note also that $d_G(u,v) = d_{P_i(u)}(u,v)$ and that $P_i(u) = P_i(v)$. (See Figure~\ref{fig:const-stretch-path}.)

Since $Q$ is not fully contained in $P_{i+1}(u)$, it follows from the choice of $b_{i+1}(u)$ that $d_{P_{i+1}(u)}(u,b_{i+1}(u))\leq d_G(u,v)$. Similarly, $d_{P_{i+1}(v)}(v,b_{i+1}(v))\leq d_G(u,v)$. By the triangle inequality,
\begin{align*}
  d_i & = d_{P_{i+1}(u)}(u,b_{i+1}(u)) + \tilde d_{P_i(u)}(b_{i+1}(u),b_{i+1}(v)) + d_{P_{i+1}{v}}(v,b_{i+1}(v))\\
      & \leq d_{P_{i+1}(u)}(u,b_{i+1}(u)) + 2d_{P_i(u)}(b_{i+1}(u),b_{i+1}(v)) + d_{P_{i+1}(v)}(v,b_{i+1}(v))\\
      & \leq d_{P_{i+1}(u)}(u,b_{i+1}(u)) + 2(d_{P_{i+1}(u)}(b_{i+1}(u),u) + d_{P_i(u)}(u,v) + d_{P_{i+1}(v)}(v,b_{i+1}(v))) + d_{P_{i+1}(v)}(v,b_{i+1}(v))\\
      & \leq 8d_G(u,v).
\end{align*}
Hence, the output will be an $8$-approximation of $d_G(u,v)$.

The $O(1)$ bound on query time follows immediately from the stored information.

Finally, we show that space is $O(n)$. Consider a level-$i$ piece $P$ of $\mathcal T$ with $i\in\{0,1\}$. By \Cref{Lem:ThorupKlein}, the space required for the oracle associated with $P$ is $O((|P|/\sqrt{r_{i+1}})\log|P|) = O((r_i\log r_i)/\sqrt{r_{i+1}})$. Summing over all $O(n/r_i)$ pieces of level $i$ and over $i\in\{0,1\}$ gives a total space requirement of $O(n\sum_{i = 0}^1(\log r_i)/\sqrt{r_{i+1}}) = O(n\sum_{i = 0}^1(\log r_i)/\sqrt{(\log r_i)^2}) = O(n)$.

Storing $b_i(u)$ and $d_{P_i(u)}(u,b_i(u))$ for each vertex $u$ and for each $i\in\{1,2\}$ clearly only requires $O(n)$ space. It remains to implement the lookup tables associated with the leaves of $\mathcal T$.

\subsection{Compact lookup tables}\label{subsec:compact-table}
Consider a leaf piece $P$ of $\mathcal T$. We now present a data structure with constant query time which can output a $2$-approximation of $d_P(u,v)$ for any vertices $u,v\in V(P)$, and we will show that its space requirement is $O(|P|) = O(r_2)$.

\begin{wrapfigure}{r}{0.4\textwidth}
	\vspace{-20pt}
	\begin{center}
		\includegraphics[width=0.4\textwidth]{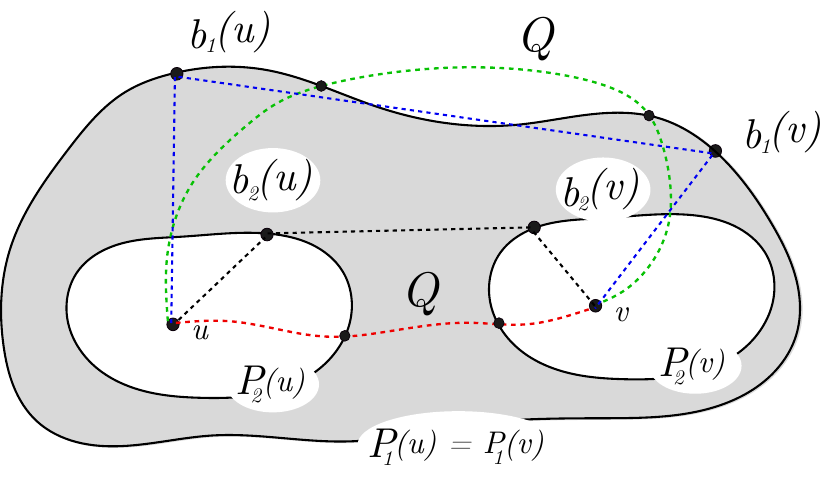}
	\end{center}
	\caption{\footnotesize{When $P_1(u) = P_1(v)$}, the distance oracle returns the minimum length between the lengths of the blue path and the black path. Two scenarios for the shortest path $Q$ between $u$ and $v$: (a) $Q\subseteq P_1(u)$ -- the red path --  or (b) $Q\subseteq P_0(u)$ -- the green path.}
	\vspace{-10pt}
	\label{fig:const-stretch-path}
\end{wrapfigure}

Each edge $e\in E(P)$ is given a unique label $\ell(e)\in\{1,\ldots,|E(P)|\}$ which is represented as a string of $\Theta(\log r_2)$ bits. We also introduce a label $\ell(e_\infty) = 0$ for a dummy edge $e_{\infty}$ of weight $w(e_{\infty}) = \infty$. An array $W_P$ of length $|E(P)|+1$ is associated with $P$ and entry $W_P[\ell(e)]$ contains the weight $w(e)$ of $e$. Each vertex $v\in V(P)$ is given a unique label $\ell(v)\in \{0,1,\ldots,|V(P)|-1\}$ and $\ell(v)$ is represented as a string of $\Theta(\log r_2)$ bits.

For each pair of distinct vertices $u$ and $v$ in $V(P)$, let $e_{uv}$ be an edge of $E(P)\cup\{e_{\infty}\}$ such that $d_P(u,v)\geq w(e_{uv})\geq d_P(u,v)/|E(P)|$; such an edge must exist since either $d_G(u,v) = \infty$ and we can pick $e_{uv} = e_{\infty}$, or there is a shortest path from $u$ to $v$ in $P$ containing at most $|E(P)|$ (in fact $|E(P)| - 1$) edges and not all of these edges can have weight strictly less than $d_P(u,v)/|E(P)|$.

For each pair of distinct vertices $u$ and $v$ in $V(P)$, define the pair $p_{uv} = (\ell(e_{uv}),\rho_{uv})$ where $\rho_{uv} = \lceil d_P(u,v)/w(e_{uv})\rceil$ if $0 < w(e_{uv}) < \infty$, $\rho_{uv} = 0$ if $w(e_{uv}) = 0$, and $\rho_{uv} = 1$ if $w(e_{uv}) = \infty$.

A query for a pair $(u,v)$ is answered by outputting $W_P[\ell(e_{uv})]\cdot\rho_{uv}$. Query time is clearly $O(1)$ if we assume that multiplying an edge weight by an integer written in binary and represented in a single word can be done in $O(1)$ time. We can avoid this assumption by having a look-up table of size $O(r_2)$ whose $i$th entry contains a word with the value $i$ specified using the same number type as that of the edge weights. This way, we only need to assume that numbers of this type can be multiplied together in $O(1)$ time.

The following lemma combined with the fact that $W_P[\ell(e_{uv})] = w(e_{uv})$ shows that the output of a query for $(u,v)$ is a $2$-approximation of $d_P(u,v)$.
\begin{Lem}\label{lm:2approx-lookup}
For each pair of distinct vertices $u$ and $v$ in $V(P)$, $d_P(u,v)\leq w(e_{uv})\rho_{uv}\leq 2d_P(u,v)$.
\end{Lem}
\begin{proof}
The lemma clearly holds if $d_P(u,v) = \infty$ since then $w(e_{uv}) = \infty$ and $\rho_{uv} = 1$ so assume $d_P(u,v) < \infty$. If $d_P(u,v) = 0$ then $w(e_{uv}) = 0$ and the lemma again follows. Finally, assume that $0 < d_P(u,v) < \infty$. Then $0 < w(e_{uv}) < \infty$, $\rho_{uv} = \lceil d_P(u,v)/w(e_{uv})\rceil$, and
\[
  d_P(u,v)\leq w(e_{uv})\lceil d_P(u,v)/w(e_{uv})\rceil < w(e_{uv})\left(\frac{d_P(u,v)}{w(e_{uv})}+1\right) = d_P(u,v) + w(e_{uv}) \leq 2d_P(u,v),
\]
as desired.
\end{proof}

To obtain $O(|P|) = O(r_2)$ space, the only non-trivial part is how to compactly represent all the pairs $p_{uv}$. Recall that $\ell(e_{uv})$ is a bitstring of length $O(\log r_2)$. We claim that the binary representation of integer $\rho_{uv}$ also has length $O(\log r_2)$. This is clear if $w(e_{uv})\in\{0,\infty\}$. When $0 < w(e_{uv}) < \infty$, then by the choice of $e_{uv}$, $\rho_{uv} = \lceil d_P(u,v)/w(e_{uv})\rceil\leq |E(P)| = O(r_2)$, as desired. By padding with $0$s if needed, we can pick an integer $\beta = \Theta(\log r_2)$ such that the binary representations of $\ell(e_{uv})$ and $\rho_{uv}$ both have length exactly $\beta$. Represent the pair $p_{uv}$ with $2\beta$ bits, consisting of the concatenations of the representations of $\ell(e_{uv})$ and $\rho_{uv}$.

For each vertex $u\in V(P)$, we can now store all pairs $p_{uv}$ in a single word $W_u$ using $2\beta|V(P)| = O(r_2\log r_2) = O(\log\log n\log\log\log n)$ bits where the pair $p_{uv}$ is stored with offset $2\beta\ell(v)$ in $W_u$. Both $\ell(e_{uv})$ and $\rho_{uv}$ can be retrieved in constant time by suitable shift and logical-and operations. This gives the total number of words of the lookup table of $O(|P|) = O(r_2)$, as desired. It also follows that query time is constant.

\subsection{Obtaining near-linear construction time}\label{subsec:construction-const-stretch}
It remains to show how to construct our oracle in $O(n\log^3n)$ time.

Finding the recursive decomposition can be done in $O(n)$ time.

By Lemma~\ref{Lem:ThorupKlein}, the total time to construct the oracle for a piece $P$ is $O(|P|\log^3|P|)$. Summing this over all pieces of levels $0$ and $1$ gives a total time bound of $O(n\log^3n)$.

For $i = 1,2$, identifying $b_i(u)$ for each vertex $u$ of a level $i$-piece $P$ can be done in $O(|P|\log |P|)$ time using Dijkstra's algorithm on an augmented graph obtained from $P$ by adding a super source connected to all boundary vertices of $P$ with edges of weight $0$. Over all such pieces $P$, this takes $O(n\log n)$ time.

For the compact lookup tables, the bottleneck in the construction time is to identify the edge $e_{uv}$ for each pair of vertices $u$ and $v$ of a leaf piece $P$ of $\mathcal T$. Since $P$ is small, this can be done sufficiently fast using a simple brute-force approach: first compute all-pairs shortest paths in $P$ in $O(|P|^2\log|P|)$ time with Dijkstra's algorithm from each vertex. Then for each pair of vertices $u$ and $v$ in $P$, traverse the shortest path between them in $O(|P|)$ time to identify $e_{uv}$; over all vertex pairs, this takes $O(|P|^3)$ time. Hence, the total time to construct lookup tables for all leaf pieces of $\mathcal T$ is $O((n/r_2)r_2^3) = O(nr_2^2) = O(n(\log\log n)^4)$.

\paragraph{Avoiding ceiling operations} For our compact lookup tables, we needed the ceiling operation in order to compute each value $\rho_{uv} = \lceil d_P(u,v)/w(e_{uv})\rceil$. We implicitly made the assumption that this operation can be executed in constant time.

To get rid of this assumption, consider the following procedure. Initialize $d \leftarrow 0$. Now, iteratively update $d\leftarrow d + w(e_{uv})$ until $d \geq d_P(u,v)$. Then $\lceil d_P(u,v)/w(e_{uv})\rceil$ is the number of updates to $d$ so a simple counter that is incremented in each step suffices to compute this value. This slows down the preprocessing time for the compact lookup tables by no more than a factor $\rho_{uv} = O(r_2) = O((\log\log n)^2)$ which will not affect the overall $O(n\log^3n)$ preprocessing time bound.

%Next, we show that if edge weights are integers represented in binary, we can replace the multiplication operation by bit-shifting which, like addition, is an $\text{AC}^0$-instruction. Redefine $\rho_{uv}$ to be the smallest power of $2$ whose value is at least $\lceil d_P(u,v)/w(e_{uv})\rceil$. Each such value can be computed in $O(r_2)$ time during preprocessing as above. Computing the query output can then be done using bit-shifting and the approximation factor increases by no more than a factor of $2$. Hence we get a $4$-approximation for our compact lookup table which will not increase the approximation factor of $8$ for the full data structure.

\section{An Approximate Oracle with $(1+\eps)$ Stretch for Undirected Planar Graphs}

In this section, we construct a $(1+\epsilon)$-approximate distance oracle with linear space and constant query time for edge-weighted planar graphs; graphs in this section are \emph{undirected}. We refer the readers to~\Cref{sec:high-level-ideas} for an overview of the proof. The construction  is divided into four major steps:

\begin{enumerate}
	\item In~\Cref{subsec:additive}, we construct a distance oracle restricted to any given subset of vertices $S$ with a small \emph{additive stretch}. The oracle has space $O(|S|)$ and constant query time. 
	\item In~\Cref{subsec:mulitplicative}, we use the additive oracle to construct an oracle restricted to any given subset of vertices $S$ with multiplicative $(1+\epsilon)$ stretch. The oracle has space $O(|S|)$ and constant query time.  The caveat is that the stretch guarantee only applies to pairs of vertices whose distances are in $[d,\alpha d]$ for a given parameter $d$ and a constant $\alpha$. 
	\item In~\Cref{subsec:qusipolynomial-spread}, we show that planar graphs with quasi-polynomial edge length have $(1+\epsilon)$-approximate distance oracle with linear space and constant query time. The construction combines three different tools: a net-tree, a weighted level ancestor data structure, and the restricted oracles in the second step.  
	\item Finally, in~\Cref{subsec:linear-oracle}, we remove the assumption on the edge length of the graph.
\end{enumerate}

We use $\spc(\mathcal{X})$ to denote the total space (in words) of a data structure $\mathcal{X}$. 

\subsection{Additive Restricted Distance Oracles}\label{subsec:additive}

 An \emph{$S$-restricted distance oracle} $\mathcal{D}$ for a planar graph $G$ with \emph{additive stretch $t$} is a data structure that given any two vertices $u,v \in S$,  the estimated distance returned by the oracle, denoted by $d_{\mathcal{D}}(u,v)$, satisfies:
\begin{equation}
d_{G}(u,v) \leq d_{\mathcal{D}}(u,v) \leq d_{G}(u,v) + t
\end{equation}

\noindent This section is devoted to proving the following theorem.

\begin{restatable}{theorem}{AddtiveOracle}
\label{lm:additive-oracle}
Given an edge-weighted $n$-vertex planar graph $G(V,E)$ with diameter $D$, an error parameter $\epsilon < 1$, and a subset of vertices $S$, there is an $S$-restricted distance oracle $\mathcal{D}$ with  $O(|S|\epsilon^{-2})$ space, $O(\epsilon^{-2})$ query time, and additive stretch $\epsilon D$.  Furthermore, $\mathcal{D}$ can be constructed in $O(\epsilon^{-3}n\log^3 n)$ time.
\end{restatable}

We first present our distance oracle ignoring its efficient construction. In~\Cref{subsec:prepro-additive-oracle}, we show how to construct the oracle in nearly linear time.

We base our construction on the idea of Kawarabayashi, Sommer, and Thorup~\cite{KST13}, called \emph{KST construction}, that relies on  recursive decompositions of $G$ using shortest path separators. However, our construction is different from KTS in two respects. First, we restrict  the distance query to be between vertices in a given subset of vertices $S$, and the space bound must be linear in $S$, which could be much smaller than $n$; the space bound in $KTS$ oracle is $\Omega(n)$. As a result, our recursive decompositions must be tailored specifically to $S$, and there are several properties that the decomposition must satisfy altogether. Second, our construction only has three levels, instead of $\log^*(n)$ levels as in the KTS construction.  Specifically, the top level is the vertex set $S$ and a recursive decomposition for $S$; the second level contains subsets of $S$ corresponding to leaves of  the recursive decomposition for $S$; the third level contains subsets of those in the second level. Subsets of $S$ in the third level are small enough that we can afford to have a table lookup that contains \emph{encodings} of approximate distances (instead of these distances themselves); the same idea was used in~\Cref{subsec:compact-table}.  With these ideas, we are able to achieve space and query time bounds independent of $n$, while KTS has a query time bound of $O(\log^* n)$ and a space bound of $\Omega(n)$ when $\epsilon$ is a constant.

\begin{wrapfigure}{r}{0.4\textwidth}
	\vspace{-15pt}
	\begin{center}
		\includegraphics[width=0.4\textwidth]{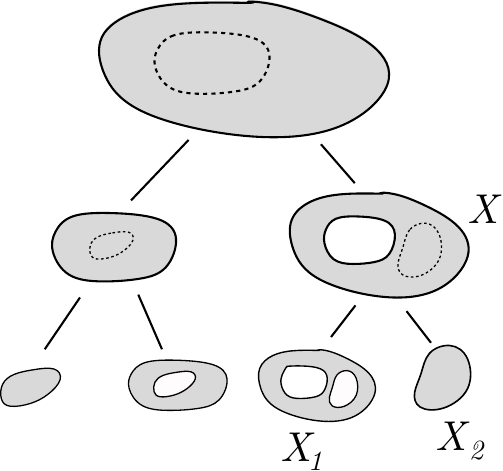}
	\end{center}
	\caption{\footnotesize{A recursive $(S,\tau)$-decomposition $\mathcal{T}$. Child pieces of $X$ is obtained by separating $ X$ a shortest path separator. While regions are \emph{holes} of $X$.  $\Pi(X)$ contains the shortest paths on the boundaries of the holes and paths in the separator that separates $X$.}}
	\vspace{-20pt}
	\label{fig:S-tau-decomp}
\end{wrapfigure}

\paragraph{Recursive $(S,\tau)$-decomposition.~} A recursive $(S,\tau)$-decomposition for a given $S \subseteq V$ and an integer $\tau \geq 1$ is a binary tree $\mathcal{T}$ (see Figure~\ref{fig:S-tau-decomp})  such that:
\begin{enumerate}
	\item[(1)] Each node $X \in \mathcal{T}$ is associated with a subset $\chi(X)$ of vertices of $S$ and a set of $\Pi(X)$ of at most $10$ shortest paths of $G$. (We use nodes to refer to vertices of $\mathcal{T}$.)
	\item[(2)]  If $X$ is an internal node with two children $X_1,X_2$, then (a) $|\chi(X)| > \tau$, (b) $\chi(X_1)\cap \chi(X_2) = \emptyset$,  (c) $\chi(X_1)\cup \chi(X_2) \subseteq \chi(X)$ and (d) $\chi(X)\setminus (\chi(X_1)\cup \chi(X_2)) \subseteq \cup_{P\in \Pi(X)} P$. In addition, if $X$ is a leaf of $\mathcal{T}$, then (e) $\Pi(X) = \emptyset$ and (f) $|\chi(X)|  \leq \tau$.
	\item[(3)] For any node $X \in \mathcal{T}$, any pair $(u,v)\in \chi(X)\times (S\setminus \chi(X))$, any path between $u$ and $v$ must intersect some path $P \in \Pi(\parent(X))$ where $\parent(X)$ is the parent of $X$. 
	\item[(4)] $\depth(\mathcal{T}) = O(\log |S|)$ and $|\leavs(\mathcal{T})| \leq O(\frac{|S|}{\tau})$ where $\leavs(\mathcal{T})$ is the set of leaves of $\mathcal{T}$.
\end{enumerate}

 In constructing a recursive $(S,\tau)$-decomposition $\mathcal{T}$, we will operate on a triangulation $G_{\Delta}$ of $G$. We also associate each node $X$ of $T$ with a subgraph $H_X$ of $G_{\Delta}$; the root is associated with $G_{\Delta}$. If $\chi(X) > \tau$, we use shortest path separator $C_X$ to separate $H_X$ into two subgraphs $H_{X_1},H_{X_2}$ associated with children $X_1,X_2$ of $X$, respectively. One subgraph, say $H_{X_1}$, is induced by vertices and edges on and inside $C_X$ and another subgraph is induced  by vertices and edges on and outside $C_X$. $C_X$ becomes a \emph{hole} of $H_{X_2}$ and an infinite face of $H_{X_1}$; note that $H_{X_1}\cap H_{X_2} = C_X$.

Naturally, $\chi(X_1)$ and $\chi(X_2)$ are the subsets of $\chi(X)$ inside and outside $C_X$.  The set of paths $\Pi(X)$ contains two paths in $C_X$ and the shortest paths in (a constant number of) holes of $H_X$ due to separations in previous steps. It could be that a vertex $s\in \chi(X)$ belongs to $C_X$ and hence, is not included in $\chi(X_1)\cup \chi(X_2)$. We stop the decomposition at node $X$ when $\chi(X) \leq \tau$; this guarantees property (2).

The number of paths $\Pi(X)$ is at most twice the number of holes of $H_X$. We will guarantee that the number of holes of $H_X$ is at most $4$. This guarantees property (1) as $\Pi(X)$ contains, in addition to 8 shortest paths on 4 holes, 2 shortest paths on the shortest path separator that separates $X$ into two children. If $H_X$ has exactly four holes, we design the weight function $\omega_X$ in a way that the child graphs $H_{X_1}$ $H_{X_2}$ have at most $3$ holes each. Otherwise, $\omega_X(v) = 1$ if $v\in \chi(X)$ and $0$ otherwise. This way we can argue that the depth of $\mathcal{T}$ is $O(\log |S|)$ as claimed in property (4). The fact that the number of leaves is at most $O(\frac{|S|}{\tau})$ is closely related to the weight function $\omega_X$:  Except when reducing the number of holes of $H_X$, $\omega_X$ guarantees that $\chi(X)$ is reduced by at least a $\frac{2}{3}$ fraction.  

Property (3) follows directly from the observation that any path from a vertex $u$ inside $H_X$ to a vertex $v$ outside $H_X$ must cross at least one of the holes of $H_X$ and that the boundaries of these holes are shortest paths associated with $X$'s parent. These are the ideas behind the following decomposition lemma, whose formal proof will be deferred to~\Cref{subsec:proof-decomp-lemma}.

\begin{restatable}[Decomposition Lemma]{lemma}{DecompositionLemma}\label{lm:S-division}
Given any parameter $\tau \geq 1$ and a subset of vertices $S \in V(G)$, there exists a recursive $(S,\tau)$-decomposition  $\mathcal{T}$ of $G$ satisfying all properties (1)-(4). 
\end{restatable}

\noindent We observe the following simple properties of $\mathcal{T}$. 

\begin{observation}\label{obs:T-prop}
 $\{\chi(X)\}_{X\in \leavs(\mathcal{T})}$ are disjoint subsets of $S$. 
\end{observation}
\begin{proof}
By induction and the construction, it holds that  if $X$ is not an ancestor of $Y$ and vice versa, then $\chi(X)\cap \chi(Y) = \emptyset$. Thus, the subsets of $S$ associated with leaves of $\mathcal{T}$ are disjoint.  
\end{proof}

For each $v\in S$, we define $\home(v)$ to be a node $X$ such that $v\in \chi(X)$ and either (i) $X$ is a leaf of $\mathcal{T}$, or (ii) $X$ has two children $X_1$ and $X_2$ where $v\in \chi(X)\setminus (\chi(X_1)\cup \chi(X_2))$. Node $X$ in case (ii) exists by property (2d) of $\mathcal{T}$, and in this case, $v\in P$ for some path $P \in \Pi(X)$. Thus, if $u$ and $v$ do not belong to the same leaf of $\mathcal{T}$, then either $\home(u)\not= \home(v)$ or $\home(u) = \home(v) = X$ for some internal node $X \in \mathcal{T}$. 

\begin{wrapfigure}{r}{0.4\textwidth}
	\vspace{-15pt}
	\begin{center}
		\includegraphics[width=0.4\textwidth]{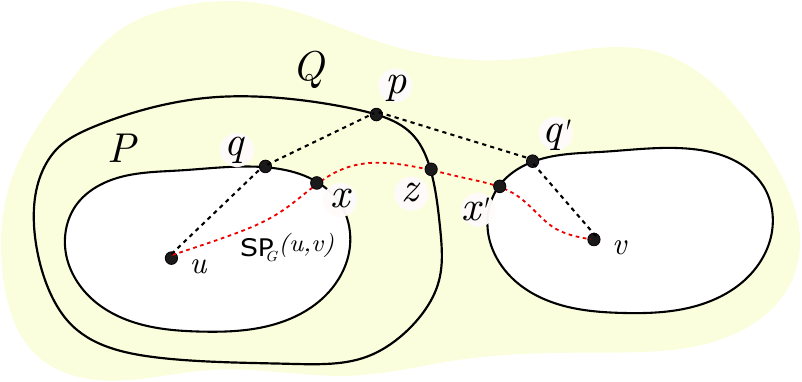}
	\end{center}
	\caption{\footnotesize{Illustration for the proof of~\Cref{lm:D1}}.}
	\vspace{-20pt}
	\label{fig:additive-stretch}
\end{wrapfigure}

\begin{claim}\label{clm:separation-T}
	For any pair of vertices $(u,v)\in S\times S$, either (a) $\home(u)$ and $\home(v)$ are the same leaf of $\mathcal{T}$, or (b) there is a node $Z = \lcads_{\mathcal{T}}(\home(u),\home(v))$ such that any path between $u$ and $v$ must intersect some path $P \in \Pi(Z)$.
\end{claim}
\begin{proof}
	Suppose that (a) does not happen. Then $\chi(Z)$ contains both $u$ and $v$ by property (2) of $\mathcal{T}$, and that either (i) $u$ and $v$ belongs to the subsets in different children of $Z$ or (ii) at least one of $u$ and $v$ belongs to some path $P$ of $\Pi(Z)$. Case (ii) immediately implies (b) of the claim. For case (i), suppose that $u \in \chi(Z_1)$ where $Z_1$ is a child of $Z$. Then applying property (3) of $\mathcal{T}$ to $Z_1$, it holds that any path between $u$ and $v$ must intersect some path $P \in \Pi(Z)$, as desired.
\end{proof}

\begin{lemma}\label{lm:D1}Given a recursive $(S,\log(|S|))$-decomposition
	 $\mathcal{T}$ of a planar graph $G(V,E,w)$ for a subset of vertices $S$, 	there is a data structure $\mathcal{D}$ with $O(|S|\epsilon^{-2})$ space and query time $O(\epsilon^{-2})$ such that for any pair of vertices $(u,v)\in S\times S$, if $u$ and $v$ do not belong to the same leaf of $\mathcal{T}$, $\mathcal{D}$  returns an estimate distance $d_{\mathcal{D}}(u,v)$ such that:
	\begin{equation}
	d_G(u,v) \leq d_{\mathcal{D}}(u,v)\leq d_{G}(u,v) + \epsilon D
	\end{equation}	
	Otherwise, $\mathcal{D}$ returns \textsc{Failed} and the (id of the) leaf $X$ of $\mathcal{T}$ whose associated set $\Pi(X)$ contains both $u$ and $v$. 
\end{lemma}
\begin{proof} Let $X$ be a node in $\mathcal{T}$ and $\parent(X)$ be the parent node of $X$ in $\mathcal{T}$.  For each shortest path $P$ in $\Pi(X)$ , we \emph{portalize} $P$ by a set $\port(P)$ of $O(\frac{1}{\epsilon})$ \emph{portals} such that the distance between any two nearby portals is at most $\epsilon D$; we can take $\port(P)$ to be a set of $(\epsilon D/2)$-net  of $P$.  (Note that $w(P)\leq D$.) We denote by $\port(X) = \cup_{P\in \Pi(X)}\port(P)$. Observe that $|\port(X)| = O(\frac{1}{\epsilon})$ since $|\Pi(X)|\leq 10$.
	
	For each portal $v\in \port(X)$,  we store the distances from $v$ to all vertices in $\cup_{Y \in \ansc(X)}\port(Y)$ where $\ansc(X)$ is the set of ancestors of $X$ in $T$ (including $X$).   
	
	For each vertex $s\in S$, we store (the id of) $\home(s)$  and the distance from $s$ to every vertex in the set of \emph{portals} $p(s)\stackrel{\mbox{\tiny{def.}}}{=} \port(\home(s))\cup \port(\parent(\home(s)))$.
	
	We also store the tree $\mathcal{T}$ (but not the set of vertices and paths associated with nodes of $\mathcal{T}$), the set of portals $\port(X)$ for each internal node $X$ and the subset of vertices in $S$ associated with each leaf node of $\mathcal{T}$. Additionally, we store a data structure $\lcads_{\mathcal{T}}$ to suport LCA queries in $O(1)$ time; $\lcads_{\mathcal{T}}$ occupies $O(|\mathcal{T}|)$ words. This completes the construction of $\mathcal{D}$.
	
	We now bound the space of $\mathcal{D}$.  Since $\mathcal{T}$ is a binary tree, $|V(\mathcal{T})| = O(|\leavs(\mathcal{T})|) = O(\frac{|S|}{\log(|S|)})$. Thus, the total number of portals is $O(\frac{|S|}{(\log |S|) \epsilon})$. For each portal, by property (4), we store distances to $ O(\depth(\mathcal{T}))O(\frac{1}{\epsilon})  = O(\log (|S|) \epsilon^{-1})$ other vertices. Thus, the total space to store portal distances is:
	\begin{equation}\label{eq:portal-distace-space}
	O(\frac{|S|}{(\log |S|) \epsilon}) O((\log |S| )\epsilon^{-1}) = O(|S|\epsilon^{-2})
	\end{equation}
	
	For each vertex $s\in S$, we only store distances to $O(\epsilon^{-1})$  portals, which costs $O(|S|\epsilon^{-1})$ space. Thus, the total space is still bounded by~\Cref{eq:portal-distace-space}.
	
	We now describe distance queries in $\mathcal{D}$. Let $(u,v)\in S\times S$  be a query. If $\home(u) = \home(v) = X$ where $X$ is the leaf of $\mathcal{T}$, $\mathcal{D}$ will return $X$ and output \textsc{Failed}. Otherwise, let $Z = \lcads_{\mathcal{T}}(\home(u),\home(v))$. 	Let $p\in \port(Z)$ be a portal. We compute the estimate distance from $u$ to $p$ as follows. 
	\begin{equation}\label{eq:dist-portals}
	d_{\mathcal{D}}(u,p) = \min_{q \in p(u)} d_G(u,q) + d_G(q,p)
	\end{equation}
	Observe that either $p \in p(u)$, when $Z = \home(u)$, or $p(u)$ contains the portals of two descendant nodes of $Z$, when $Z\not= \home(u)$. In both cases,  $d_G(p,q)$ is stored in our data structure. Since the distance from $u$ to every vertex in $p(u)$ is stored and since $|p(u)| = O(\epsilon^{-1})$, it follows that $d_{\mathcal{D}}(u,p)$ can be computed in $O(\epsilon^{-1})$ time.  The estimate distance $d_{\mathcal{D}}(v,p)$ between $v$ and $p$ can be obtained in $O(\epsilon^{-1})$ time by exactly the same way.
		
	We then approximate the distance between $u$ and $v$ by computing:
	\begin{equation}\label{eq:terminal-dist-D1}
	d_{\mathcal{D}}(u,v) =  \min_{p \in \port(Z)} d_\mathcal{D}(u,p) + d_{\mathcal{D}}(p,v)
	\end{equation}
	The total query time is $O(\epsilon^{-2})$ since $|\port(Z)| = O(\epsilon^{-1})$. 
	
	It remains to bound the additive stretch of $\mathcal{D}$. By~\Cref{clm:separation-T}, there is a path $Q \in \Pi(Z)$ such that there exists $z \in \SP_G(u,v)\cap Q$. Let $p$ be the portal in $\port(Z)$ closest to $z$ (see~\Cref{fig:additive-stretch}). Then $d_G(p,z)\leq \epsilon D$. By the triangle inequality, it holds that:
	\begin{equation}\label{eq:through-port-dist}
	d_G(u,p) + d_G(v,p) \leq d_G(u,v) + 2\epsilon D
	\end{equation}
	If we can show that:
	\begin{equation}\label{eq:dist-ancestor}
	d_{\mathcal{D}}(u,p)\leq d_G(u,p) + 4\epsilon D  \quad \& \quad d_{\mathcal{D}}(v,p)\leq d_G(v,p) + 4\epsilon D.
	\end{equation}
	then, by~\Cref{eq:through-port-dist}, we have $d_{\mathcal{D}}(u,v)~\leq~ d_G(u,p) +  10\epsilon D$. By setting $\epsilon' = \epsilon/10$, we obtain the desired additive stretch $\epsilon' D$, and we are done.
	
	Now we focus on showing~\Cref{eq:dist-ancestor} for $u$; the argument for $v$ is symmetric. If $Z = \home(u)$,  then $p \in p(u)$ and hence, $d_{\mathcal{D}}(u,p) = d_G(u,p) $. Otherwise, $\SP_G(u,v)$ intersects some vertex $x$ on a path $P \in \Pi(\parent(\home(u)))$ by property (3) of $\mathcal{T}$. Let $q$ be the closest portal to $x$ on $P$. By the triangle inequality, we have:
	\begin{equation*}
	\begin{split}
	d_{G}(u,q) + d_{G}(p,q) &\leq\left(  d_G(u,x) + d_G(x,q) \right) +  (d_G(q,x) + d_G(x,z) + d_G(z,p))\\
	&\leq d_G(u,x)  + d_G(x,z) + 3\epsilon D = d_G(u,z)  + 3\epsilon D\\
	&\leq  d_G(u,p) + 4\epsilon D.
	\end{split}
	\end{equation*}
	Since $q\in p(u)$, by~\Cref{eq:terminal-dist-D1}, $d_{\mathcal{D}}(u,p) ~\leq d_{G}(u,q) + d_{G}(p,q) ~\leq d_G(u,p) + 4\epsilon D$ as desired.
\end{proof}

In the following lemma, we devise a simple construction of a data structure with  linear space and constant query time when $S$ has a sufficiently small size, 

\begin{lemma}\label{lm:small-S}
	Given an $n$-vertex planar graph $G$ with diameter $D$, an error parameter $\epsilon < 1$, and a subset $S\subseteq V$ such that $|S| \leq \log \log n$, there is an $S$-restricted distance oracle of  additive stretch $\epsilon D$ with  $O(|S|\epsilon^{-1})$ space  and $O(1)$ query time .
\end{lemma}
\begin{proof}
	If $\frac{1}{\epsilon} > \log\log n $, then by storing pairwise distances between vertices in $S$, the total amount of space is $O(|S|^2) = O(|S|\epsilon^{-1})$. The query time is $O(1)$ in this case. Thus, we can assume that $\epsilon^{-1}\leq \log\log n$. Index vertices of $S$ as $\{1,2,\ldots ,|S|\}$. For each vertex, we can afford to store its index in $S$ in one word. Let $v_i\in S$ be a vertex corresponding to index $i$ . For each vertex $v\in S$, we divide a word (of at least $\mword = \Omega(\log(n))$ bits) into $\Omega(\frac{\log n}{\log(2\log \log n)})$ blocks of size $\lceil \log(2\log \log n)\rceil$ each. For the $i-$th block, we store $\rho(v,v_i) = \lceil \frac{d_G(v,v_i)}{\epsilon D} \rceil$. Observe that $\rho(v,v_i) \leq \frac{1}{\epsilon} + 1 \leq 2\log\log n$. Thus, we can afford to store all $\rho(v,v_i) $, $1\leq i\leq |S|$ in one word, called the \emph{distance word}, since $|S| \ll \frac{\log n}{\log\log n}$ when $n$ is sufficiently big. The total space is then $O(|S|)$ words. To query the distance from $v$ to $v_i$ in $S$, we first look up the index $i$ in $S$, extract the $i$-th block of the distance word of $v$ in $O(1)$ time, and then return $\rho(v,v_i)\epsilon D$ as the estimate distance between $v$ and $v_i$. The additive stretch is at most $\epsilon D$ since $\rho(v,v_i)\epsilon D = \lceil \frac{d_G(v,v_i)}{\epsilon D} \rceil \epsilon D \leq d_G(v,v_i) + \epsilon D$.
\end{proof}

\noindent We now have all necessary tools to prove~\Cref{lm:additive-oracle}.

\begin{proof}[Proof of~\Cref{lm:additive-oracle}]
	We construct a 3-level data structure represented by a tree $\mathbb{T}$. The root of $\mathbb{T}$ at level $0$ corresponds to $S$. We construct a recursive $(S,\log(|S|))$-decomposition $\mathcal{T}_0$ and a data structure $\mathcal{D}_0$ corresponding to $\mathcal{T}_0$ using~\Cref{lm:D1}. 
	
	Children of the root, or level-$1$ nodes, are the subsets of $S$ associated with leaves of $\mathcal{T}_0$.  Observe that subsets at level-$1$ have size $O(\log |S|)$.  
	
	Let $\mathcal{X}$ be a level-1 node, and $S_{\mathcal{X}}$ be the subset of $S$ associated with $\mathcal{X}$. We again construct a  recursive $(S_\mathcal{X},\log(|S_\mathcal{X}|))$-decomposition $\mathcal{T}_\mathcal{X}$ and a data structure $\mathcal{D}_\mathcal{X}$ corresponding to $\mathcal{T}_\mathcal{X}$ using~\Cref{lm:D1}.  Children of $\mathcal{X}$ are associated with the subsets of $S_\mathcal{X}$ corresponding to leaves of $\mathcal{T}_\mathcal{X}$.

	Observe that for each subset $S_\mathcal{Y}\subseteq S$ associated with a node $\mathcal{Y}$ at level-2 of $\mathbb{T}$, $|S_\mathcal{Y}| = O(\log \log |S|) = O(\log \log n)$. We construct a data structure with $O(|S_\mathcal{Y}|\epsilon^{-1})$ space and $O(1)$ query time to query distances between pairs $(u,v) \in S_\mathcal{Y}\times S_\mathcal{Y}$ using \Cref{lm:small-S}.

	Our distance oracle $\mathcal{D}$ will consist of all the data structure $\mathcal{D}_\mathcal{X}$ associated with each node $\mathcal{X}$ in $\mathbb{T}$. Since subsets of $S$ at each level of $\mathbb{T}$ are disjoint, the total space of $\mathcal{D}$ is $O(|S|\epsilon^{-2})$ by~\Cref{lm:D1} and~\Cref{lm:small-S}. 

	We now show how to answer a query $(u,v) \in S\times S$. The data structure  $\mathcal{D}_0$ at the root of $\mathbb{T}$, in $O(\epsilon^{-2})$ time, either returns a $(1+\epsilon)$-approximate distance between $u$ and $v$ -- in this case we are done -- or the id of the child, say $\mathcal{X}$, of the root whose associated subset $S_\mathcal{X}$ contains both $u$ and $v$. We then query $\mathcal{D}_\mathcal{X}$. In $O(\epsilon^{-2})$ time,  $\mathcal{D}_\mathcal{X}$ either returns a $(1+\epsilon)$-approximate distance between $u$ and $v$ -- again in this case we are done -- or the id of the child, say $\mathcal{Y}$, whose associated subset $S_\mathcal{Y}$ contains both $u$ and $v$. In the latter case, by querying $\mathcal{D}_\mathcal{Y}$ in $O(1)$ time, we get a $(1+\epsilon)$-approximate distance between $u$ and $v$. The total query time is hence $O(\epsilon^{-2})$. 
	
	The stretch guarantee of $\mathcal{D}$ follows directly from~\Cref{lm:D1} and~\Cref{lm:small-S}. 
\end{proof}

\subsubsection{Obtaining near-linear construction time}\label{subsec:prepro-additive-oracle}

A recursive $(S,\tau)$-decomposition $\mathcal{T}$ along with portals on each shortest path associated with nodes of $\mathcal{T}$ can be computed in $O(n\log^3 n \epsilon^{-2})$ time following the same approach by Kawarabayashi, Sommer and Thorup~\cite{KST13} (see the proof of Theorem 1.3 in~\cite{KST13}). Given $\mathcal{T}$, we can compute $\home(u)$ for each $u\in V(G)$ in $O(n)$ time.

To compute distances between vertices quickly, we use  the distance labeling scheme of Thorup in~\Cref{thm:Thorup-labeling}. Although the distances obtained through the labeling scheme are $(1+\epsilon)$-approximate, they suffice for our purpose, as we will point out in each step of the construction below.  

\paragraph{Step 1: constructing the oracle in ~\Cref{lm:D1}.~}  We assume that the recursive $(S,\tau)$-decomposition $\mt$ and the set of portals $\Pi(X)$ associated with each node $X$ in $\mt$ are given as an input to the construction in Lemma~\ref{lm:D1}.

 Recall that $\mt$ has $O(\frac{|S|}{\log(|S|)}) = O(|S|)$ nodes and depth $O(\log |S|)$.  For each node $X\in \mt$ and every node $v\in \port(X)$, we need to compute the distance from $v$ to every vertex in the portal sets of all ancestors of $X$; there are $O(\epsilon^{-1} \depth(\mt) = O(\epsilon^{-1}\log |S|)$ such vertices. Thus the total number of \emph{distance computations} is $O(|V(\mt)|\epsilon^{-2}\log |S|) = O(\epsilon^{-2} |S|\log |S|)$.

Next, for each vertex $s\in S$, we need to compute the distances from $s$ to all vertices in $p(s)$ -- the set of portals of $s$. Since $|p(s)| = O(\epsilon^{-1})$, the total number of distances computed in this step is $O(|S|\epsilon^{-1})$.

In summary, the total number of distance computations remains to be $O(\epsilon^{-2} |S|\log |S|)$. We use Thorup's distance labeling scheme  to query each $(1+\epsilon)$-approximate distance in time $O(\epsilon^{-1})$; this incurs $O(|S|\epsilon^{-3} \log |S|)$ total  running time.

Since $(1+\epsilon)$-approximate distances are stored instead of exact distances, for each query $(u,v)$, $\md$ returns:
\begin{equation*}
\begin{split}
d_{\md}(u,v)\leq (1+\epsilon) d_G(u,v) + \epsilon D  \leq d_G(u,v) + 2\epsilon D
\end{split}
\end{equation*}
 since $d_G(u,v)\leq D$. Thus, by rescaling $\epsilon \leftarrow \epsilon/2$, we recover additive stretch $\epsilon D$ while increase the running time and space by only a constant factor.

\paragraph{Step 2: constructing the oracle in~\Cref{lm:small-S}.~} Since $|S| = O(\log \log n)$, we can use Thorup's labeling scheme to query approximate distances between all pairs of vertices in $S$; the total running time to query all the distances is $O(\epsilon^{-1}(\log \log n)^2)$.

Computing $\rho(v,v_i)$ involves the ceiling function. Note that $x = 2\lceil \log\lceil\log n\rceil\rceil$ is an integer upper bound on $\rho(v,v_i)$. If we are given $x$, the ceiling involved in $\rho(v,v_i)$ can then be computed in $O(\log\log\log n)$ time with binary search in the range of integers from $1$ to $x$. Computing $x$ needs only be done once during the entire construction of the oracle and it can be done in $O(\log n)$ time as follows: first compute $\lceil\log n\rceil$ in $O(\log n)$ time by repeated doubling; using the same approach, compute $x$ in additional $O(\log\log n)$ time.

For each node $v\in S$, packing distances of other nodes in one word take $O(\log n)$ time. Thus, the total running time is $O(\epsilon^{-1} |S| \log n)$. 

The returned distance between $(u,v)$ by the oracle will have an additive stretch $2\epsilon D$ instead of $\epsilon D$; we can recover stretch $\epsilon D$ by scaling.

\begin{wrapfigure}{r}{0.45\textwidth}
	\vspace{-25pt}
	\begin{center}
		\includegraphics[width=0.45\textwidth]{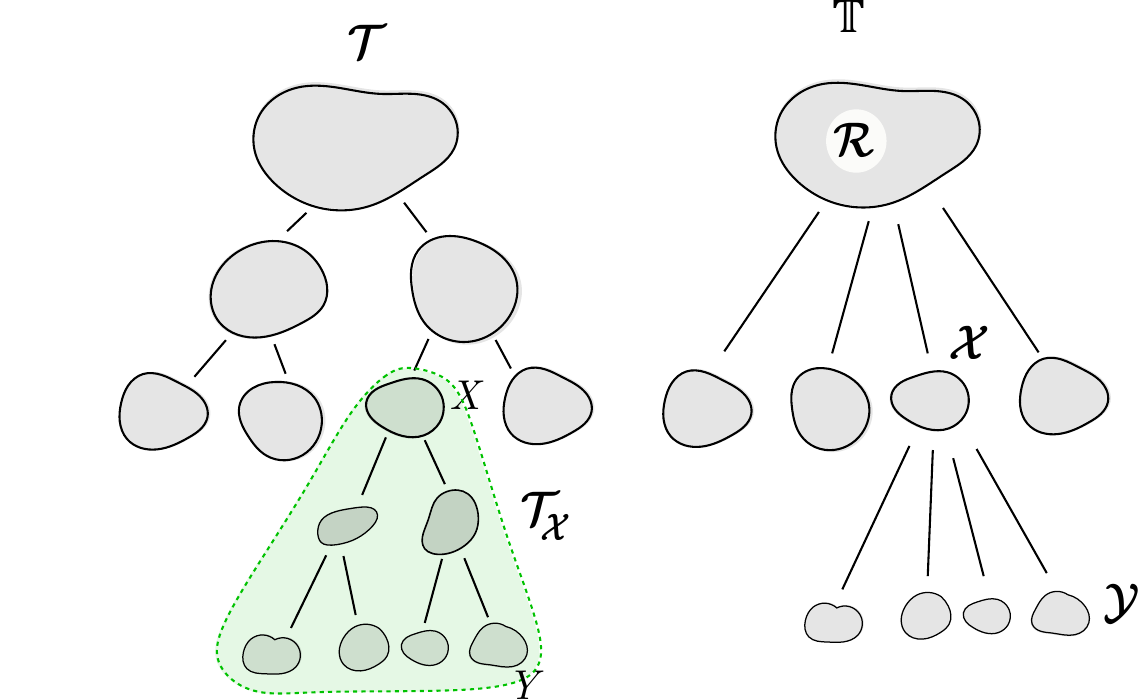}
	\end{center}
	\caption{\footnotesize{The 3-level data structure $\mathbb{T}$ and the corresponding recursive decomposition tree $\mt$}.}
	\vspace{-10pt}
	\label{fig:tree-preproc}
\end{wrapfigure}

\paragraph{Step 3: constructing the oracle in~\Cref{lm:additive-oracle}.} In the proof of~\Cref{lm:additive-oracle}, we construct a 3-level data structure $\mathbb{T}$, where each node $\mx$ of $\mathbb{T}$ is associated with a subset $S_\mx$ of $S$. (Subsets associated with nodes in the same level are pairwise disjoint.) For each node $\mx$, we compute a recursive $(S_\mx,\log(|S_\mx|))$-decomposition $\mt_\mx$ and a data structure $\md_\mx$ by Step 1. However, if we compute  $\mt_\mx$ from the original graph $G$, the running time for each node of $\mathbb{T}$ is $O(n\epsilon^{-2}\log^3)$ for each node. This could results in $\Omega(n^2)$ running time as $\mathbb{T}$ can have up to $\Omega(n)$ nodes.

Our idea to obtain an efficient construction is to construct a \emph{single} recursive decomposition tree $\mt$, and for every node $\mx$ of $\mathbb{T}$, we can extract the decomposition $\mt_{\mx}$ from $\mt$ (see~\Cref{fig:tree-preproc}). 

Specifically, starting from the root node, say $\mr$, of $\mathbb{T}$,  we compute an $(S,\log|S|)-$decomposition $\mt$ of $\mr$. (Note that $S_\mr = S$.)   Each child $\mx$ of $\mr$ in $\mathbb{T}$ has a corresponding leaf node $X$ in $\mt$ such that $S_{\mx} = \chi(X)$. We then continue the recursive decomposition following  the algorithm of  Kawarabayashi, Sommer, and Thorup~\cite{KST13}  to  decompose leaf node $X$ of $\mt$ further. Each leaf $Y$ in the subtree rooted at $X$ has $\chi(Y) = \log(|\chi(X)|) = \log(|S_{\mx}|)$, and corresponds to a leaf of $\mathbb{T}$.  The tree $\mt_{\mx}$ of $\mx$ is now the subtree rooted at node $X$ of $\mt$ (see~\Cref{fig:tree-preproc}).  Given $\mt_{\mx}$ and the portals associated with each node,  the running time to construct $\md_{\mx}$ is now $O(\epsilon^{-3} |S_{\mx}|\log(|S_{\mx}|))$. Thus, the total running time for each level $i$ of $\mathbb{T}$ is:
\begin{equation}\label{eq:time-per-level}
O(\epsilon^{-3}) \sum_{\mx \mbox{ is at level }i} |S_{\mx}|\log(|S_{\mx}|)) = O(\epsilon^{-3} n\log n)
\end{equation}

Observe that each leaf $\my$ of $\mt$ corresponds to a leaf of $\mathbb{T}$. We apply the construction in Step 2 with running time $O(\epsilon^{-2}|S_{\my}| \log n)$ and hence, the total running time to construct the oracles of leaves of $\mathbb{T}$ is $O(\epsilon^{-2} n\log n)$.

The time to construct $\mt$ and  all portals associated with nodes of $\mt$  is  $O(\epsilon^{-2} n \log^3 n)$ by the algorithm of  Kawarabayashi, Sommer, and Thorup~\cite{KST13}. Since $\mathbb{T}$ only has 3 levels, by~\Cref{eq:time-per-level},  the total running time to construct $\mathcal{D}$ is:
 \begin{equation*}
 O(\epsilon^{-3} n \log n) + O(\epsilon^{-2} n \log^3 n) ~=~ O(\epsilon^{-3}n\log^3 n)
 \end{equation*}

\subsection{Multiplicative Restricted Distance Oracles}\label{subsec:mulitplicative}

In this subsection, we prove~\Cref{thm:restricted-oracle} that we restate below. The main tool we use in this section is sparse covers.

 \RestrctedOracle*
 
 \begin{proof} Let $\mathcal{C} = \{C_1,\ldots, C_k\}$ be a $(\beta, s, \beta \alpha d)$-sparse cover of $G$ with $\beta, s = O(1)$ as guaranteed by~\Cref{lm:sparsecover-time}.  We remove from $\mathcal{C}$ every cluster $C_i$ such that $C_i \cap S = \emptyset$. Let $\mathcal{C}_S$ be the resulting set of clusters. 
 	
 	Let $D = \beta  \alpha d = O(\alpha d)$.  For each cluster $C \in \mathcal{C}_S$, let $S_C = S\cap C$. We apply~\Cref{lm:additive-oracle} to construct a distance oracle $\mathcal{D}_C$ for $S_C$ in (planar) graph $C$ with additive stretch $\epsilon_0D$ with $\epsilon_0 = \frac{\epsilon}{\beta \alpha} = O(\frac{\epsilon}{\alpha})$. Our data structure $\mathcal{D}$ consists of all oracles  $\{\mathcal{D}_C\}_{C\in \mathcal{C}_S}$, and additionally, for each vertex $v \in S$, we will store (the id of) $\mathcal{D}_C$ such that $C$ contains $B_G(v,\alpha d)$; $C$ exists by property (2) of sparse covers.
 	
 	We first bound the space of $\mathcal{D}$. Observe that $\sum_{C \in \mathcal{C}_S}|S_C| \leq s|S|$ since every vertex of $S$ belongs to at most $s$ clusters in $\mathcal{C}_S$. By~\Cref{lm:additive-oracle}, $\spc(\mathcal{D}_C) = O(\epsilon_0^{-2} |S_C|) = O(\epsilon^{-2} \alpha^2|S_C|)$. Thus, $\sum_{C\in \mathcal{C}_S}\spc(\mathcal{D}_C) = O(\epsilon^{-2}\alpha^2 s) |S| = O(\epsilon^{-2} \alpha^2 |S|)$ as desired.
 	
 	Given a query pair $(u,v)\in S\times S$, we first identify in $O(1)$ time the oracle $\mathcal{D}_C$ such that $C$ contains $B_G(v,\alpha d)$. If $d_G(u,v) < \alpha d$,  $u$ may not belong to $C$, and if this is the case, the oracle returns $+\infty$. Otherwise, $u \in B_G(v,\alpha d)$ and hence, $u\in C$. We then query $\mathcal{D}_C$ in $O(\epsilon_0^{-2}) = O(\alpha^2\epsilon^{-2})$ time to get the approximate distance $d_{\mathcal{D}_C}(u,v)$. We  return this distance as an approximate distance by $\mathcal{D}$. 
 	
 	Observe that the total query time is $O(\alpha^2\epsilon^{-2})$. To bound the stretch, observe that if $d_G(u,v)\geq d$, then:
 	\begin{equation*}
 	d_{\mathcal{D}}(u,v) = d_{\mathcal{D}_C}(u,v) \leq d_G(u,v) + \epsilon_0 D = d_G(u,v)  + \epsilon d \leq (1+\epsilon)d_G(u,v)
 	\end{equation*}
  Otherwise, $d_{\mathcal{D}}(u,v) \geq d_G(u,v)$, but there may not be any useful upper bound on $d_{\mathcal{D}}(u,v)$.

  It remains to show that the construction time of $\md$ is $O(\epsilon^{-3}n\log^3 n)$ time. By~\Cref{lm:sparsecover-time}, $\mathcal{C}$ can be constructed in $O(n)$ time. By~\Cref{lm:additive-oracle}, each oracle $\md_C$ for each cluster $C\in \mathcal{C}_{S}$ can be constructed in time $O(\epsilon_0^{-3} |V(C)|\log^3(|V(C)|)) = O(\epsilon^{-3}\alpha^{3} |V(C)|\log^3(|V(C)|))$. Thus, the total construction time of $\md$ is:
  \begin{equation*}
  \sum_{C\in \mathcal{C}_S}O(\epsilon^{-3}\alpha^{3} |V(C)|\log^3(|V(C)|)) = O(\epsilon^{-3}\alpha^3n\log n),
  \end{equation*}
  as desired.
 \end{proof}

\subsection{Oracles for Planar Graphs with Quasi-polynomial Spread}\label{subsec:qusipolynomial-spread}

In this section, we construct a $(1+\epsilon)$-approximate distance oracle for planar graphs with quasi-polynomial edge weights. The oracle has linear space and constant query time.

\begin{theorem}\label{thm:quasi-poly-spread} Given an $n$-vertex planar graph $G(V,E,w)$  with spread $\Delta = 2^{O(\log^c n)}$ for some constant $c\geq 1$, there is a $(1+\epsilon)$-approximate distance oracle $\mathcal{D}$ with $O(n\epsilon^{-2}\log\frac{1}{\epsilon})$ space and query time $O(\epsilon^{-2})$. Furthermore, $\md$ can be constructed in time $O(\epsilon^{-3}n\log^{c+3}n \log \frac{1}{\epsilon})$.
\end{theorem}

\noindent Before proving~\Cref{thm:quasi-poly-spread}, we introduce the toolbox used in this section.

\paragraph{Weak net tree.~} Let $\eta\geq 1$ be a constant.  We view $G(V,E,w)$ as a metric space $(V,d_G)$ with  shortest path distances. (We use points and vertices interchangeably.) The spread of $(V,d_G)$ is $\Delta$. Let $\mathcal{H}$ be a hierarchy of nets $V = N_0 \supseteq N_1 \supseteq \ldots \supseteq N_{\lceil \log \Delta \rceil} $ where $N_{i+1}$ is a \emph{weak $(2^{i+1},\eta)$-net} of $N_i$ for each $i \in [1, \lceil \log \Delta \rceil]$.  By~\Cref{thm:weak-net-const}, we can construct $N_{i+1}$ along with an assignment $\mathcal{A}_{i+1}$ that covers $N_i$ in $O(n)$ time.

The hierarchy of nets naturally induces a \emph{$\eta$-weak net tree} $T$ where $i$-th level of $T$ is $N_i$, and the children of each point $p\in N_{i+1}$ are points in $\mathcal{A}_{i+1}(p)\subseteq N_{i}$. 

\begin{lemma}\label{lm:weak-net-const} We can construct an $O(1)$-weak net tree $T$ of $(V,d_G)$ in $O(n \log \Delta)$ time.
\end{lemma}
\begin{proof}
Since $T$ has $O(\log \Delta)$ levels, and the construction of each level can be done in $O(n)$ by~\Cref{thm:weak-net-const},	 the total construction time is $O(n\log \Delta)$.
\end{proof}

\begin{wrapfigure}{r}{0.45\textwidth}
	\vspace{-30pt}
	\begin{center}
		\includegraphics[width=0.45\textwidth]{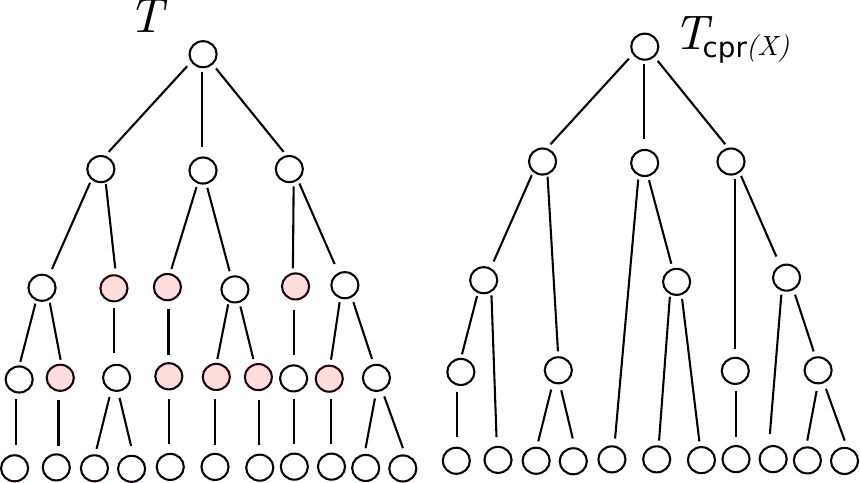}
	\end{center}
	\caption{\footnotesize{$X$ is the set of red points.}}
	\vspace{-5pt}
	\label{fig:compresed-net}
\end{wrapfigure}

Since a point $p$ can appear in many levels of a net tree $T$, to avoid confusion, we sometimes use $(p,i)$ to refer to the copy of $p$ at level $i$.

One operation that will be very useful in our construction is querying an ancestor of a given leaf at  a given level. Such queries can be done in $O(1)$ time using a level ancestor data structure with space $O(|V(T)|)$. However, it could be that the  number of nodes in $T$ is superlinear in $n$; note that $T$ only has $n$ leaves. Thus, it is more space-efficient to work with a \emph{compressed} version of $T$ that compresses nodes of degree $2$ in $T$. For a technical reason that will  be apparent later, we will not compress all degree-$2$ nodes but a subset of them.

\paragraph{$X$-compressed net tree.~} Given a  weak net tree $T$ and a subset of degree $2$ nodes $X$ in $T$, an \emph{ $X$-compressed  net tree}, denoted by $T_{\cpr(X)}$, is an edge-weighted tree obtained by sequentially contracting, in an arbitrary order, each vertex of $X$ to one of its neighbors (see~\Cref{fig:compresed-net}). That is, we replace any monotone maximal path of $T$, whose internal vertices are in $X$ only, with an edge between its endpoints. The weight of each new edge $(x,y)$ of $T_{\cpr(X)}$ is the distance between its endpoints in $T$.  Observe that the weight of every edge in $T_{\cpr(X)}$ is at most $\lceil \log \Delta \rceil$.  Note that we still label each vertex $p\in T_{\cpr(X)}$ with its level in $T$.

Since the compressed net tree has weights on its edges, we need a data structure to query the weighted level ancestor (WLA). In general, querying level ancestors in a weighted tree is a generalization of the predecessor search problem~\cite{KL07} that we cannot hope to have a data structure with linear space and constant query time. On the other hand, we show that it is possible to construct such a WLA data structure if the \emph{hop depth} (defined below) of the tree is polylogarithmic.

\paragraph{Weighted level ancestor data structures.~}  Let $T$ be an \emph{edge-weighted} tree with $n$ vertices rooted at $r$ where every edge $e\in T$ is assigned an integral weight $\omega(e)$.  The \emph{depth} of a node $u \in T$ is the distance $d_T(u,r)$. The depth of $T$ is the maximum depth over all vertices of $T$. The \emph{hop depth} of $u$ is the number of edges on the path from $u$ to $r$, and the hop depth of $T$ is the maximum hop depth over all vertices in $T$. A WLA data structure is a data structure that, given a query of the form $(u,d)$ where  $u \in V(T)$ and  $d \in \mathbb{Z}^+$, returns the \emph{lowest ancestor} of $u$ at depth at most $d$. (It could be possible that there is no ancestor of $u$ at depth exactly $d$.) 
In \Cref{subsec:WLA}, we construct a WLA data structure as stated in \Cref{lm:LA-weighted} below.

\begin{restatable}{lemma}{WLA}
	\label{lm:LA-weighted} Given a rooted, edge-weighted tree $T$ with $n$ vertices and hop depth $\polylog(n)$, there is an algorithm that runs in $O(n)$ time and constructs a level ancestor data structure with $O(n)$ space and $O(1)$ query time. 
\end{restatable}

We now have all necessary tools to prove~\Cref{thm:quasi-poly-spread}.  Let $(V,d_G)$ be the shortest path metric of the input planar graph $G$. Let $T$ be a $\eta$-weak net tree of $(V,d_G)$ with $\eta = O(1)$. We define a parameter $\tau$ as follows:
\begin{equation}\label{eq:def-tau}
\tau ~=~  (\frac{8}{\epsilon} + 12)\eta
\end{equation}

 For technical convenience that will be elaborated later, we extend the net tree $T$ to include \emph{negative levels}:
\begin{equation}
 N_{-\lfloor \log(\tau) \rfloor -1} = N_{-\lfloor \log(\tau) \rfloor} = \ldots = N_{-1} = N_0 = V 
\end{equation}
where we can still interpret each $N_{i}$ as a $(2^{i},\eta)$-net of $N_{i-1}$ when $-\lfloor \log(\tau) \rfloor \leq i \leq 0$. Note that the minimum pairwise distance is $1$. 

 Let $N_i$ be the weak $(2^i,\eta)$-net associated with  $i$-the level of $T$ for some $i \in [-\lfloor \log(\tau) \rfloor -1 , \lceil \log \Delta \rceil]$. Let:
\begin{equation}\label{eq:net-oracle}
N^{\tau}_i = \{v| v \in N_i \wedge (\exists u\not= v \in N_i, d_G(u,v) \leq \tau 2^{i})\}
\end{equation}  
That is, $N^{\tau}_i$ is the set of net points in $N_i$ that have at least one other net point within distance $\tau 2^{i}$.  While the set $N^{\tau}_i$ has several interesting properties that can be exploited to construct our distance oracle, it is unclear how to compute $N^{\tau}_i$ efficiently without considering distances between all pairs of points in $N_i$, which could costs $\Omega(n^2)$ time. We instead consider a bigger set $N_i^{\tau,+}$ defined below that can be computed in $O(n)$ time. (Using $N_i^{\tau,+}$ instead of $N_i^{\tau}$ makes the argument for space bound somewhat more complicated, but the bound we get remains the same.) When $i = -\lfloor \log(\tau) \rfloor -1$,  $N^{\tau}_i = \emptyset$. 

\begin{quote}
	\textbf{Construct $N_i^{\tau,+}$:~}
	Let $\mathcal{C}_i$ be a  $(\beta, s, \beta\tau 2^i)$-sparse cover of $G$ with $\beta = s  = O(1)$. For each set $C\in \mathcal{C}_i$, if $|C\cap N_i|\leq 1$, we remove $C$ from $\mathcal{C}_i$. Let $\mathcal{C}_i^{-}$ be the resulting cover. 	 We then define $N_i^{\tau,+}$ to be the set of all points $v \in N_i$ such that there exists $C\in \mathcal{C}_i^{-}$ containing $v$.
\end{quote}

\begin{lemma}\label{lm:NtauPlus}$N_i^{\tau,+}$ satisfies three properties: (a) 	$N_i^{\tau,+}$ is computable in $O(n)$ time, (b) $N_i^{\tau}\subseteq N_i^{\tau,+}$ and (c) for each $v\in N_i^{\tau,+}$, there exists $u\not= v\in N_i^{\tau,+}$ such that $d_G(u,v)\leq \beta \tau 2^{i}$.
\end{lemma}
\begin{proof}
	(a) By~\Cref{lm:sparsecover-time}, $\mathcal{C}_i$ can be computed in $O(n)$ time. Thus, the total time to construct $\mathcal{C}^{-}_i$ is $\sum_{C\in \mathcal{C}_i}O(|V(C)|)=  O(s\cdot n) = O(n)$. We then compute $N_i^{\tau,+}$ by considering each cluster $C \in  \mathcal{C}_i^{-}$ and adding to $N_i^{\tau,+}$ any point $v \in N_i\cap C$; this can be done in $\sum_{C\in \mathcal{C}_i^{-}}O(|V(C)|) = O(n)$ time.
	
	(b) Let $v$ be a vertex in $N^{\tau}_i$; we want to show that $v\in N_i^{\tau,+}$. By definition, there exists $u\in N_i$ such that $u \not = v$ and $d_G(u,v)\leq \tau 2^i$. That is, $u\in B_G(v,\tau 2^{i})$. By property (2) of sparse covers, $B_G(v, \tau 2^i) \subseteq C$ for some cluster $C\in \mathcal{C}_i$. That is, $\{u,v\}\subseteq V(C)$ and hence $C$ is in $\mathcal{C}_i^{-}$. By definition,  $v$ is in $N^{\tau, +}_i$ as desired.
	
	(c) By definition, if $v\in N_i^{\tau,+}$, there is a cluster $C\in \mathcal{C}_i^{-}$ containing $v$. Since $|C\cap N_i|\geq 2$ by construction, there exists $u\in (C\cap N_i)\setminus \{v\}$. As the diameter of $C$ is at most $ \beta\tau 2^i$, $d_G(u,v)\leq \beta\tau 2^i$, as desired.
\end{proof}

\paragraph{Distance oracle construction.~}  Let $T_2$ be the set of all degree-2 vertices in $T$. The oracle $\mathcal{D}$ consists of:

 \begin{enumerate}[noitemsep]
	\item Oracles $\{\mathcal{D}_i\}_{i = -\lfloor \log(\tau) \rfloor -1}^{\lceil \log \Delta\rceil}$ where $\mathcal{D}_i$ is the distance oracle for $N_i^{\tau,+}$ constructed by applying~\Cref{thm:restricted-oracle} with $d = (\tau/2-2\eta)2^{i}$ and $\alpha = \frac{2\tau}{\tau-4\eta}$. 

	\item A constant stretch oracle $\mathcal{D}_c$ for $G(V,E,w)$ constructed by applying~\Cref{thm:constant-stretch}; $\spc(\md_c) = O(n)$.
	\item An $X$-compressed net tree $T_{\cpr(X)}$ with 
	\begin{equation}\label{eq:X-compress}
	X = T_2 \bigcap (V(T) \setminus (\cup_{i=-\lfloor \log(\tau) \rfloor -1}^{\lceil \log \Delta\rceil }N_i^{\tau,+})
	\end{equation}
	We store at each node of $T_{\cpr(X)}$ its depth in the tree.
	\item A weighted level ancestor data structure $\mathcal{W}$ for $T_{\cpr(X)}$ by~\Cref{lm:LA-weighted}. 
\end{enumerate}

\paragraph{Oracle query.~}  Given a query pair $(u,v)$, $\mathcal{D}_c$ returns $d_{\mathcal{D}_c}(u,v)$ such that $d_G(u,v)\leq d_{\mathcal{D}_c}(u,v) \leq 5\cdot d_G(u,v)$ by~\Cref{thm:constant-stretch}. We define:
\begin{equation}\label{eq:def-bar-i}
\bar{i} = \lceil \log_2 \frac{2(1+\epsilon)d_{\mathcal{D}_c}(u,v)}{\tau-4\eta}\rceil
\end{equation}

Since $\log(\Delta) = \polylog(n)$, the hop depth of $T_{\cpr(X)}$ is $\polylog(n)$. Thus, each level ancestor query in $\mathcal{W}$ can be answered in $O(1)$ time by~\Cref{lm:LA-weighted} if we assume that $\bar i$ can be obtained from $d_{\mathcal{D}_c}(u,v)$ in $O(1)$ time. This assumption requires justification since we are not assuming that the logarithm nor the ceiling function can be computed in $O(1)$ time. However, we omit the details here since we will focus on essentially the same problem in Section~\ref{subsec:linear-oracle}.

For each $j \in [\bar{i}-5,\bar{i}]$, in $O(1)$ time, we query the ancestors $p_j(u)$ and $p_j(v)$ in $T_{\cpr(X)}$ at level $j$, or equivalently, at depth $\lceil \log \Delta \rceil + \lfloor \log(\tau) \rfloor + 1 - j$, of $u$ and $v$, respectively, using $\mathcal{W}$.  We then query the distance between $p_j(u)$ and $p_j(v)$ using oracle  $\mathcal{D}_j$ in $O(\alpha^2\epsilon^{-2}) = O(\epsilon^{-2})$ time. Here we use the fact that:

\begin{equation}\label{eq:alpha-up}
\alpha ~=~ \frac{2\tau}{\tau-4\eta} ~ \stackrel{\mbox{\tiny{Eq.~\ref{eq:def-tau}}}}{=} ~ \frac{2(8/\epsilon + 12)}{8/\epsilon + 8} ~=~ \frac{3\epsilon + 2}{\epsilon + 1} ~\leq~ 3.
\end{equation}
Finally, we return: 
\begin{equation}\label{eq:uv-approximate}
d_{\mathcal{D}}(u,v) \stackrel{\mathrm{def}}{=} \min_{j\in [\bar{i}-5,\bar{i}]}(d_{\mathcal{D}_j}(p_j(u), p_j(v)) + \eta 2^{j+2})
\end{equation}

Using a lookup table to precompute powers of $2$, we then get a total query time of $O(\epsilon^{-2})$. We note that there could be possible that the ancestors returned by $\mathcal{W}$ are not $p_j(u)$ and/or $p_j(v)$ because they may be compressed in  $(T_{\cpr(X)}, \varphi_{\cpr(X)})$; we can easily check if this is the case by comparing the depth of the returned ancestors and the desired depth  in $T$, which is $\lceil \log \Delta \rceil + \lfloor \log(\tau) \rfloor + 1 - j$. In this case, we will exclude $j$ in computing the approximate distance in~\Cref{eq:uv-approximate}.

To complete the proof of~\Cref{thm:quasi-poly-spread}, it remains to analyze the space used by the oracle (\Cref{subsec:Qasi-space}), stretch (\Cref{subsec:Qasi-time}), and preprocessing time (\Cref{subsec:Qasi-prepro}).

\subsubsection{Space}\label{subsec:Qasi-space}

First, we bound the total space of  $\{\mathcal{D}_i\}_{i = -\lfloor \log(\tau) \rfloor -1}^{\lceil \log \Delta\rceil}$. Our proof is inspired by the amortized analysis by Chan et al.~\cite{CGMZ16} to bound the number of edges of $(1+\epsilon)$-spanners in doubling metrics (see Lemma 5.10 in~\cite{CGMZ16}). In their proof, they heavily used the packing property of doubling metrics -- which is not available in our setting -- and edge orientation to bound \emph{in-degree} of each net point. We instead carefully construct a forest of net points and base the charging argument on the structure of the forest.

\begin{wrapfigure}{r}{0.5\textwidth}
	\vspace{-20pt}
	\begin{center}
		\includegraphics[width=0.5\textwidth]{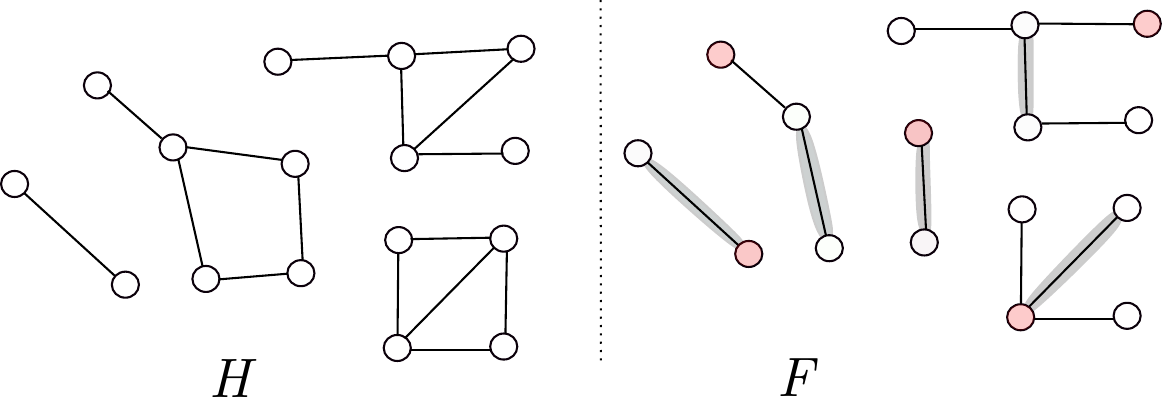}
	\end{center}
	\caption{\footnotesize{Shaded edges in $F$ are edges in a maximal matching}. Leader vertices are highlighted in red.}
	\vspace{-25pt}
	\label{fig:charging}
\end{wrapfigure}

\begin{lemma}\label{lm:total-space-netoracle}
	$ \sum_{i=-\lfloor \log(\tau) \rfloor -1 }^{\lceil \log \Delta \rceil} \spc(\mathcal{D}_i) = O(n\epsilon^{-2}\log\frac{1}{\epsilon})$.
\end{lemma}
\begin{proof}
	Recall that $\alpha \leq  3$ by~\Cref{eq:alpha-up}.  Observe by~\Cref{thm:restricted-oracle} that  each $\mathcal{D}_i$ has space $O(|N_i^{\tau,+}|\epsilon^{-2}\alpha^2) = O(|N_{i}^{\tau,+}|\epsilon^{-2})$. However, in the worst case, one net point can appear in $\Omega(\lceil \log \Delta \rceil)$ levels of $T$. To show a linear bound, we use an amortized argument.  For each point $v \in V$, let $i^{*}(v)$ be the highest level of $v$ in the net tree $T$.

	We construct an \emph{unweighted} graph $H_i$ in which $V(H_i) = N_i^{\tau,+}$ and there is an edge between two vertices $u,v$ of $H_i$ if  and only if $d_G(u,v)\leq \beta\tau 2^{i}$. By Item (c) in~\Cref{lm:NtauPlus}, $H_i$ has no isolated vertex. Let $F$ be a spanning forest of $H_i$ such that every tree in $F$ has at least two vertices and diameter at most $3$ (see~\Cref{fig:charging}). $F$ can be constructed by  choosing a maximal matching of $H_i$ and then assigning each unmatched vertex to an (arbitrary) matched neighbor.  For each tree $X$ in $F$, we designate a vertex $x  = \argmax_{v\in X} i^*(v)$ as the leader; ties are broken arbitrarily and consistently. Let $Y$ be the set of leaders in $F$. Observe that $|Y| \leq |N_i^{\tau,+}|/2$ since $H_i$ has no isolated vertices. We charge the space cost of $\mathcal{D}_i$ equally to all non-leader vertices; these are in $N_i^{\tau,+}\setminus Y$. The number of words that each vertex is charged to is:
	\begin{equation}\label{eq:charged-cost}
	\frac{O(N_{i}^{\tau,+} \epsilon^{-2})}{|N_i^{\tau,+}\setminus Y|} \leq  \frac{O(N_{i}^{\tau,+} \epsilon^{-2})}{|N_i^{\tau,+}|/2} = O(\epsilon^{-2}) 
	\end{equation}
	
	To complete the proof, it remains to show that each vertex is charged at most $O(\log \frac{1}{\epsilon})$ times. Suppose that $i$ is the first level that a vertex $v$ is charged to. Then $v$ belongs to some tree $X$ of $F$ whose leader $x$ has $i^{*}(x) \geq i^{*}(v)$. Since $X$ has diameter at most $3$, $d_G(x,v)\leq 3\beta\tau 2^{i}$. Thus, at level $i' = i + \lceil \log(3\beta \tau) \rceil+1$, $d_G(x,v) <  2^{i'}$ and hence, at most one of them ``survives" at level $i'$. That is, $|N_{i'}\cap \{x,v\}| \leq 1$. Thus, $i^*(v) \leq i + \lceil \log(3\beta\tau) \rceil+1$. This implies that the number of levels that  $v$ will be charged to is at most $ \lceil \log(3\beta\tau) \rceil+2 = O(\log \frac{1}{\epsilon} + \log(\eta) + \log(\beta)) = O(\log \frac{1}{\epsilon})$ since $\eta = O(1)$ and $\beta = O(1)$.
\end{proof}

\noindent Using the same amortized argument, we obtain the following  corollary of~\Cref{lm:total-space-netoracle}.

\begin{corollary}\label{cor:net-size-all} $\sum_{i=-\lfloor \log(\tau) \rfloor -1 }^{\lceil \log \Delta \rceil} |N_{i}^{\tau,+}| = O(n \log\frac{1}{\epsilon})$.
\end{corollary}

Next, we bound the size of $T_{\cpr(X)}$ and $\mathcal{W}$. 

\begin{lemma}\label{lm:size-compression} $\spc(T_{\cpr(X)})  + \spc(\mathcal{W})  =  O(n\log\frac{1}{\epsilon})$.
\end{lemma}
\begin{proof}
	Observe that $T$ has $n$ leaves. Thus, the number of vertices of degree at least $3$ in $T$ is bounded by $n$. By~\Cref{eq:X-compress} and~\Cref{cor:net-size-all}, the number of degree-2 nodes in $T_{\cpr(X)}$ is at most $|\cup_{i=-\lfloor \log(\tau) \rfloor -1}^{\lceil \log \Delta\rceil }N_i^{\tau,+}| = O(n\log \frac{1}{\epsilon})$; thus $\spc(T_{\cpr(X)})  = O(n \log\frac{1}{\epsilon})$. By ~\Cref{lm:LA-weighted}, $\spc(\mathcal{W}) = O(n \log\frac{1}{\epsilon})$; this implies the lemma.
\end{proof}

By~\Cref{lm:total-space-netoracle} and~\Cref{lm:size-compression},  $\spc(\md) = O(n\eps^{-2}\log \frac{1}{\epsilon})$ as claimed in~\Cref{thm:quasi-poly-spread}. 

\subsubsection{Stretch}\label{subsec:Qasi-time}

Before we analyze the stretch of the distance returned by $\mathcal{D}$, we present several important properties of the weak net tree $T$.

\begin{wrapfigure}{r}{0.45\textwidth}
	\vspace{-20pt}
	\begin{center}
		\includegraphics[width=0.45\textwidth]{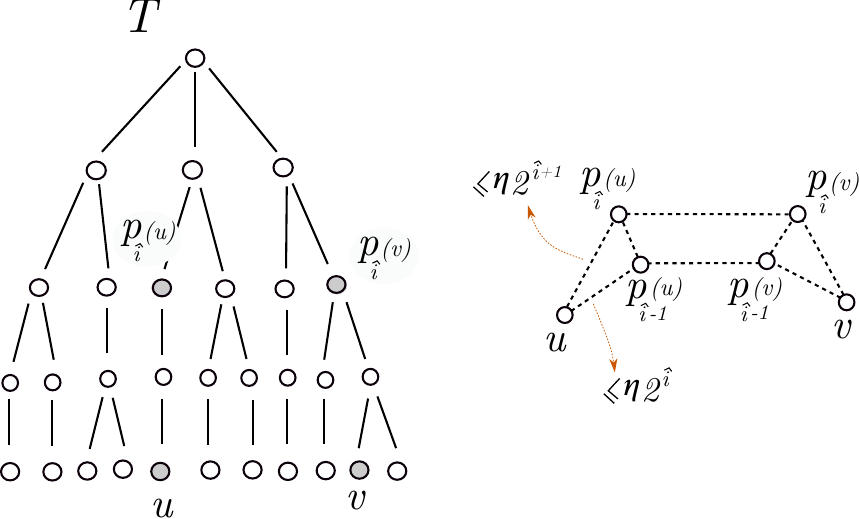}
	\end{center}
	\caption{\footnotesize{Illustration for the proof of ~\Cref{lm:parent-query}.}}
	\vspace{-40pt}
	\label{fig:parent-dist}
\end{wrapfigure}

For each vertex $v\in V$, let $p_i(v)$ be the ancestor of $v$ at level $i$ of the net tree $T$. We have:

\begin{claim}\label{clm:dist-v-parent}$d_G(v,p_i(v)) \leq \eta 2^{i+1}$.
\end{claim}
\begin{proof}
	Let $v = v_{-\lfloor \log(\tau) \rfloor -1},v_{-\lfloor \log(\tau) \rfloor},\ldots, v_{i}  = p_{i}(v)$ be a path from $v$ to $p_{i}(v)$ in $T$. Observe that $d(v_j,v_{j-1}) \leq \eta 2^j$ since $N_j$ is a weak  $(2^j,\eta)$-net of $N_{j-1}$ for every $j\in [-\lfloor \log(\tau) \rfloor,i]$. Thus, $$d_G(v,p_{i}(v)) ~\leq~ \sum_{j=-\lfloor \log(\tau) \rfloor}^{i} \eta 2^j ~\leq~ \eta 2^{i + 1},$$ 
	as desired.
\end{proof}

Next we show that the distance between the ancestors of $u$ and $v$ at some level $\hat{i}$ is close to $d_G(u,v)$.

\begin{lemma}\label{lm:parent-query}
	Given any two vertices $u\not= v \in V$, denote by $\hat{i}$ the lowest level where $d_G(p_{\hat{i}}(u),p_{\hat{i}}(v))\leq 2^{\hat{i}}\tau$.	Then
	\begin{equation}
	(1-\epsilon)d_G(u,v)\leq d_G(p_{\hat{i}}(u),p_{\hat{i}}(v)) \leq (1+\epsilon)d_G(u,v)
	\end{equation}
	Furthermore, $d_G(p_{\hat{i}}(u),p_{\hat{i}}(v))\geq (\tau/2-2\eta)2^{\hat{i}}$.
\end{lemma}
\begin{proof}
	(See~\Cref{fig:parent-dist}.) Recall that $\tau = (\frac{8}{\epsilon}+12)\eta$ by~\Cref{eq:def-tau}. By~\Cref{clm:dist-v-parent} and the triangle inequality, we have:
	\begin{equation}\label{eq:uv-parent-triangle}
	d_G(u,v) - 4\eta\cdot 2^{\hat{i}} \leq d_G(p_{\hat{i}}(u),p_{\hat{i}}(v))\leq d_G(u,v) + 4\eta\cdot 2^{\hat{i}} 
	\end{equation}
	By the minimality of $\hat{i}$, $d_G(p_{\hat{i}-1}(u), p_{\hat{i}-1}(v))> \tau 2^{\hat{i}-1}$. By the triangle inquality,  
	\begin{equation*}
	d_G(p_{\hat{i}}(u),p_{\hat{i}}(v)) ~\geq~ d_G(p_{\hat{i}-1}(u),p_{\hat{i}-1}(v)) - 2\eta 2^{\hat{i}} ~>~\tau 2^{\hat{i}-1}  - 2\eta\cdot 2^{\hat{i}} ~=~ (\tau/2-2\eta)2^{\hat{i}}.
	\end{equation*}
	Next, we show that $4\eta 2^{\hat{i}}\leq \epsilon d_G(u,v)$; this will imply the lemma.   By~\Cref{eq:uv-parent-triangle},
	\begin{equation*}
	d_G(u,v) \geq d_G(p_{\hat{i}}(u),p_{\hat{i}}(v)) - 4\eta \cdot 2^{\hat{i}} \geq (\tau/2 - 6\eta)2^{\hat{i}}  = \frac{4\eta\cdot 2^{\hat{i}}}{\epsilon}
	\end{equation*}
	as desired. 
\end{proof}

We are now ready to show that the stretch of $\mathcal{D}$ is $(1 + 4\epsilon)$; we can recover stretch $(1+\epsilon)$ by scaling $\epsilon \leftarrow \epsilon/4$ at the cost of a constant factor in the query time and the space bound.

\begin{lemma}\label{lm:qasi-stretch} The distance $d_{\mathcal{D}}(u,v)$ returned by oracle $\mathcal{D}$ for a query pair $(u,v)$ satisfies:
	\begin{equation*}
	d_G(u,v)\leq d_{\mathcal{D}}(u,v)\leq (1+4\epsilon) d_G(u,v)
	\end{equation*}
\end{lemma}
\begin{proof}
Let $\hat{i}$ be the lowest level  of the net tree $T$ where $d_G(p_{\hat{i}}(u),p_{\hat{i}}(v))\leq 2^{\hat{i}}\tau$.  Recall that $\bar{i}$ is a parameter defined in~\Cref{eq:def-bar-i}. In the following claim, we show that $\bar{i}$ is close to $\hat{i}$.

\begin{restatable}{claim}{IBarVSIHat}\label{clm:ibar-vs-ihat}
	$\hat{i} \leq  \bar{i} \leq \hat{i}+ 5$ when $\eps\leq \frac{1}{3}$.
\end{restatable}
\begin{proof}
	By~\Cref{lm:parent-query}, $d_G(p_{\hat{i}}(u),p_{\hat{i}}(v))\geq (\tau/2 -2\eta) 2^{\hat{i}}$. Observe that:
	\begin{equation}\label{eq:ihat-vs-ibar-lb}
	d_{\mathcal{D}_c}(u,v)~\geq~ d_{G}(u,v)~\overset{\mbox{\footnotesize{\Cref{lm:parent-query}}}}{\geq}~ \frac{d_G(p_{\hat{i}}(u),p_{\hat{i}}(v))}{1+\epsilon} ~\geq~ \frac{(\tau/2 -2\eta) 2^{\hat{i}}}{1+\eps}
	\end{equation} 
	which implies $\bar{i}\geq \hat{i}$ by the definition in~\Cref{eq:def-bar-i}. Since $d_G(p_{\hat{i}}(u),p_{\hat{i}}(v))\leq \tau 2^{\hat{i}}$ by definition, it holds that:
	\begin{equation*}
	d_{\mathcal{D}_c}(u,v)~\overset{\mbox{\footnotesize{\Cref{thm:constant-stretch}}}}{\leq} ~ 5d_{G}(u,v)~\overset{\mbox{\footnotesize{\Cref{lm:parent-query}}}}{\leq}~ \frac{5d_G(p_{\hat{i}}(u),p_{\hat{i}}(v))}{1-\epsilon} ~\leq~\frac{5\tau 2^{\hat{i}}}{1-\epsilon}
	\end{equation*} 
	This implies:
	\begin{equation*}
	\begin{split}
	\bar{i} &\stackrel{\mbox{\tiny{Eq.~\ref{eq:def-bar-i}}}}{=} \lceil \log_2 \frac{2(1+\epsilon)d_{\mathcal{D}_c}(u,v)}{\tau-4\eta}\rceil  ~\leq~  \lceil \log_2 \frac{10(1+\epsilon)\tau 2^{\hat{i}}}{(1-\epsilon)(\tau-4\eta)}\rceil\\
	&= \hat{i} + \lceil\log_2  \frac{10(1+\eps)(8/\epsilon+12)}{(1-\epsilon)(8/\epsilon + 8)}\rceil ~\leq~ \hat{i}+5
	\end{split}
	\end{equation*}
	when $\epsilon \leq \frac{1}{3}$.
\end{proof}

By~\Cref{clm:ibar-vs-ihat}, if we consider every $j\in [\bar{i} - 5,\bar{i}]$, there exists a value of $j$ such that $j = \hat{i}$. Thus, to show that the stretch is $1+4\epsilon$, it suffices to show the following:
\begin{itemize}[noitemsep]
	\item (a) for \emph{every} $j \in [\bar{i} - 5,\bar{i}], d_{\mathcal{D}_j}(p_j(u), p_j(v)) + \eta 2^{j+2} \geq d_G(u,v)$. This implies that the distance returned by $\md$ is always at least $d_G(u,v)$.
	\item (b) $d_{\mathcal{D}_{\hat{i}}}(p_{\hat{i}}(u), p_{\hat{i}}(v)) + \eta 2^{\hat{i}+2} \leq (1+4\epsilon)d_G(u,v)$. This implies that the minimum in~\Cref{eq:uv-approximate} is at most $(1+4\epsilon)d_G(u,v)$.
\end{itemize}

\noindent First, we show (a). Consider any $j \in [\bar{i}-5,\bar{i}]$. By the triangle inequality, we have that:
\begin{equation*}
d_{\mathcal{D}_j}(p_j(u), p_j(v)) + \eta 2^{j+2} \stackrel{\mbox{\footnotesize{\Cref{clm:dist-v-parent}}}}{\geq} d_G(p_j(u), p_j(v)) + d_G(u,p_j(u)) +  d_G(v,p_j(v))  \geq d_G(u,v)
\end{equation*}
Now we show (b). By~\Cref{lm:parent-query}, 
\begin{equation}\label{eq:uv-vs-2i}
\begin{split}
d_G(u,v)\geq \frac{d_G(p_{\hat{i}}(u), p_{\hat{i}}(v))}{1+\epsilon}  \geq \frac{(\tau/2-2\eta)}{1+\epsilon} 2^{\hat{i}} = \frac{\eta 2^{\hat{i}+2}}{\epsilon}
\end{split}
\end{equation}
Next, we claim that:
\begin{claim}\label{clm:oracle-return}
	The distance  $d_{\mathcal{D}_{\hat{i}}}(p_{\hat{i}}(u), p_{\hat{i}}(v))$ returned by oracle $\mathcal{D}_{\hat{i}}$ satisfies:
	\begin{equation*}
   d_{\mathcal{D}_{\hat{i}}}(p_{\hat{i}}(u), p_{\hat{i}}(v)) \leq (1+\eps)d_G(p_{\hat{i}}(u), p_{\hat{i}}(v)) 
	\end{equation*}
\end{claim}
\begin{proof}
	First, we observe by definition of $\hat{i}$ that $d_G(p_{\hat{i}}(u), p_{\hat{i}}(v)) \leq \tau 2^{\hat{i}}$. Thus, $\{p_{\hat{i}}(u), p_{\hat{i}}(v)\}\subseteq N^{\tau}_{\hat{i}}$. By~\Cref{lm:NtauPlus}, $\{p_{\hat{i}}(u), p_{\hat{i}}(v)\} \subseteq N^{\tau,+}_{\hat{i}}$. That is, querying the distance  in $\mathcal{D}_{\hat{i}}$ between  $p_{\hat{i}}(u)$ and $p_{\hat{i}}(v)$   is a valid query; however, the distance returned by  $\mathcal{D}_{\hat{i}}$ may have a large multiplicative stretch.
	 
	To show the (multiplicative) stretch of $\mathcal{D}_{\hat{i}}$ is $(1+\epsilon)$,   by~\Cref{thm:restricted-oracle}, it suffices to show that $d_G(p_{\hat{i}}(u), p_{\hat{i}}(v)) \in [d,\alpha d]$ where $d = (\tau/2-2\eta)2^{\hat{i}}$ and $\alpha = \frac{2\tau}{\tau - 4\eta}$.	
	By~\Cref{lm:parent-query}, $d_G(p_{\hat{i}}(u), p_{\hat{i}}(v)) \geq d$. By definition of $\hat{i}$, $d_G(p_{\hat{i}}(u), p_{\hat{i}}(v)) \leq \tau 2^{\hat{i}} = \alpha d$ as desired.
\end{proof}

\noindent We observe that:
\begin{equation*}
\begin{split}
d_{\mathcal{D}_{\hat{i}}}(p_{\hat{i}}(u), p_{\hat{i}}(v)) + \eta 2^{\hat{i}+2} & \leq  (1+\epsilon)d_G(p_{\hat{i}}(u), p_{\hat{i}}(v)) + \eta 2^{\hat{i}+2} \qquad \mbox{(by~\Cref{clm:oracle-return})}  \\
&\leq(1+\epsilon)^2 d_G(u,v) + \eta 2^{\hat{i}+2}  \qquad \mbox{(by~\Cref{lm:parent-query})}\\
&\stackrel{\mbox{\tiny{Eq.~\ref{eq:uv-vs-2i}}}}{\leq} (1+\epsilon)^2 d_G(u,v) + \epsilon d_G(u,v) \leq (1+4\epsilon)d_G(u,v),
\end{split}
\end{equation*}
as desired. 
\end{proof}

\subsubsection{Preprocessing Time}\label{subsec:Qasi-prepro}

Let $L \stackrel{\mbox{\tiny{def.}}}{=} (\log \Delta + \log(\tau)) ~=~ O(\log^{c}(n) + \log(\frac{1}{\epsilon}))$ be the number of levels of the weak net tree $T$.  By~\Cref{lm:weak-net-const}, $T$ can be computed in $O(nL)$ time. By~\Cref{lm:NtauPlus}, computing $\mathcal{N} = \{N_i^{\tau,+}\}^{\lceil \log \Delta\rceil }_{i= - \lfloor\tau \rfloor-1}$ also takes $O(nL)$ time. 
 
 Given $\mathcal{N}$, constructing  $\{\md_i\}^{\lceil \log \Delta\rceil }_{i= - \lfloor\tau \rfloor-1}$ takes $O(\epsilon^{-3}\alpha^3n\log^3 n L) = O(\epsilon^{-3}n\log^3 n L)$ time by~\Cref{thm:restricted-oracle}; note that $\alpha \leq 3$. By~\Cref{thm:constant-stretch}, constructing $\md_c$ takes $O(n)$ time.

Since $|V(T)| = O(n L)$, given $\mathcal{N}$, we can construct $X$-compressed net tree $T_{\cpr(X)}$ in time $O(nL)$. The weighted level ancestor data structure $\mathcal{W}$ can be constructed in time $O(|V(T_{\cpr(X)})|) = O(nL)$ by~\Cref{lm:LA-weighted}.  Thus, the total time to construct $\md$ is:

\begin{equation*}
O(\epsilon^{-3}n\log^3 n L) = O(\epsilon^{-3}n\log^{c+3}n \log \frac{1}{\epsilon}),
\end{equation*}
as claimed in~\Cref{thm:quasi-poly-spread}.

\subsubsection{A Weighted Level Ancestor Data Structure}\label{subsec:WLA}

In this section, we construct a weighted level ancestor (WLA) data structure as claimed in \Cref{lm:LA-weighted}; we restate the lemma below.

\WLA*

Our construction combines several ideas that were developed for the WLA problem and the \emph{predecessor search} problem. Specifically, we will use a data structure of P\u{a}tra\c{s}cu and Thorup~\cite{PT06} for predecessor search in sets of polylogarithmic size, a data structure developed (implicitly) in the work of Gawrychowski, Lewenstein, and Nicholson~\cite{GLN14} for WLA queries in trees of logarithmic size, and a decomposition technique of Kopelowitz and Lewenstein~\cite{KL07} to reduce the WLA problem to the predecessor search problem.

Before getting into details, we review the predecessor search problem. In this problem, we are given a set $S$ containing elements from the set $[n]$. We are tasked to construct a data structure that answers the following \emph{predecessor query}\footnote{By sorting $S$ in an increasing order of the elements, each predecessor query can be answered in $O(\log n)$ time using binary search. The goal is to construct a data structure with $O(n)$ space and $o(\log n)$, or ideally $O(1)$, query time.} efficiently: given a number $x \in [n]$, find the largest integer $p_x \in S$, called the predecessor of $x$, such that $p_x \leq x$. Observe that the predecessor search problem is equivalent to the WLA problem on a path graph.  Our data structure uses the following data structure of P\u{a}tra\c{s}cu and Thorup~\cite{PT06} for predecessor search:

\begin{lemma}[P\u{a}tra\c{s}cu-Thorup~\cite{PT06}]\label{lm:PT-Predecessor-Small} Let $S$ be a set of $s$ elements from the set $[n]$ where $s = \polylog(n)$. Suppose that the machine word's size is $\mword = \Omega(\log(n))$. We can construct in $O(s\log(n))$ time a predecessor search data structure $\mathcal{P}_S$ for $S$ with size $O(s)$ that can answer each  predecessor query in $O(1)$ time.
\end{lemma}

We remark that while the construction time was not explicitly mentioned in the work of P\u{a}tra\c{s}cu and Thorup~\cite{PT06}, it can be seen from the construction (Section 5.2 in the full version of~\cite{PT06}) that the running time is $O(s\log n)$, since they only used two data structures that can be constructed in $O(s\log n)$ time: fusion trees~\cite{FW93}  and $B$-trees.

Another data structure that we use is a WLA data structure for trees of polylogarithmic sizes that is implicit in the work of Gawrychowski, Lewenstein, and Nicholson~\cite{GLN14}. For completeness, we include their proof below.

\begin{lemma}[Gawrychowski-Lewenstein-Nicholson~\cite{PT06}]\label{lm:WLA-GLN} Let $T$ be an edge-weighted rooted tree  of size $k = O(\log(n))$. Suppose that the machine word's size is $\mword \geq k$. We can construct in $O(k\log(n))$ time a WLA data structure $\mathcal{W}_T$ for $T$ with size $O(k)$ that can answer each WLA query in $O(1)$ time.
\end{lemma}
\begin{proof} Let $L$ be a list of vertices in $T$ that is sorted in non-decreasing order of the depth. For each vertex $v \in T$, we create a word $B_v$. We mark $i$-th bit of $B_v$ with value $1$ if the $i$-th vertex in $L$ is an ancestor of $v$ and with value $0$ otherwise. We then create a predecessor data structure $\mathcal{P}_{T}$ to perform predecessor search for the depth of vertices in $T$; duplicate depths are removed in  $\mathcal{P}_{T}$. By \Cref{lm:PT-Predecessor-Small}, $\mathcal{P}_T$ has $O(k)$ space, $O(1)$ query time, and $O(k \log n)$ construction time.

Given a WLA query $(v,d)$, we first perform a search in $\mathcal{P}_{T}$ to find a predecessor of $d$ of value $d_i$ and a $i$-th vertex $u_i$ in $L$ that has depth $d_i$. Note that there could be multiple vertices that have the same depth $d_i$, and it suffices for the data structure to return any of these vertices.  Then we look for the largest index $j \leq i$ in $B_v$ that has the $j$-th bit set to $1$. This operation can be done in $O(1)$ time using $O(1)$ bitwise operations\footnote{$x\leftarrow x\&((-1) \ll i)$. \newline\hspace*{1.7em} $x\leftarrow x\&(-x)$.}. Thus, the total query time is $O(1)$, as claimed.
\end{proof}

The last tool we need for our data structure is the \emph{centroid decomposition}.  Given a tree $T$, we denote by $s_T(v)$ the size (number of vertices) of the subtree rooted at $v$. A centroid path of rank $i$ for some integer $i$ is a maximal path containing every vertex $v$ such that $2^i \leq s_T(v) < 2^{i+1}$. For a centroid path $\pi$, we denote by $\rank(\pi)$ the rank of $\pi$ and $\head(\pi)$ the vertex $x\in \pi$ of minimum depth.  Observe by the definition that a centroid path is a \emph{monotone path}: for any two vertices $x,y\in \pi$, either $x$ is an ancestor of $y$ or $y$ is an ancestor of $x$. For each vertex $u \in T$, we say that $\pi$ is an \emph{ancestral centroid path} of $u$ if $\pi$ contains an ancestor of $u$. By definition, the centroid path containing $u$ is also an ancestral centroid path of $u$. The following lemma is folklore.

\begin{lemma}\label{lm:centrod-decomp} Given a tree $T$ with $n$ vertices, we can find in $O(n)$ time a decomposition of $T$ into centroid paths such that every vertex  $u\in T$ has $O(\log n)$ ancestral centroid paths.   
\end{lemma}

\begin{proof}[Proof of \Cref{lm:LA-weighted}] Let $w$ be the machine word size; $w = \Omega(\log n)$. Let $r$ be the root of $T$. For each vertex $v \in T$, let $T_v$ be the  subtree of $T$ rooted at $v$. Let $U$ be the set of vertices $u \in V(T)$ such that $s(u) \geq w+1$ and for every child $v$ of $u$, $s_T(v) \leq w$. Observe by the definition that $|U| \leq \frac{n}{w} = O(\frac{n}{\log n})$. Let $T^+_U$ be the subtree of $T$ induced by $U$ and all ancestors of vertices in $U$; $T^+_U$  has $U$ as the set of leaves. We construct the data structure in four steps:
	
\begin{itemize}
	\item \textbf{Step 1.~} For each vertex $v\in T^+_U$, we store a pointer to an (arbitrary) leaf $u\in U$ in the subtree rooted at $v$ of $T^+_U$. Next,	we construct a centroid decomposition $\mathcal{C}$ for $T^+_U$ as in \Cref{lm:centrod-decomp}. For each (leaf) vertex $u \in U$, we define $H_u = \{\head(\pi): \pi \mbox{ is an ancestral centroid path of }u\}$. We construct a predecessor search data structure $\mathcal{P}_{H_u}$ for $H_u$ with keys being the depths in $T^+_U$ of the vertices. By \Cref{lm:PT-Predecessor-Small} and \Cref{lm:centrod-decomp}, $\mathcal{P}_{H_u}$ has space $O(\log n)$ and support $O(1)$ query time for the predecessor search. Thus, the total space of the predecessor search data structures for all vertices in $U$ is $O(n)$.
	\item \textbf{Step 2.~}  For each centroid path $\pi$ in the centroid decomposition of $T^+_U$, we observe that $\pi$ has $\polylog(n)$ number of vertices since the depth of $T$ is $\polylog(n)$.  We construct a predecessor data structure $\mathcal{P}_{\pi}$ for vertices in $\pi$ with keys being the depths of the vertices. By \Cref{lm:PT-Predecessor-Small}, $\mathcal{P}_{\pi}$ has size $O(|V(\pi)|)$ and supports $O(1)$ query time.  Thus, the total space of all $\{\mathcal{P}_{\pi}\}_{\pi \in \mathcal{C}}$ is $O(n)$.
	
	\item \textbf{Step 3.~} Let $F$ be the forest induced by vertices in $V(T)\setminus V(T^+_U)$. By the definition of $U$, every tree in $F$ has at most $w$ vertices. For each vertex $x\in F$, we denote by $r_x$ the root of the subtree $\bar{T}$ in $F$ containing $x$. We store at $x$ a pointer to $r_x$. For each tree $\bar{T} \in F$, we construct a data structure $\mathcal{W}_{\bar{T}}$ by \Cref{lm:WLA-GLN}; $\mathcal{W}_{\bar{T}}$ has space $O(|V(\bar{T})|)$ and answers each WLA query in $O(1)$ time. The total space of all $\{\mathcal{W}_{\bar{T}}\}_{\bar{T}\in F}$  is $O(n)$.
	\item \textbf{Step 4.~}  We construct an LCA data structure  $\mathcal{LCA}_T$ for $T$ with $O(n)$ space, $O(1)$ query time, and $O(n)$ construction time.
\end{itemize}	

Our final data structure is the union of all data structures constructed in four steps above; the space bound $O(n)$ follows directly from the construction. 

We now focus on answering a query. Let $(v,d)$ be a WLA query. We assume that $d \leq d_T(r,v)$; otherwise, we return $v$. If $v\in F$, we search for a WLA of $v$ in the tree $\bar{T}$ that contains $v$ by giving a query $(v,d-d_T(r,r_v))$ to the data structure  $\mathcal{W}_{\bar{T}}$; the returned ancestor is the lowest ancestor of $v$ with depth at most $d$ in $T$.  If $v\not\in F$, then $v\in T^+_U$. Let $u$ be a leaf vertex in $T^+_U$ that $v$ has a pointer to. Observe that the lowest ancestor of depth at most $d$ of $v$ is also the lowest ancestor of depth at most $d$ of $u$. Thus, we only need to focus on querying the lowest ancestor of depth at most $d$ of $u$. 

We first find the lowest ancestor in $H_u$ of $u$ with depth at most $d$  by querying the data structure $\mathcal{P}_{H_u}$ constructed in Step 1. Let $x$ be the returned vertex; $x = \head(\pi)$ for some centroid path $\pi$. We then find the vertex $y$ whose depth is a predecessor of $d$  by querying the data structure  $\mathcal{P}_{\pi}$ constructed in Step 2. If $y$ is an ancestor of $u$, which can be checked in $O(1)$ time using $\mathcal{LCA}_T$, we return $y$. Otherwise, let $\pi'$ be the ancestral centroid path of $u$ such that $\head(\pi')$ is closest to $\head(\pi)$. Observe that the parent of $\head(\pi')$, denoted by $z$, is a vertex on $\pi$ and an ancestor of $y$. Furthermore, the lowest ancestor of depth at most $d$ of $u$ is $z$. Thus, we return $z$ in this case. In all cases, a WLA query can be answered in $O(1)$ time; this completes the proof of \Cref{lm:LA-weighted}.
\end{proof}

\subsection{Removing the Spread Assumption}\label{subsec:linear-oracle}

In this subsection, we remove the assumption on the spread using the contraction technique of Kawarabayashi, Sommer, and Thorup~\cite{KST13}. The same technique was used in previous results~\cite{GX19,CS19}. The idea is to have for each scale $r \in \{2^0, 2^{1}, \ldots, 2^{\lceil  \log \Delta \rceil}\}$, a graph $G_r$ obtained from $G$ by removing every edge of weight more than $r$ and contracting every edge of weight less than $\frac{r}{n^2}$. Then, a distance oracle is constructed for each $G_r$ and the total space bound (typically of $\Omega(n\log n)$) follows from the observation that each edge $e\in G$ belongs to at most $O(\log n)$ different graphs $G_r$ (for different values of $r$). 

To show a linear space bound, we need the scale $r$ to be bigger, so that each edge $e \in G$ belongs to at most $O(1)$ graphs $G_r$; we naturally choose the scale to be $\{n^{0}, n^{4}, \ldots, n^{4i}, \ldots, n^{\lceil \log_{n^4}\Delta \rceil}\}$. To construct each $G_r$, we apply the same idea: delete every edge of weight more than $n^{4}r$ and contract every edge of weight at most $\frac{r}{n^2}$. It follows directly from the construction that each edge $e$ belongs to at most 2 graphs $G_r$. The issue now is that, while the spread of $G_r$ is polynomial in $n$, it could be exponential in the \emph{number of vertices of $G_r$}. In this case, we use the bit-packing technique that we formalize in the following lemma.

\begin{lemma}\label{lm:small-graphs} Let $G(V,E,w)$ be an undirected and edge-weighted planar graph with $n$ vertices. If the machine word size is $\omega = \Omega(\log n^{3})$, then in $O(\epsilon^{-2}n\log^3 n)$ time, we can construct a $(1+\epsilon)$-approximate distance oracle for $G(V,E,w)$ with $O(n\epsilon^{-1})$ space and $O(\epsilon^{-1})$ query time.
\end{lemma}

\begin{proof}[Proof of~\Cref{lm:small-graphs}]
	We assume that vertices of $G$ are indexed from $\{1,2,\ldots, n\}$ . We also assume that $\epsilon \geq \frac{1}{n}$; otherwise, we can just store all the pairwise distances to achieve the claimed bounds. In the labeling scheme of Thorup~\cite{Thorup04} and Klein~\cite{Klein02} in~\Cref{thm:Thorup-labeling},  each vertex $u$ must maintain distances to a set $S_u \subseteq V(G)$ of $O((\log n)\epsilon^{-1})$ vertices of $G$. Since each distance accounts for one word, the total number of words is $O(n\epsilon^{-1}\log n)$. 
	
	Instead of maintaining exact distances, we store the \emph{encoding} of the approximate distance from $u$ to each vertex $v\in S_u$ using $O(\log n)$ bits as follows. Let $e_{uv}$ be an edge such that:
	\begin{equation}\label{eq:def-euv}
	\frac{d_G(u,v)}{n} \leq w(e_{uv}) \leq 2d_G(u,v)
	\end{equation}
	Edge $e_{uv}$ exists since the heaviest edge on $\SP_G(u,v)$ satisfies~\Cref{eq:def-euv}. (The problem with the heaviest edge is that it could take up to $\Omega(n)$ time to find.) 
	
	Let $\rho_{u,v} \stackrel{\mbox{\tiny{def.}}}{=} \lceil \frac{d'(u,v)}{ \epsilon w(e_{uv})} \rceil$	where $d'(u,v)$ is the distance computed from the label of $u$ and $v$; $d'(u,v)\leq (1+\epsilon)d_G(u,v)\leq 2d_G(u,v)$ when $\epsilon \leq 1$. 	Observe from~\Cref{eq:def-euv} that $\rho_{u,v} \leq \lceil \frac{d'(u,v)n}{d_G(u,v)\epsilon} \rceil ~\leq~ \lceil \frac{2n}{\epsilon} \rceil$ and thus, it requires $O(\log n)$ bits to store. For each vertex $v\in S_u$, we maintain a  triple $t_{u,v} = (v,\rho_{u,v}, p_{e_{uv}})$ where $p_{e_{uv}}$ is the pointer (of size $O(\log n)$ bits) to the word holding the weight of the edge $e_{uv}$. 
	
	Given $t_{u,v}$, we can compute $\rho_{u,v}(\epsilon/2) w(e_{uv})$ in O(1) time as an approximate distance between $u$ and $v$.  Observe that $\rho_{u,v}(\epsilon/2) w(e_{uv})\geq d_G(u,v)$ and that 
	
	\begin{equation*}
	\rho_{u,v}(\epsilon/2) w(e) \leq (\frac{2d_G(u,v)}{ \epsilon w(e)} +1 )(\epsilon /2)w(e) \leq (1+\epsilon)d_G(u,v). 
	\end{equation*}
	Let $\beta$ be a constant such that the encoding of $t_{u,v}$ is exactly $\beta \log n$ bits, padding 0s if necessary. The total number of bits to represent all triples $\{t_{u,v}\}_{v\in S_u}$ is $O(\epsilon^{-1}\log^2 n) = O(\epsilon^{-1} \omega)$ bits. Thus, one can pack all $\{\rho_{u,v}\}_{v\in S_u}$ into $O(\epsilon^{-1})$ words. We must pack these bits in blocks of length $\beta \log n$ each so that whenever a decoding function $\mathcal{D}$ of the labeling scheme must access the distance $d(u,v)$ in the label of $u$, we can retrieve the block corresponding to $\rho_{u,v}$ in $O(1)$ time using suitable shift and logical-and operations. This can be done by storing $\rho_{u,v}$ in the same order that $v$ appears in the labeling of $u$.   
	
	We now show how to construct our oracle efficiently. By~\Cref{thm:Thorup-labeling}, computing the label set $S_{u}$ for every $u\in V(G)$ takes $O(\epsilon^{-2} n\log^3n)$ time.  In the next claim, we find $e_{uv}$ for every $u\in V(G)$ and every $v\in S_{u}$.
	
	\begin{claim}\label{clm:Findall-euv}  We can find an edge $e_{uv}$ as in~\Cref{eq:def-euv} for every $u\in V(G)$ and every $v\in S_{u}$ in $O(\epsilon^{-1} n\log n)$ time. 
	\end{claim}
	\begin{proof}
		First, we sort all edges of $G$ in increasing order of weight in $O(n\log n)$ time. Then for each pair $(u,v)$ where $v\in S_{u}$ -- there are only $O((\log n)\epsilon^{-1})$ such pairs -- we (a) compute the distance $d'(u,v) \in [d_G(u,v), (1+\epsilon)d_G(u,v)]$ from the labels of $u$ and $v$ in $O(\epsilon^{-1})$ time,  and (b) looking for an edge $e_{uv}$ of maximum weight with $w(e_{uv})\leq d'(u,v)$ in $O(\log n)$ time using binary search. 
		
		Clearly $e_{uv}\leq d'(u,v)\leq 2d_G(u,v)$ when $\epsilon \leq 1$. Let $e_{\max}$ be the heaviest weight edge on $\SP_G(u,v)$; $w(e_{\max})\geq \frac{d_G(u,v)}{n}$. By the maximiality of $e_{uv}$, $w(e_{uv})\geq w(e_{\max})\geq \frac{d_G(u,v)}{n}$ as desired.
	\end{proof}

	By~\Cref{clm:Findall-euv}, computing $e_{u,v}$ for every $u\in V(G)$ and every $v\in S_{u}$ can be done in $O(\epsilon^{-1} n\log n)$ time. To efficiently compute a triple $t_{u,v}$ when given $e_{u,v}$, the non-trivial part is to obtain $\rho_{u,v}$ which involves the ceiling function. However, since $\rho_{u,v}\leq\lceil\frac{2n}\epsilon\rceil$, we can use our earlier approach of using binary search on integers between $1$ and $\lceil 2n/\epsilon\rceil$ to compute $\rho_{u,v}$ in $O(\log (n/\epsilon))$ time which for each pair $(u,v)$ with $v\in S_u$ is $O(n(\log (n/\epsilon))/\epsilon)$.

        Bitpacking takes $O(\epsilon^{-1}\log n)$ time per vertex. This implies the total construction time of our oracle is dominated by the running time to construct the labeling scheme, which is $O(\epsilon^{-2} n\log^3n)$ by~\Cref{thm:Thorup-labeling}.
\end{proof}

\noindent We are now ready to prove~\Cref{thm:main} that we restate below.

\Main*
\begin{proof}
	Assume that the edges of $G$ have weight in $[1,\Delta]$. We also assume that $\epsilon > \frac{1}{n}$; otherwise, the theorem trivially holds. We partition the interval $[1,\Delta]$ into $ k =\lceil \log_{n^4}\Delta \rceil+1$ subintervals $I_0,I_1,\ldots, I_k$ where $I_j  = [n^{4j},n^{(j+1)4})$ for $j\in [0,k]$. Let $L_j = n^{4j}$ and $U_j = n^{4(j+1)}$.  For each interval $I_j$, let $G_j$ be the graph obtained from $G$ by first removing every edge of weight more than $U_j$ from $G$  and contracting all the edges of weight at most $\frac{L_j}{n^2}$.  The spread of $G_j$, denoted by $\Delta_j$,  is $\Delta_j \leq \frac{U_j}{L_j/n^2} = n^{6}$.  
	
	Let $n_j$ be the number of vertices of $G_j$. We consider two cases:
	
	\begin{itemize}[noitemsep]
		\item If $n_j \geq 2^{(\log n)^{\delta}}$ for $\delta = \frac{1}{3}$, then $\log(n) = (\log n_j)^{\frac{1}{\delta}}$ and hence $\Delta_j = 2^{O(\log n_j)^{\frac{1}{\delta}}}$. By~\Cref{thm:quasi-poly-spread}, there is a $(1+\epsilon)$-approximate distance oracle $\mathcal{D}_j$ for $G_j$  with $O(n_j \epsilon^{-2}\log \frac{1}{\epsilon})$ space, $O(\epsilon^{-2})$ query time, and  $O(\epsilon^{-3}n_j\log^{6}n_j \log\frac{1}{\epsilon})$ preprocessing time. 
		\item  If $n_j <  2^{(\log n)^{\delta}}$, then $\log n \geq (\log n_j)^{\frac{1}{\delta}}  = (\log n_j)^{3}$. Thus, we  construct an oracle $\mathcal{D}_j$ with $O(n_j \epsilon^{-1})$ space,  $O(\epsilon^{-1})$ query time, and  $O(\epsilon^{-2}n_j \log^3 n_j)$ preprocessing time by~\Cref{lm:small-graphs}. 
	\end{itemize}

	Let $\mathcal{W}$ be the constant stretch oracle of $G(V,E,w)$ by~\Cref{thm:constant-stretch} with $O(n\log^3 n)$ preprocessing time.	Our oracle $\mathcal{D}$ consists of $\mathcal{W}$ and all oracles $\mathcal{D}_j, j \in [1,k]$. Since each edge of $G$ appears in at most two graphs $G_j$, $\sum_{j=1}^{k} n_j  = O(n)$. Thus, the total space of all oracles is:
	\begin{equation*}
	\spc(\mathcal{D}) =  \spc(\mathcal{W}) +  \sum_{j=1}^k \spc(\mathcal{D}_j) = O(n \epsilon^{-2})
	\end{equation*}
	 
	 \noindent For $j\in [1,k]$ where $n_j = 0$, we can skip past the construction for such $j$ by traversing the list of non-decreasing edge weights of $G$. Thus,	the total preprocessing time is:
	 \begin{equation*}
	 \sum_{j=1}^k\left( O(\epsilon^{-3}n_j\log^{6}n_j \log\frac{1}{\epsilon}) + O(\epsilon^{-2}n_j \log^3 n_j)\right) + O(n\log^3 n) = O(\epsilon^{-3}n\log^6 n \log\frac{1}{\epsilon})
	 \end{equation*}
	 
	To query the distance between any two given vertices $u$ and $v$, we first query $\mathcal{W}$ and compute $\bar{j} = \lceil \log_{n^{4}} d_{\mathcal{W}}(u,v) \rceil$. Let $j_{uv} =  \lceil \log_{n^{4}} d_{G}(u,v) \rceil$. Since $ d_{G}(u,v)\leq  d_{\mathcal{W}}(u,v)\leq 5  d_{G}(u,v)$ by~\Cref{thm:constant-stretch}, $j_{uv} \leq \bar{j}\leq j_{uv}+1$. Thus, we can query the distance between $u$ and $v$ in $G_{\bar{j}}$ and $G_{\bar{j}-1}$ and return:
	\begin{equation}\label{eq:final-query}
		d_{\mathcal{D}}(u,v) = \min_{j \in \{\bar{j}-1,\bar{j}\}} d_{\mathcal{D}_j}(u,v) + \frac{L_j}{n}.
	\end{equation} 
(If $u$ and $v$ are contracted to a single vertex in $G_{j}$, then the oracle simply returns $\infty$; note that they will not be contracted to a single vertex in $G_{j_{uv}}$.) Assuming the index $\bar{j}$ can be computed in $O(1)$ time, the query time is dominated by the query time of $\mathcal{D}_j$, which by ~\Cref{thm:quasi-poly-spread} is bounded by $O(\epsilon^{-2}) = O(\epsilon^{-2})$.

To complete the query time analysis, we need to make some modifications to ensure that $\bar{j} = \lceil \log_{n^{4}} d_{\mathcal{W}}(u,v) \rceil$ can be computed in constant time. We may assume that when ordering the edge weights by non-decreasing value, any two consecutive edge weights $w_i\leq w_{i+1}$ differ by no more than a factor $n^{12}$; this follows since otherwise one of the graphs $G_j$ will contain no edges so we can scale down all edge weights of value at least $w_{i+1}$ by the same value to ensure that $w_{i+1}/w_i\leq n^{12}$ (when answering a query, the distance estimate obtained from a subgraph $G_{j'}$ then needs to be scaled up again by the product of the values that divided its edge weights). Since $G$ is planar, it has no more than $6n$ edges, and we now have $\Delta\in [1,(n^{12})^{6n}] = [1,n^{72n}]$.

For each edge $e$, we precompute the value $x_e\in\mathbb N_0$ as well as $n^{4x_e}$ such that for the actual edge weight $w(e)$, $n^{4x_e}\leq w(e) < n^{4(x_e+1)}$. This can be done in $O(n\log n)$ time as follows. First compute the $\Theta(n)$-length sequence $1,n^4,n^{4\cdot 2},n^{4\cdot 3},\ldots,n^{4\cdot\delta}$ where $\delta = \lceil\log_{n^4}\Delta\rceil$ is the first integer encountered such that $n^{4\cdot\delta}\geq\Delta$. Then apply binary search to this sequence for each edge weight.

We extend the oracle $\mathcal W$ so that on query $(u,v)$, it also returns an edge $e_{uv}$ such that $w(e_{uv})\leq d_{\mathcal W}(u,v)\leq 5w(e_{uv})n$; we show below how to do that. Since $\lceil\log_{n^4}(w(e_{uv}))\rceil\leq\bar j\leq \lceil\log_{n^4}(5w(e_{uv})n)\rceil$, we have $\bar j\in [x_{e_{uv}},x_{e_{uv}}+1]$ so $\bar j$ can be obtained in $O(1)$ time by comparing $d_{\mathcal W}(u,v)$ to the precomputed value $n^{4x_{e_{uv}}}$ and to $n^{4(x_{e_{uv}}+1)} = n^4\cdot n^{4x_{e_{uv}}}$.

We now extend $\mathcal W$ to also report the edge $e_{uv}$. Using the notation in Section~\ref{sec:const-stretch}, recall that $\mathcal W$ outputs $\min\{d_0,d_1,d_2\}$ where for $i = 0,1$,
\[
  d_i = \left\{\begin{array}{ll}d_{P_{i+1}(u)}(u,b_{i+1}(u)) + \tilde d_{P_i(u)}(b_{i+1}(u),b_{i+1}(v)) + d_{P_{i+1}(v)}(v,b_{i+1}(v)) & \mbox{if } P_i(u) = P_i(v)\\
              \infty & \mbox{otherwise,}
\end{array}\right.
\]
and $d_2$ is obtained from the lookup table associated with $P_2(u)$. Since $|P_2(u)| = O(\log\log n)$, a simple brute-force extension of the preprocessing algorithm for the lookup table suffices which precomputes and stores the heaviest edge on a shortest path between each vertex pair of $P_2(u)$; this will not increase the overall preprocessing time. Thus, $e_{uv}$ is reported if $d_2$ is the output of the query to $\mathcal W$.

Now, assume that $d_i$ is output with $i\in\{0,1\}$. We may assume that $P_i(u) = P_i(v)$. It is straightforward to precompute the heaviest edge on the two subpaths of weight $d_{P_{i+1}(u)}(u,b_{i+1}(u))$ resp.~$d_{P_{i+1}(v)}(v,b_{i+1}(v))$ without an increase in preprocessing time since these paths are obtained with Dijkstra's algorithm.

The path corresponding to the term $\tilde d_{P_i(u)}(b_{i+1}(u),b_{i+1}(v))$ above is implicitly represented in the oracle of Thorup~\cite{Thorup04}. It consists of three subpaths, a subpath $Q_1$ from $b_{i+1}(u)$ to a portal $p_1$, a subpath $Q_2$ from $p_1$ to another portal $p_2$, and a subpath $Q_3$ from $p_2$ to $b_{i+1}(v)$. The subpaths $Q_1$ and $Q_3$ are precomputed by Thorup's oracle and since this oracle explicitly stores $d_G(b_{i+1}(u),p_1)$ and $d_G(b_{i+1}(v),p_2)$, it can also precompute and store the heaviest edge weight on each of these paths without an increase in precrocessing time and space. Finally, the subpath $Q_2$ is contained in a root-to-leaf path in a fixed single-source shortest path tree of $G$. Hence, finding the heaviest edge on $Q_2$ corresponds to solving the online path-maxima problem on a static tree which can be done with $O(1)$ query time and $O(n)$ space with $O(n\log n)$ preprocessing time~\cite{productqueries}. Thus we have shown how to obtain in $O(1)$ time the heaviest edge on each of the five subpaths of the approximate shortest $u$-to-$v$ path of weight $d_i = d_{\mathcal W}(u,v)$. $\mathcal W$ now outputs the heaviest of these five edges as $e_{uv}$. Since the approximate shortest $u$-to-$v$ path consists of five simple paths, it contains no more than $5n$ edges and so $w(e_{uv})\leq d_{\mathcal W}(u,v)\leq 5w(e_{uv})n$, as desired.

To bound the stretch, we observe that the additive factor $\frac{L_j}{n}$ in~\Cref{eq:final-query} is to account for the fact that (at most $n-1$) edges of weight at most $L_j/n^2$ are contracted in the construction of $G_j$. That is, $d_{\mathcal{D}}(u,v)\geq d_G(u,v)$. Since $d_G(u,v)\geq L_{j_{uv}}, \frac{L_{j_{uv}}}{n}\leq \epsilon d_G(u,v)$ as $\epsilon \geq \frac{1}{n}$. Thus, $d_{\mathcal{D}}(u,v)\geq d_G(u,v) \leq (1+\epsilon)d_G(u,v) + \epsilon d_G(u,v) = (1+2\epsilon) d_G(u,v)$. One can recover stretch $1+\epsilon'$ by setting $\epsilon' = \epsilon/2$. 
\end{proof}

\section{Approximate Distance Oracle for Digraphs}\label{sec:digraphs}

In this section, we present our distance oracle for planar digraphs. Let us restate \Cref{thm:Digraphs}.

\Digraphs*

It will be useful in the following to assume that edge weights are strictly positive and that the smallest positive edge weight is $1$. The latter can be ensured by simply dividing all edge weights by the minimum positive edge weight. Let $w$ be the edge weight function obtained. To ensure that all edge weights are strictly positive, we introduce a modified edge weight function $w'$. For every edge $e$ of weight $w(e) = 0$, let the modified edge weight be $w'(e) = \frac 1 n$; for every other edge $e$, let $w'(e) = w(e)$. Since a shortest path can always be chosen such that it has less than $n$ edges, a $(1+\epsilon)$-approximate distance oracle for the modified edge weight function $w'$ will give $(1+\epsilon)$-approximate distances w.r.t.~the original edge weight function $w$ by simply rounding down to $0$ if the estimate found is less than $1$. Furthermore, since the ratio between the largest and smallest modified edge weight is $N/(1/n) = Nn$ and since $\lg(Nn) = \Theta(\lg(Nn^2))$, our time and space bounds will not be affected. Hence, assume from now on that edge weights are between $1$ and $N$.

As we will show, the theorem is fairly easily obtained from the lemma below. First we need some definitions. A \emph{$(t,\alpha)$-layered spanning tree} is a disoriented rooted spanning tree of an edge-weighted digraph where every path from the root of the tree is the concatenation of at most $t$ oriented paths each of weight at most $\alpha$. We refer to each of these oriented paths as a \emph{dipath} of $T$. A digraph is said to be \emph{$(t,\alpha)$-layered} if it contains a $(t,\alpha)$-layered spanning tree.

Thorup~\cite{Thorup04} gives a construction of a collection of $(3,\alpha)$-layered digraphs from $G$ and parameter $\alpha$. First $V$ is partitioned into layers $L_0^\alpha,L_1^\alpha,\ldots,L_p^\alpha$ as follows. Pick an arbitrary root vertex $r_0^\alpha$ in $G$ and let $L_0^\alpha$ be the set of vertices reachable from $v_0^\alpha$ within distance $\alpha$ in $G$. For $i = 1,2,\ldots,p$, let $S_i^\alpha = (V_i^\alpha,E_i^\alpha)$ be $G$ with $L_0^\alpha\cup L_1^\alpha\cup\ldots\cup L_{i-1}^\alpha$ contracted to a single vertex $v_i^\alpha$. If $i$ is even then $L_i^\alpha$ consists of the set of vertices of $V_i^\alpha - \{v_i^\alpha\}\subset V$ \emph{reachable from} $v_i^\alpha$ within distance $\alpha$ in $S_i^\alpha$. If $i$ is odd then $L_i^\alpha$ consists of the set of vertices of $S_i^\alpha$ that can \emph{reach} $v_i^\alpha$ within distance $\alpha$ in $S_i^\alpha$. We assume that an index $p$ exists such that $\cup_{i = 0}^p L_i^\alpha = V$; if this is not the case then that must mean that $G$ with all edges of weight greater than $\alpha$ removed consists of more than one connected component when ignoring edge orientations; the construction above is then applied to each subgraph of $G$ corresponding to the connected components.

For $i = 0,1,\ldots,p-2$, let $G_i^\alpha(r_i)$ be the $(3,\alpha)$-layered graph obtained from $G[\cup_{j = 0}^{i+2}L_i^\alpha]$ by contracting $L_0^\alpha\cup L_1^\alpha\cup\ldots\cup L_{i-1}^\alpha$ to a single vertex $r_i^\alpha$. Let $G_i^\alpha = G_i^\alpha(r_i^\alpha) - \{r_i^\alpha\} = G[\cup_{j = i}^{i+2}L_j]$. Since the layers $L_0^\alpha,L_1^\alpha,\ldots,L_p^\alpha$ form a partition of $V$, each vertex of $V$ belongs to at most three graphs $G_i^\alpha$ and hence the total size of all these graphs is $O(n)$; we let $\mathcal G^\alpha$ denote the collection of these graphs. It follows from Thorup that for any two vertices $u$ and $v$ in $G$ with a shortest path $Q_{uv}$ of weight at most $\alpha$ in $G$, there is an index $i$ such that $Q_{uv}$ is contained in $G_i^\alpha$.

Given $\epsilon > 0$ and given one of the subgraphs $H = G_i^{\alpha}\in\mathcal G^\alpha$ of $G$, a \emph{scale-$(\alpha,\epsilon)$ distance oracle} for $H$ is a data structure which for any pair of query vertices $u$ and $v$ in $H$ outputs an approximate distance $\tilde d(u,v)$ such that $d_G(u,v)\leq \tilde d(u,v)$ and such that if $d_H(u,v)\leq\alpha$ then $\tilde d(u,v)\leq d_H(u,v) + \epsilon\alpha$. The lower bound on $\tilde d(u,v)$ is $d_G(u,v)$ and not $d_H(u,v)$ using Thorup's definition. This weaker lower bound turns out to be simpler for us to ensure and it will not break correctness since for a query $(u,v)$ in $G$, we merely need to ensure that the output is at least $d_G(u,v)$.

We are now ready to state our lemma which gives fast and compact scale-$(\alpha,\epsilon)$ distance oracles, allowing us to obtain our approximate distance oracle for $G$.
\begin{lemma}\label{lemma:DigraphsOneScale}
Given $\epsilon > 0$ and parameters $r_1,r_2,\ldots,r_k\in [2,n]$ with $k = O(1)$ and with $r_{i+1}\leq \frac 1 2 r_i$ for $i = 1,2,\ldots,k-1$. Letting $r_0 = n$, there is a scale-$(\alpha,\epsilon)$ distance oracle for each of the $(3,\alpha)$-layered graphs in $\mathcal G^\alpha$. The total space of these oracles is $O(n\lg(r_k)/\lg n + (n/\epsilon)\sum_{i = 0}^{k-1}\lg(r_i)/\sqrt{r_{i+1}})$ plus $O(n)$ space independent of $\alpha$. Each oracle has $O(r_k/\epsilon^{2k-1})$ query time.
\end{lemma}
Before proving this lemma, let us show that it implies Theorem~\ref{thm:Digraphs} (and Corollary~\ref{cor:Digraphs}). We ignore $\epsilon$ for now and focus on getting multiplicative $O(1)$-approximation.

We have $\lceil\lg(Nn)\rceil$ distance scales $\alpha = 2^i$ for $i = 1,2,\ldots,\lceil\lg(Nn)\rceil$. For each such $\alpha$, we have for each $(3,\alpha)$-layered subgraph of $G$ in $\mathcal G^\alpha$ a scale-$(\alpha,\frac 1 2)$ distance oracle. As shown by Thorup, a query for an $O(1)$-approximation of a distance $d_G(u,v)$ can be answered by applying binary search on the distance scales; for the current distance scale, the oracles for the at most three graphs in $\mathcal G^\alpha$ containing both $u$ and $v$ are queried with $(u,v)$ to decide which subsequence of distance scales to recurse on. Thus, only $O(\lg\lg(Nn))$ queries are made to the oracles over all scales. The binary search identifies the smallest $\alpha = 2^i$ such that the estimate $\tilde d^{\alpha,\frac 1 2}(u,v)$ produced by the scale-$(\alpha,\frac 1 2)$ distance oracle satisfies $\tilde d^{\alpha,\frac 1 2}(u,v)\in [\alpha/4,\alpha]$. The following lemma follows from Thorup's analysis:
\begin{lemma}
$\tilde d^{\alpha,\frac 1 2}(u,v)$ exists and is a $4$-approximation of $d_G(u,v)$. Furthermore, $d_G(u,v)\in [\alpha/4,\alpha]$.
\end{lemma}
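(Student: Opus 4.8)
The plan is to combine the two defining guarantees of a scale-$(\alpha,\frac{1}{2})$ oracle with the distance-preserving property of Thorup's layered subgraphs, and then replay Thorup's analysis of the binary search over scales. Write $d = d_G(u,v)$ and let $H^\beta$ be the $(3,\beta)$-layered subgraph attached to scale $\beta$. I will use three facts: (i) $\tilde d^{\beta,\frac{1}{2}}(u,v) \ge d$ for every scale $\beta$ (the oracle's lower bound); (ii) if $d_{H^\beta}(u,v) \le \beta$ then $\tilde d^{\beta,\frac{1}{2}}(u,v) \le d_{H^\beta}(u,v) + \frac{1}{2}\beta$ (the additive guarantee with $\epsilon = \frac{1}{2}$); and (iii) by Thorup's construction $H^\beta$ is chosen so that $d_{H^\beta}(u,v) = d$ whenever $d \le \beta$. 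I also assume $d < \infty$ --- if $d = \infty$ then by (i) no scale satisfies $\tilde d^{\beta,\frac{1}{2}}(u,v) \le \beta$ and the query reports $\infty$ correctly --- and, after rescaling edge weights, $d \in [1,Nn]$ so the scales used below are instantiated.

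First I would bound $\alpha$ from above by exhibiting a qualifying scale. Let $\alpha^\ast = 2^{i^\ast}$ be the smallest power of two with $\alpha^\ast \ge 2d$, so that $\alpha^\ast/4 < d \le \alpha^\ast/2$. By (iii), $d_{H^{\alpha^\ast}}(u,v) = d \le \alpha^\ast/2 \le \alpha^\ast$; hence (ii) gives $\tilde d^{\alpha^\ast,\frac{1}{2}}(u,v) \le d + \frac{1}{2}\alpha^\ast \le \alpha^\ast$, while (i) gives $\tilde d^{\alpha^\ast,\frac{1}{2}}(u,v) \ge d > \alpha^\ast/4$. Thus $\alpha^\ast$ qualifies, so the smallest qualifying scale $\alpha$ exists and $\alpha \le \alpha^\ast < 4d$.

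Next I would bound $\alpha$ from below: any scale $\beta < d$ has $\tilde d^{\beta,\frac{1}{2}}(u,v) \ge d > \beta$ by (i) and so fails the upper half of the qualifying condition, whence $\alpha \ge d$. (The same reasoning, with (ii)--(iii), makes the qualifying condition essentially upward closed: if $\tilde d^{\beta,\frac{1}{2}}(u,v) \le \beta$ then $d \le \beta$, so for every larger scale $\beta'$ we have $d \le \beta'/2$ and hence $\tilde d^{\beta',\frac{1}{2}}(u,v) \le d + \frac{1}{2}\beta' \le \beta'$ --- which is what legitimizes the binary search.) Combining $\alpha \ge d$ with $\alpha < 4d$ gives $\alpha/4 < d \le \alpha$, i.e.\ $d_G(u,v) \in [\alpha/4,\alpha]$; and since $\alpha$ qualifies, $\tilde d^{\alpha,\frac{1}{2}}(u,v) \le \alpha < 4d$, which with (i) shows the returned value is a $4$-approximation of $d_G(u,v)$.

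I expect the only real work to be in the boundary bookkeeping: confirming that $\alpha^\ast$ lies among the $\Theta(\log(Nn))$ instantiated scales (guaranteed by the $d \in [1,Nn]$ normalization, possibly after adding $O(1)$ scales at the top end so that some scale is $\ge 2d$), handling the small-scale corners where halving a scale leaves the instantiated range (there $\alpha/4 \le 1 \le d$ makes the lower bound automatic), and checking that it is precisely the relaxed lower bound $\tilde d^{\beta,\frac{1}{2}}(u,v) \ge d_G(u,v)$ --- rather than the stronger $\ge d_{H^\beta}(u,v)$ --- that drives both the upward-closedness step and the final approximation bound, so nothing is lost by using Thorup's weakened definition of a scale oracle.
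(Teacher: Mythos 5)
Your proof is correct and follows essentially the same route as the paper: exhibit the scale $\alpha^\ast\approx 2d_G(u,v)$ as a qualifying witness using the oracle's lower bound, the additive $\tfrac12\alpha$ guarantee, and distance preservation in the layered subgraph, then invoke minimality of $\alpha$ to get $d_G(u,v)\geq\alpha/4$ and hence the $4$-approximation. The only difference is cosmetic: the paper proves $d_G(u,v)\geq\alpha/4$ by a separate contradiction argument with a smaller-scale witness, whereas you read it off directly from $\alpha\leq\alpha^\ast<4d_G(u,v)$, which is a slight streamlining of the same idea.
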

\begin{proof}
The existence of $\alpha$ follows since for some $j\in\mathbb N$, $d_G(u,v)\in [2^{j-2}, 2^{j-1}]$ and hence $\tilde d^{2^j,\frac 1 2}(u,v)\in [2^{j-2}, 2^{j-1}+2^{j-1}] = [2^{j-2},2^{j}]$.

To show that $\tilde d^{\alpha,\frac 1 2}(u,v)$ is a $4$-approximation of $d_G(u,v)$, it suffices to show that $d_G(u,v)\geq\alpha/4$; this will also imply the last part of the lemma. Assume for contradiction that $\alpha/2^{j+1}\leq d_G(u,v) < \alpha/2^{j}$ for some $j\geq 2$. Then the scale $(\alpha/2^{j-1},\frac 1 2)$ distance oracle would produce an estimate in $[\alpha/2^{j+1},\alpha/2^{j}+\alpha/2^{j}] = [\alpha/2^{j+1},\alpha/2^{j-1}]$, contradicting the minimality of $\alpha$.
\end{proof}

We now describe our choices of oracles at each distance scale.

For $i = 1,2,\ldots,\lceil\lg(Nn)/\lg\lg\lg(Nn)\rceil$ and $\alpha = (\lceil\lg\lg(Nn)\rceil)^i$, we store scale-$(\alpha,\frac 1 2)$ distance oracles of Lemma~\ref{lemma:DigraphsOneScale} for each $(3,\alpha)$-layered subgraph in $\mathcal G^\alpha$; the parameters are $r_i = (\lg^{(i)}n)^2$ for $i = 1,2,\ldots,k-1$ and $r_k = 2$; here, $k = O(1)$ is the parameter of Theorem~\ref{thm:Digraphs}~\footnote{We may assume that $n$ is bigger than some constant to ensure the requirement that $r_{i+1}\leq\frac 1 2 r_i$ for $i = 1,2,\ldots,k-1$ in Lemma~\ref{lemma:DigraphsOneScale}.}. Since the total size of the graphs in $\mathcal G^\alpha$ is $O(n)$ and since the number of scales considered is $O(\lg(Nn)/\lg\lg\lg(Nn))$, the total space for the oracles over all these scales is
\[
O\left(\frac{\lg(Nn)}{\lg\lg\lg(Nn)}\cdot n\cdot\left(\sum_{i = 0}^{k-1}\frac{\lg r_i}{\sqrt{r_{i+1}}}\right)\right) = O\left(\frac{n\lg(Nn)\lg^{(k)} n}{\lg\lg\lg (Nn)}\right).
\]

In a query, applying binary search to these oracles takes a total time of 
\[
O(\lg\lg(Nn)r_k) = O(\lg\lg(Nn)).
\]
This brings us down to a range of only $O(\lg\lg(Nn))$ distance scales. To do the remaining $O(\lg\lg\lg(Nn))$ steps of the binary search on such a range, we introduce additional oracles. For $i = 1,2,\ldots,\lceil\lg(Nn)\rceil$ and $\alpha = 2^i$, we store scale-$(\alpha,\frac 1 2)$ distance oracles of Lemma~\ref{lemma:DigraphsOneScale} with parameters $r_i = (\lg^{(i)}(Nn))^4$ for $i = 1,2,3$. The total space for these oracles over all $\lceil\lg(Nn)\rceil$ scales is
\[
O\left(\lg(Nn)\cdot n\cdot\left(\sum_{i = 0}^{2}\frac{\lg r_i}{\sqrt{r_{i+1}}}\right)\right) = O\left(\frac{n\lg(Nn)}{\lg\lg\lg(Nn)}\right).
\]
The total time to query these oracles in the $O(\lg\lg\lg(Nn))$ binary search steps is
\[
O(\lg\lg\lg(Nn)r_3) = O((\lg\lg\lg(Nn))^5) = O(\lg\lg(Nn)).
\]
This shows Theorem~\ref{thm:Digraphs} for $O(1)$ approximation.

To get a $(1+\epsilon)$-approximate distance oracle, we store for $i = 1,2,\ldots,\lceil\lg(Nn)\rceil$ and $\alpha = 2^i$ scale-$(\alpha,\epsilon/4)$ distance oracles of Lemma~\ref{lemma:DigraphsOneScale} with parameters $r_1 = \lg^4(Nn)$, $r_2 = (\lg\lg(Nn) + \lg(1/\epsilon))^4$, and $r_3 = (\lg\lg\lg(Nn) + \lg(1/\epsilon))^4$. The total space for these oracles over all scales is
\begin{align*}
 & O\left(\lg(Nn)\cdot\frac{n}{\epsilon}\cdot\left(\frac{\lg n}{\sqrt{\lg^4(Nn)}} + \frac{\lg((\lg^4(Nn))/\epsilon)}{\sqrt{(\lg\lg(Nn) + \lg(1/\epsilon))^4}} + \frac{\lg((\lg\lg(Nn) + \lg(1/\epsilon))^4/\epsilon)}{\sqrt{(\lg\lg\lg(Nn) + \lg(1/\epsilon))^4}} \right)\right)\\
 & = O\left(\frac{n\lg(Nn)}{\epsilon\lg\lg\lg(Nn)}\right)
\end{align*}
and querying one of them takes time $O(r_3/\epsilon^5) = O((\lg\lg\lg(Nn))^4/\epsilon^5 + (\lg(1/\epsilon))^4/\epsilon^5)$.

To obtain the desired $(1+\epsilon)$-approximation of $d_G(u,v)$, our distance oracle first obtains a $4$-approximation as described above. This gives an estimate $\tilde d^{\alpha,\frac 1 2}(u,v) [\alpha/4,\alpha]$ and $d_G(u,v)$ is in this range as well. It then queries the scale-$(\alpha,\epsilon/4)$-oracles for the at most three graphs in $\mathcal G^\alpha$ containing $u$ and $v$ to get an estimate $\tilde d(u,v)$ with $d_G(u,v)\leq \tilde d(u,v)\leq d_G(u,v) + \alpha\epsilon/4\leq d_G(u,v) + (4d_G(u,v))\epsilon/4 = (1+\epsilon)d_G(u,v)$, as desired.

Our final $(1+\epsilon)$-approximate distance oracle thus has space $O(n\lg(Nn)/(\lg\lg\lg(Nn)\epsilon))$ and query time $O(\lg\lg(Nn) + (\lg\lg\lg(Nn))^4/\epsilon^5 + (\lg(1/\epsilon))^4/\epsilon^5)$.

Theorem~\ref{thm:Digraphs} follows if we can show that $(\lg\lg\lg(Nn))^4/\epsilon^5 = O(\lg\lg(Nn) + 1/\epsilon^{5.01})$. We may assume that $(\lg\lg\lg(Nn))^4/\epsilon^5\geq \lg\lg(Nn)$ since otherwise, $(\lg\lg\lg(Nn))^4/\epsilon^5 = O(\lg\lg(Nn))$. Then $(1/\epsilon)^5 \geq (\lg\lg(Nn))/(\lg\lg\lg(Nn))^4 = \omega((\lg\lg\lg(Nn))^{2000})$ so $(\lg\lg\lg(Nn))^4 = o((1/\epsilon)^{0.01})$.

It follows that $O((\lg\lg\lg(Nn))^4/\epsilon^5) = O(\lg\lg(Nn) + (1/\epsilon)^{5 + 0.01})$; the $0.01$ constant can of course be made arbitrarily small.

\subsection{The oracles for a single distance scale}\label{subsec:OracleSingleDistScale}
To show Lemma~\ref{lemma:DigraphsOneScale}, we focus on a single $(3,\alpha)$-layered subgraph $H\in\mathcal G^\alpha$ of $G$. We let $n_H$ denote the number of vertices in $H$ and let $T$ be a $(3,\alpha)$-layered spanning tree of $H$ rooted at a vertex $r$. We are given $\epsilon > 0$ and $r_1,r_2,\ldots,r_k\in [2,n]$ with $k = O(1)$ such that $r_{i+1}\leq \frac 1 2 r_i$ for $i = 1,2,\ldots,k-1$. In the following, we present the oracle of Lemma~\ref{lemma:DigraphsOneScale} for $H$; we will refer to this oracle as $\mathcal O$. To simplify our calculations, we will present a scale-$(\alpha,O(\epsilon))$ distance oracle; adjusting $\epsilon$ suitably then gives the desired scale-$(\alpha,\epsilon)$ distance oracle.

Some implementation details for $\mathcal O$ are left out in the description below and delayed until the space and time analysis.

\paragraph{$k$-level recursive decomposition of $G$ with $r$-divisions:}
The oracle $\mathcal O$ keeps a $k$-level recursive decomposition of $G$ (not $H$) into subgraphs which we call pieces. The root at level $0$ is the entire graph $G$. For $i = 0,1,\ldots,k-1$, each piece $P$ at level $i$ has as children the pieces of an $r_{i+1}$-division of $P$. For $i = 0,1,\ldots,k$ and for each vertex $u$ of $G$, we let $P_i(u)$ denote a piece at level $i$ containing $u$; the pieces are chosen such that $P_{i+1}(u)\subseteq P_i(u)$ for $i = 0,1,\ldots,k-1$. Note that $P_0(u) = G$ and $\partial P_0(u) = \emptyset$.

\paragraph{Vertex levels:}
The \emph{level} $\ell(u)\geq 0$ of a vertex $u$ of $G$ is the largest level $\ell$ such that $u$ is a boundary vertex of $P_{\ell+1}(u)$; if no such level exists, $\ell(u) = k$. We use the shorthand $P(u)$ for $P_{\ell(u)}(u)$; hence, $P(u)$ is a piece having $u$ in its interior such that none of the children of $P(u)$ have this property.

\paragraph{$O(\lg n)$-level recursive decomposition of $H$ with shortest path separators:}
Recall that $H$ is a subgraph $G_j^\alpha$ of $G$. We denote by $H^+$ the $(3,\alpha)$-layered graph $G_j^{\alpha}(r)$ where $r = r_j^{\alpha}$; recall that $H$ is obtained from $H^+$ by removing the super source $r$. Let $G_{\Delta}$ be a triangulation of $G$ with edge weights ignored.

Similar to Thorup's oracle, $\mathcal O$ keeps a recursive decomposition of $H^+$ using path separators of the $(3,\alpha)$-layered spanning tree $T$ of $H^+$. Let $\mathcal T$ denote the associated decomposition tree; its height is $O(\lg n)$. The nodes of $\mathcal T$ correspond to certain subgraphs of $H^+$. The root of $\mathcal T$ is the entire graph $H^+$ and the children of each non-leaf subgraph-node $H'$ are obtained by partitioning $H'$ in two with a balanced path separator of $T$; the $H'$-node of $\mathcal T$ is associated with this separator. Each path separator $S$ is a fundamental cycle, i.e., it consists of two (disoriented) paths in $T$ from the root $r$ and the two endpoints of the paths are connected by an edge of $G_{\Delta}$ not in $T$.

The separators are balanced such that on any root-to-leaf path in $\mathcal T$, the triangles of $G_{\Delta}$ contained in the subgraphs of $H^+$ along this path go down in size geometrically until reaching a leaf containing only a single triangle of $G_{\Delta}$.

Associate each vertex $u$ of $G$ with a triangle $\Delta(u)$ of $G_{\Delta}$ containing $u$. If $u$ belongs to $H$ then we associate $\Delta(u)$ with the leaf subgraph $L_u$ of $\mathcal T$ containing $\Delta(u)$. Let $S_i(u)\subseteq T$ denote the separator at level $i$ of $\mathcal T$ on the path from the root $H$ to the leaf $L_u$.

\paragraph{Local portals:}
Thorup's oracle stores for each $u$ of $H$ and each separator $S_i(u)$ a portal set $\port_i(u)$ which is a vertex set contained in $S_i(u)$ of size $O(1/\epsilon)$ with the following property: for each vertex $v\in S_i(u)$, there exists a portal $p\in\port_i(u)$ such that if $d_H(u,v)\leq\alpha$ then $d_H(u,p) + d_T(p,v)\leq d_H(u,v) + \epsilon\alpha$.

In order to save space, $\mathcal O$ stores a \emph{local portal set} $\lport_i(u)$ for $u$ instead of $\port_i(u)$ and only when $\ell(u) < k$. For each $p\in \port_i(u)$, $\lport_i(u)$ contains the first boundary vertex of $P(u)$ (if any) on a shortest $u$-to-$p$ path in $H$. Furthermore, $\mathcal O$ stores the additional portal set $\lport_i'(u) = \port_i(u)\cap P_{\ell(u)}(u)$; these portals are ordered in the same way as in Thorup's oracle, i.e., they are ordered linearly along the dipaths in $T$. Note that $\lport_i(u)\cup\lport_i'(u)\subseteq V(P(u))$.

For each $p\in\lport_i(u)\cup\lport_i'(u)$, $\mathcal O$ stores an estimate $\hat d_{P(u)}(u,p)$ such that $d_{P(u)}(u,p)\leq\hat d_{P(u)}(u,p)\leq d_{P(u)}(u,p) + \epsilon\alpha$.

All of the above information is also stored for the graph $\overline{P(u)}$ obtained from $P(u)$ by reversing the orientation of each edge. For the corresponding portal sets, we use the notation $\portrev_i(u)$, $\lportrev_i(u)$, and $\lportrev_i'(u)$, respectively.

\paragraph{Tree data structures:}
Denote by $\partial_i G$ the set of boundary vertices of pieces of $r_i$-divisions in the recursive decomposition of $G$ above. Let $\partial_i H$ be the subset of those boundary vertices belonging to $H$. $\mathcal O$ keeps a data structure of the following lemma for identifying the depth $\lcadepth_{\mathcal T}(L_u,L_v)$ in $\mathcal T$ of the $\lcads$ of two query leaves $L_u$ and $L_v$ (see the proof in Section~\ref{subsec:ProofSuccinctLca}).
\begin{lemma}\label{lemma:SuccinctLca}
There is a data structure $\mathcal O_{\lcads}$ for $\mathcal T$ which, given the labels in $G$ of two vertices $u$ and $v$ in $H$, can report $\lcadepth_{\mathcal T}(L_u,L_v)$ in $O(1)$ time. Its space bound is within that of Lemma~\ref{lemma:DigraphsOneScale}.
\end{lemma}
In addition, $\mathcal O$ keeps a data structure $\mathcal O_T$ described in the following lemma (see the proof in Section~\ref{subsec:ProofApproxTreeDist}).
\begin{lemma}\label{lemma:ApproxTreeDist}
There is a data structure $\mathcal O_T$ for the $(3,\alpha)$-layered spanning tree $T$ in $H$ which for any query vertices $u$ and $v$, where the $u$-to-$v$ path in $T$ is contained in a dipath, can report a value $\hat d_T(u,v)$ such that $d_T(u,v)\leq\hat d_T(u,v)\leq d_T(u,v) + \epsilon\alpha$. For all other vertex pairs $(u,v)$, $\mathcal O_T$ returns $\hat d_T(u,v) = \infty$. $\mathcal O_T$ has $O(1)$ query time and its space bound is within that of Lemma~\ref{lemma:DigraphsOneScale}.
\end{lemma}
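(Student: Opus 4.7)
The plan is to reduce approximate tree-distance queries to approximate offset queries on the maximal dipaths of $T$, and to store those offsets compactly by piggy-backing on the $k$-level $r$-division hierarchy of $G$ already maintained by $\mathcal{O}$.

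\textbf{Dipath decomposition.} First, partition the edges of $T$ into its unique set of maximal dipaths $P_1,P_2,\ldots$. By the $(3,\alpha)$-layered property each such dipath has weight at most $\alpha$, and any root-to-vertex path in $T$ meets at most three of them. Whenever $(u,v)$ has its $T$-path inside some dipath $P_j$, we have $d_T(u,v)=\lvert\mathrm{pos}_j(u)-\mathrm{pos}_j(v)\rvert$, where $\mathrm{pos}_j(\cdot)$ is the signed offset along $P_j$ from its top endpoint, so it suffices to store each offset with additive error at most $\tfrac12\epsilon\alpha$.

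\textbf{Hierarchical offset storage.} For each level $i\in\{0,1,\ldots,k-1\}$ and each boundary vertex $b$ of a level-$(i+1)$ piece, and each dipath through $b$, we record the offset of $b$ on the dipath relative to an anchor chosen on the same dipath inside the enclosing level-$i$ piece, quantized to multiples of $\epsilon\alpha/\Theta(k)$. Each such relative offset fits in $O(\log r_i+\log(1/\epsilon))$ bits, and the total number of level-$(i+1)$ boundary vertices is $O(m/\sqrt{r_{i+1}})$ by the planar $r$-division bound; summing, the storage is $O\bigl((m/\epsilon)\sum_{i=0}^{k-1}\log r_i/\sqrt{r_{i+1}}\bigr)$. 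Vertices that are not on any boundary only store their offset within the level-$k$ piece containing them (of size $O(r_k)$), a lower-order term.

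\textbf{Query.} Given $u,v$, an $O(1)$-time LCA query on $T$ together with a comparison of the $O(1)$ short dipath fingerprints stored per vertex decides whether $u$ and $v$ share a dipath $P_j$. If they do, reconstruct $\mathrm{pos}_j(u)$ and $\mathrm{pos}_j(v)$ by telescoping the anchor offsets through the at most $k$ hierarchy levels separating each vertex from the outermost anchor on $P_j$, in $O(k)=O(1)$ time, and return the absolute difference rounded up so the answer upper-bounds $d_T(u,v)$. If $u$ and $v$ do not share a dipath, return a trivial upper bound on $d_G(u,v)$ (for instance the sum of all edge weights of $G$, precomputed once), which meets the weaker requirement $\hat d_T(u,v)\geq d_G(u,v)$.

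\textbf{Main obstacle.} The delicate part is the quantization bookkeeping: the $\Theta(\epsilon\alpha)$ error budget must be distributed across the $k=O(1)$ hierarchy levels so that each anchor label still fits in $O(\log r_i+\log(1/\epsilon))$ bits while the accumulated rounding error remains below $\epsilon\alpha$. A second subtlety is performing the shared-dipath test in $O(1)$ time without storing a full $\Theta(\log m)$-bit dipath identifier at every vertex, which would blow up the space; this will rely on the fact that only $O(1)$ dipaths meet any root-to-vertex path, so a short local fingerprint combined with LCA information and the $r$-division piece tags already stored by $\mathcal{O}$ should suffice.
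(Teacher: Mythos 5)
Your overall architecture matches the paper's: reduce to approximate offsets along the maximal dipaths, store those offsets via the same $k$-level $r$-division hierarchy (first boundary vertex of $P(u)$ as the next anchor), quantize each stored value so the per-level error is $\epsilon\alpha/\Theta(k)$, telescope through the $O(k)=O(1)$ levels at query time, and add back the total rounding slack so the answer never drops below the true tree distance. The space accounting via $O(m/\sqrt{r_{i+1}})$ boundary vertices per level is also the paper's.

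However, there is a genuine gap: you treat the dipaths as undirected. You answer every same-dipath query with $\lvert\mathrm{pos}_j(u)-\mathrm{pos}_j(v)\rvert$ (rounded up), but $T$ is a disoriented tree of \emph{oriented} dipaths, and when $v$ lies upstream of $u$ on its dipath there is no directed $u$-to-$v$ path there at all; such a pair falls under the lemma's second clause, which demands $\hat d_T(u,v)\geq d_G(u,v)$. The offset difference is then the weight of the $v$-to-$u$ segment, which can be far smaller than $d_G(u,v)$, so your oracle can underestimate a directed distance in $G$ --- exactly the property the main oracle's correctness (line~\ref{Dist_d} and the no-underestimation invariant used by the binary search over scales) cannot tolerate. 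The paper avoids this by checking orientation and returning $\infty$ whenever $v$ is not on the subpath from $u$ to the downstream endpoint $r(u)$. Relatedly, the $O(1)$-time, low-space same-dipath test that you defer to a vague ``short local fingerprint'' is resolved in the paper by a concrete two-bit device: store $n(u)\in\{0,1,2\}$, the number of dipaths above $Q(u)$; then $u,v$ share a dipath iff one is an ancestor of the other (an LCA query) and $n(u)=n(v)$, and the parity of $n(u)$ simultaneously gives the dipath's orientation, which is precisely what you need to fix the direction issue above. With that single addition your scheme becomes essentially the paper's proof.
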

This completes the description of the data stored by $\mathcal O$.

\subsubsection{Answering a query}
Algorithm~\ref{alg:QueryDigraph} gives the pseudo-code describing how $\mathcal O$ answers a query for a vertex pair $(u,v)$ using the procedure \textproc{Dist}.

\begin{algorithm}[!h]
\begin{algorithmic}[1]
  \Procedure{Dist}{$u,v$}
  \If{$\ell(u) = k$}
    \State Compute $d_{P(u)}(u,b)$ for each $b\in\partial P(u)$\label{Dist_Dijkstra_u1}
    \State Compute $d_{P(u)}(u,v)$ \Comment{$d_{P(u)}(u,v) = \infty$ if $v\notin P(u)$.\label{Dist_Dijkstra_u2}}
    \State \Return $\min\{d_{P(u)}(u,v),\min\{d_{P(u)}(u,b) + \Call{Dist}{b,v}\vert b\in\partial P(u)\}\}$\label{Dist_Recu}
  \EndIf
  \If{$\ell(v) = k$}
    \State Compute $d_{P(v)}(b,v)$ for each $b\in\partial P(v)$\label{Dist_Dijkstra_v1}
    \State Compute $d_{P(v)}(u,v)$ \Comment{$d_{P(v)}(u,v) = \infty$ if $u\notin P(v)$.\label{Dist_Dijkstra_v2}}
    \State \Return $\min\{d_{P(v)}(u,v),\min\{\Call{Dist}{u,b} + d_{P(v)}(b,v)\vert b\in\partial P(v)\}\}$\label{Dist_Recv}
  \EndIf
  \State Let $i = \lcadepth_{\mathcal T}(L_u,L_v)$\Comment{Using oracle $\mathcal O_{\lcads}$.\label{Dist_lca}}
  \State $\tilde d_u \gets\min\{\hat d_{P(u)}(u,p) + \Call{Dist}{p,v}\vert p\in\lport_i(u)\}$ \Comment{$\tilde d_u = \infty$ if $\lport_i(u) = \emptyset$.\label{Dist_du}}
  \State $\tilde d_v \gets\min\{\hat d_{\overline{P(u)}}(v,q) + \Call{Dist}{u,q}\vert q\in\lportrev_i(v)\}$ \Comment{$\tilde d_v = \infty$ if $\lportrev_i(v) = \emptyset$.\label{Dist_dv}}
  \State $\tilde d\gets\min\{\hat d_{P(u)}(u,p) + \hat d_T(p,q) + \hat d_{\overline{P(v)}}(v,q)\vert p\in\lport_i'(u)\land q\in\lportrev_i'(v)\}$ \Comment{Similarly for $\tilde d$\label{Dist_d}}
  \State \Return $\min\{\tilde d_u,\tilde d_v,\tilde d\}$
  \EndProcedure
\caption{The query procedure \textproc{Dist} for $\mathcal O$ for a vertex pair $(u,v)$.}
\label{alg:QueryDigraph}
\end{algorithmic}
\end{algorithm}

\subsubsection{Correctness}
To show correctness, we first observe that \textproc{Dist} can never underestimate distances in $G$; this follows since $\hat d_H$, $\hat d_{\overline H}$, and $\hat d_T$ have this property and since all shortest path computations are in subgraphs of $G$.

For the upper bound on the approximation, we show by induction on $\ell\geq 0$ that for all vertices $u$ and $v$ with $\ell(u) + \ell(v) = \ell$ and $d_H(u,v)\leq\alpha$, the output $\tilde d_H(u,v)$ of \textproc{Dist} on input $(u,v)$ satisfies $\tilde d_H(u,v)\leq d_H(u,v) + (4+2\ell)\epsilon\alpha$. This suffices since $(4+2\ell)\epsilon\alpha \leq (4+2k)\epsilon\alpha = O(\epsilon\alpha)$.

Let $i = \lcadepth_{\mathcal T}(L_u,L_v)$. For the base case, we have $\ell(u) = \ell(v) = 0$. Since $P_0(u) = P_0(v) = G$, $\lport_i(u) = \lportrev_i(u) = \emptyset$ so no recursive calls occur in lines~\ref{Dist_du} and~\ref{Dist_dv}. Furthermore, $\lport_i'(u) = \port_i(u)\cap P_i(u) = \port_i(u)$ and $\lportrev_i'(v) = \portrev_i(v)\cap \overline P_0(v) = \portrev_i(v)$ so by Thorup's analysis and by the property of $\hat d_H$, $\hat d_{\overline H}$, and $\hat d_T$, there are portals $p\in\port_i(u)$ and $q\in\portrev_i(v)$ such that
\[
  \tilde d_H(u,v)\leq \tilde d\leq d_H(u,p) + d_T(p,q) + d_H(q,v) + 3\epsilon\alpha\leq d_H(u,v) + 4\epsilon\alpha = d_H(u,v) + (4+2\ell)\epsilon\alpha,
\]
as desired.

Now assume that $\ell > 0$ and that the claim holds for smaller values. Let $u$ and $v$ be vertices with $\ell(u) + \ell(v) = \ell$ and $d_H(u,v)\leq\alpha$.

Assume first that $\ell(u) = k$ and let $S_{uv}$ be a shortest $u$-to-$v$ path in $H$. If $S_{uv}\subseteq P(u)$ then $\tilde d_H(u,v) = d_{P(u)}(u,v) \leq d_H(u,v)$ is returned. Otherwise, let $b$ be the first boundary vertex of $\partial P(u)$ from $u$ along $S_{uv}$. Since $\ell(b) + \ell(v) \leq \ell-1$, the induction hypothesis implies that $\tilde d_H(u,v)\leq d_{P(u)}(u,b) + \textproc{Dist}(b,v)\leq d_H(u,v) + (4+2(\ell-1))\epsilon\alpha$, as desired. The same bound holds if $\ell(v) = k$.

It remains to consider the case where both $\ell(u) < k$ and $\ell(v) < k$. It follows from Thorup's analysis that there are portals $p\in\port_i(u)$ and $q\in\port_i(v)$ such that $d_H(u,p) + d_T(p,q) + d_{\overline H}(v,q)\leq d_H(u,v) + \epsilon\alpha$ and the corresponding approximate $u$-to-$v$ path $Q_{uv}$ is contained in $H$. Denote its subpaths by $Q_{up} = u\leadsto p$, $Q_{pq} = p\leadsto q$, and $Q_{qv} = q\leadsto v$.

If there is a local portal $p_u\in\lport_i(u)\cap Q_{up}$ then let $p_u$ be the first such portal so that the subpath of $Q_{up}$ from $u$ to $p_u$ is contained in $P(u)$. Since $\ell(p_u) + \ell(v) \leq \ell-1$, the induction hypothesis applied to $p_u$ and $v$ gives
\begin{align*}
  \tilde d_H(u,v) & \leq \tilde d_u\leq\hat d_{P(u)}(u,p_u) + d_H(p_u,v) + (4+2(\ell-1))\epsilon\alpha\\
                  & \leq d_{P(u)}(u,p_u) + \epsilon\alpha + d_H(p_u,v) + (4+2(\ell-1))\epsilon\alpha\\
                  & \leq d_H(u,p_u) + d_H(p_u,v) + (4+2\ell-1)\epsilon\alpha\\
                  & \leq d_H(u,p) + d_T(p,q) + d_{\overline H}(v,q) + (4+2\ell-1)\epsilon\alpha\\
  & \leq d_H(u,v) + (4+2\ell)\epsilon\alpha,
\end{align*}
as desired. By symmetry, we get the same bound if there is a local portal in $\lportrev_i(v)\cap Q_{qv}$.

Now assume that $\lport_i(u)\cap Q_{up} = \lportrev_i(v)\cap Q_{qv} = \emptyset$. Then $p\in\lport_i'(u)$ and $q\in\lportrev_i'(v)$ so we get the same bound as for the case $\ell = 0$ above.

\subsubsection{Query time analysis}
Next, we show that \textproc{Dist}$(u,v)$ runs in time $O(r_k/\epsilon^{2k-1})$. Since every recursive call decreases $\ell(u) + \ell(v)$ by at least $1$, lines~\ref{Dist_Dijkstra_u1},~\ref{Dist_Dijkstra_u2},~\ref{Dist_Dijkstra_v1}, and~\ref{Dist_Dijkstra_v2} are each executed at most once during the recursion. To execute each of these lines, we require that $G$ is represented explicitly. For each vertex of a piece, we use a single bit to mark if it is a boundary vertex. Lines~\ref{Dist_Dijkstra_u1},~\ref{Dist_Dijkstra_u2},~\ref{Dist_Dijkstra_v1}, and~\ref{Dist_Dijkstra_v2} can now be executed by first running, say, a DFS from $u$ in $G$ that backtracks when reaching a marked vertex; this DFS identifies $P(u)$. Then a single-source shortest path algorithm and a single-target shortest path algorithm from $u$ in $P(u)$ are executed. Since this can be done in linear time in planar graphs, the total time spent on these lines is $O(r_k)$.

By Lemma~\ref{lemma:SuccinctLca}, $\mathcal O_{\lcads}$, line~\ref{Dist_lca} can be executed in $O(1)$ time. Since local portals in $\lport_i'(u)$ and in $\lportrev_i'(v)$ are ordered along dipaths of $T$ and since these sets are contained in $\port_i(u)$ and $\portrev_i(v)$, respectively, it follows from Lemma~\ref{lemma:ApproxTreeDist} and from Thorup's analysis that line~\ref{Dist_d} can be implemented to run in $O(1/\epsilon)$ time.

We will show that the total number of recursive calls is $O(r_k/\epsilon^{2k-2})$. The above will then imply that query time is $O(r_k/\epsilon^{2k-1})$.

If for some node in the query recursion tree, $\ell(u) = k$, the number of children of that node is $O(|\partial P_u|) = O(\sqrt r_k)$. The same bound holds if $\ell(v) = k$.

In lines~\ref{Dist_du} and~\ref{Dist_dv}, a total of $|\lport_i(u)| + |\lportrev_i(v)|\leq |\port_i(u)| + |\port_i(v)| = O(1/\epsilon)$ recursive calls are executed.

As argued above, each recursive call decreases $\ell(u) + \ell(v)$ by at least $1$. Since $\ell(p) + \ell(v)\leq \ell(u)-1 + \ell(v)\leq 2k-3$ in line~\ref{Dist_du} and $\ell(u) + \ell(q)\leq \ell(u) + \ell(v) - 1\leq 2k-3$ in line~\ref{Dist_dv}, the number of nodes in the recursion tree is bounded by $O(\sqrt r_k\cdot\sqrt r_k\cdot(1/\epsilon)^{2k-2}) = O(r_k/\epsilon^{2k-2})$. Hence, query time is $O(r_k/\epsilon^{2k-1})$, as desired.

\subsubsection{Space analysis}
We now show the space bound of Lemma~\ref{lemma:DigraphsOneScale}.

For $\ell\in\{0,1,\ldots,k-1\}$ and for each vertex $u$ for which $\ell(u) = \ell$, $\mathcal O$ stores four local portal sets. By construction, each of these sets has size at most that of Thorup's portal set for $u$ which is $O(\lg(n)/\epsilon)$. By definition of $P(u)$, $u$ is a boundary vertex of $P_{\ell(u)+1}(u)$ if $\ell(u) < k$. The total size (in terms of the number of portals) of the local portal sets over all $u$ of $H$ with $\ell(u) = \ell$ is therefore $O(|\partial_{\ell+1}H|\lg(n)/\epsilon)$.

To get the desired space bound, the idea is to compactly represent each local portal $p$ of a vertex $u$ as well as approximate distances $\hat d_{P(u)}(u,p)$ and $\hat d_{\overline{P(u)}}(u,p)$ using a small number of bits when $\ell = \ell(u) > 0$ by exploiting that both $u$ and $p$ belong to the small piece $P(u)$.

The compact lookup table approach immediately extends to the directed setting, allowing us to use only $O(\lg r_{\ell})$ bits to store $p$ with $u$ as well as $2$-approximations of $d_{P(u)}(u,p)$ and $d_{\overline{P(u)}}(u,p)$. The latter does not suffice here though since we need an an additive approximation error of at most $\epsilon\alpha$. To deal with this, let us recall how we represented a multiplicative $2$-approximation of, say, $d_{P(u)}(u,p)$. Pick a maximum-weight edge $e_{up}$ on a shortest path from $u$ to $p$ in $d_{P(u)}$. Letting $\rho_{up} = \lceil d_{P(u)}(u,p)/w(e_{up})\rceil$, the product $w(e_{up})\rho_{up}$ is a $2$-approximation since
\[
d_{P(u)}(u,p)\leq w(e_{up})\rho_{up} < w(e_{up})\left(\frac{d_{P(u)}(u,p)}{w(e_{up})} + 1\right) = d_{P(u)}(u,p) + w(e_{up})\leq 2d_{P(u)}(u,p).
\]
We get an additive error of at most $\epsilon\alpha$ by instead letting $\rho_{up}$ be the smallest multiple of $\epsilon$ of value at least $d_{P(u)}(u,p)/w(e_{up})$ since then
\[
d_{P(u)}(u,p)\leq w(e_{up})\rho_{up} < w(e_{up})\left(\frac{d_{P(u)}(u,p)}{w(e_{up})} + \epsilon\right) = d_{P(u)}(u,p) + \epsilon w(e_{up})\leq d_{P(u)}(u,p) + \epsilon\alpha.
\]

Since $\rho_{up}$ is divisible by $\epsilon$, $\rho_{up}/\epsilon$ is an integer of value at most $d_{P(u)}(u,p)/(w(e_{up})\epsilon) + 1\leq |E(P(u))|/\epsilon + 1$ and can thus be represented using $O(\lg(|E(P(u))|/\epsilon)) = O(\lg(r_{\ell}/\epsilon))$ bits.

Note that the above approximation is only used when $\ell > 0$; when $\ell = 0$, we store the exact distances to portals, i.e., in this case, $\hat d_{P(u)} = d_{P(u)}$ and $\hat d_{\overline{P(u)}} = d_{\overline{P(u)}}$.

We need to address the following issue with local portal sets: when \textproc{Dist} makes a recursive call for a vertex pair $(u_1,u_2)$, $u_1$ and $u_2$ are represented using $o(\lg n)$ bit local labels, meaning that two distinct vertices may have the same local labels. We need some way of obtaining the unique global $\Theta(\lg n)$-bit label in $G$ from a local label.

Recall that Lemma~\ref{lemma:DigraphsOneScale} permits us to use $O(n)$ space that does not depend on the choice of $\alpha$ and $H$. Since the recursive decomposition of $r$-divisions is a decomposition of $G$, we can thus afford to keep a pointer from each local label to its global label in $G$ using a total of $O(\sum_{i = 0}^{k-1}n/\sqrt{r_{i+1}}) = O(n)$ space where we used that $r_{i+1}\leq \frac 1 2 r_i$ for $i = 1,2,\ldots,k-1$ and that each pointer can be stored in one word. Recall that $\mathcal O$ also needs $G$ to be stored explicitly which can be done using $O(n)$ space as well.

From the above, the space used by $\mathcal O$ which is \emph{dependent} on $\alpha$ and $H$ is
\[
O(\frac 1 \epsilon \sum_{i = 0}^{k-1}|\partial_{i+1}H|\lg(r_i)),
\]
where we converted from number of bits to number of words. Since $\sum_{H\in\mathcal G^\alpha}|\partial_{i+1} H| = O(n/\sqrt{r_{i+1}})$ for $i = 0,1,\ldots,k-1$, we are within the space bound of Lemma~\ref{lemma:DigraphsOneScale}.

\subsection{Proof of Lemma~\ref{lemma:SuccinctLca}}\label{subsec:ProofSuccinctLca}
We will rely on the data structure in~\cite{JanssonSS07} which builds on~\cite{BenoitDMRRR05}. It uses $O(n_H)$ \emph{bits}, i.e., $O(n_H/\lg n)$ words of space, and can report $\lcadepth_{\mathcal T}(w_1,w_2)$ in $O(1)$ time from the preorder numbers $\pi(w_1)$ and $\pi(w_2)$ of any two nodes $w_1$ and $w_2$ in $\mathcal T$. It thus suffices for us to give a data structure that can translate the label of a query vertex $u$ in $H$ to the preorder number $\pi(L_u)$ in $O(1)$ time within the space bound of the lemma.

Keep a pointer from each vertex $u$ of $G$ to the its triangle $\Delta(u)$. Associated with each piece $P$ of an $r_i$-division in the recursive decomposition of $G$ is an $O(r_i)$-length array $A_P^i$ containing the $O(r_i)$ triangles $t$ of $G_{\Delta}$ such that all three vertices of $t$ are contained in $P$; the ordering of triangles in $A_P^i$ is arbitrary. Note that the space for storing pointers and arrays is independent of $H$ and $\alpha$.

For $i = 1,2,\ldots,k$ and for each node $w$ located $\lceil\lg r_i\rceil$ levels from the bottom in $\mathcal T$ (assuming $w$ exists), we pick a single arbitrary triangle $t_i(w)$ of $G_{\Delta}$ contained in a leaf of the subtree of $\mathcal T$ rooted at $w$. We refer to $t_i(w)$ as \emph{$i$-special}. A triangle of $G_{\Delta}$ contained in a leaf of $\mathcal T$ is also $i$-special if its vertex set intersects $\partial_i H$. A triangle which is not $i$-special for any $i = 1,2,\ldots,k$ is said to be $(k+1)$-special.

Associated with every $1$-special triangle $t$ is the preorder number $\pi(L(t))$ of the leaf $L(t)$ of $\mathcal T$ containing $t$.

For $i = 2,3,\ldots,k+1$ and for every $i$-special triangle $t$ with associated leaf $L(t)$, let $a_t$ be the ancestor $\lceil\lg r_{i-1} \rceil$ levels above $L(t)$ in $\mathcal T$. Let $P$ be the piece of an $r_{i-1}$-division containing $t$. Associated with $t$ is an index $j_t$ such that $A_P^{i-1}[j_t]$ is an $(i-1)$-special triangle $t'$ belonging to a leaf $L(t')$ below $a_t$ in $\mathcal T$; this index must exist since either all triangles in leaves below $a_t$ in $\mathcal T$ belong to $P$ (and one of them is $(i-1)$-special) or at least one of them must intersect $\partial_{i-1} P$ (and such a triangle is $(i-1)$-special). We also associate with $t$ the value $\delta_t^{i-1} = \pi(L(t)) - \pi(L(t'))$. Hence, $t$ is associated with the pair $(A_P^{i-1}[j_t],\delta_t^{i-1})$.

\paragraph{Dealing with a query:}
We can now answer a query for a vertex $u$ as follows. First, obtain $t_u = \Delta(u)$ using the pointer from $u$. Pick the smallest $i$ such that $t_u$ is $i$-special. If $i = 1$ then we are done since $\pi(L(t_u))$ is stored with $t_u$ and can thus be obtained in $O(1)$ time. If $i > 1$ then an $(i-1)$-special triangle $t'$ can be obtained as $A_P^{i-1}[j_t]$; a recursive algorithm then obtains $\pi(L(t'))$ from which $\pi(L(t_u))$ can be obtained as $\pi(L(t_u)) = \pi(L(t')) + \delta_t^{i-1}$. Query time is $O(k) = O(1)$.

\paragraph{Bounding space:}
It remains to bound space. For $i = 1,2,\ldots,k$, the number of $i$-special triangles incident to boundary vertices of pieces is $O(|\partial_i H|)$ and the number of $i$-special triangles associated with nodes $\lceil\lg r_i\rceil$ levels from the bottom of $\mathcal T$ is $O(n_H/2^{\lceil\lg r_i\rceil}) = O(n_H/r_i)$. The total space for storing all preorder numbers with $1$-special triangles is thus $O(n_H/r_1 + |\partial_1H|)$.

For $i = 2,3,\ldots,k+1$ and for each $i$-special triangle $t$ contained in a piece $P$, storing the index $j_t$ requires only $O(\lg|A_P^{i-1}|) = O(\lg r_{i-1})$ bits. Letting $t'$ be the special triangle $A_P^{i-1}[j_t]$, both $t$ and $t'$ are contained in leaves of the subtree $\mathcal T_{a_t}$ of $\mathcal T$ rooted at the ancestor node $a_t$, and hence $|\delta_t| = |\pi(L(t)) - \pi(L(t'))|\leq |\mathcal T_{a_t}| = O(2^{\lceil\lg r_{i-1}\rceil}) = O(r_{i-1})$ so storing $\delta_t$ requires only $O(\lg r_{i-1})$ bits.

We conclude that the total space is $O(n)$ independent of $H$ and $\alpha$ plus space $O(n_H\lg(r_k)/\lg n + \frac 1{\lg n}\sum_{i = 1}^{k}(|\partial_i H| + n_H/r_i)\lg r_{i-1})$. Since $n_H$ summed over all $H$ is $n$ and since $|\partial_i H| + n_H/r_i$ summed over all $H$ is $O(\sum_{H\in\mathcal G^\alpha}|\partial_i H|) = O(n/\sqrt{r_i})$, this space bound is within that of Lemma~\ref{lemma:DigraphsOneScale}.

\subsection{Proof of Lemma~\ref{lemma:ApproxTreeDist}}\label{subsec:ProofApproxTreeDist}
First, let us introduce some notation. For a vertex $u\in T$, let $Q(u)$ denote the dipath containing $u$ and let $r(u)$ be the endpoint of $Q(u)$, i.e., $Q(u)$ is of the form $a\leadsto u\leadsto r(u)$. Let $n(u)\in\{0,1,2\}$ be the number of dipaths above $Q(u)$ in $T$ (recall that $T$ is $(3,\alpha)$-layered). With Thorup's construction of $T$, $Q(u)$ is directed towards the root of $T$ iff $n(u)$ is odd. $\mathcal O_T$ stores $n(u)$ with each vertex $u$ of $H$, requiring $O(n_H)$ bits of space. Additionally, $\mathcal O_T$ stores an $\lcads$-data structure for $T$ (ignoring edge orientations of $T$); we describe it at the end of the proof. Its query time is $O(1)$ and its space requirement is within the bounds of Lemma~\ref{lemma:DigraphsOneScale}.

Given a query pair $(u,v)$, $\mathcal O_T$ can determine if $u$ and $v$ are on the same dipath, i.e., if $Q(u) = Q(v)$, in $O(1)$ time: first use the $\lcads$-structure to check if $u$ and $v$ are on the same root-to-leaf (disoriented) path in $T$ and if so, check if $n(u) = n(v)$. For $v\in Q(u)$, the $\lcads$ structure can also be used to check in $O(1)$ time if $v$ is closer to $r(u)$ than $u$ (where $\mathcal O_T$ uses the parity of $n(u) = n(v)$ to determine if $Q(u)$ is directed towards the root of $T$). We may thus make the assumption that $v$ is on the subpath of $Q(u)$ from $u$ to $r(u)$ since otherwise, $\mathcal O_T$ may simply output $\hat d_T(u,v) = \infty$.

\paragraph{Reduction to restricted queries:}
We claim that it suffices to handle a query for an approximate distance $\hat d_T(u,r(u))$ and to ensure that $d_T(u,r(u))\leq\hat d_T(u,r(u))\leq d_T(u,r(u)) + \epsilon\alpha/2$. For suppose we have shown this. Then for any $v\in Q(u)$ with $d_T(v,r(u))\leq d_T(u,r(u))$ (i.e., $v$ is on the path $u\leadsto r(u)$ in $T$),
\[
d_T(u,r(u)) - d_T(v,r(u)) - \epsilon\alpha/2 \leq \hat d_T(u,r(u)) - \hat d_T(v,r(u)) \leq d_T(u,r(u)) + \epsilon\alpha/2 - d_T(v,r(u)).
\]
Since $d_T(u,r(u)) - d_T(v,r(u)) = d_T(u,v)$, we have
\[
d_T(u,v) - \epsilon\alpha/2\leq \hat d_T(u,r(u)) - \hat d_T(v,r(u))\leq d_T(u,v) + \epsilon\alpha/2
\]
and hence
\[
d_T(u,v)\leq \hat d_T(u,r(u)) - \hat d_T(v,r(u)) + \epsilon\alpha/2\leq d_T(u,v) + \epsilon\alpha.
\]

It follows that $\mathcal O_T$ can output $\hat d_T(u,v) = \hat d_T(v,r(u)) - \hat d_T(u,r(u)) + \epsilon\alpha/2$ for query pair $(u,v)$ to satisfy the requirement of Lemma~\ref{lemma:ApproxTreeDist}.

\paragraph{Dealing with a restricted query:}
It remains to handle a query for a vertex pair $(u,r(u))$. We use the same recursive decomposition with $r$-divisions as in Section~\ref{subsec:OracleSingleDistScale}. Stored with $u$ is the first boundary vertex $b$ of $P(u)$ on the $u$-to-$r(u)$ path in $T$; if such a boundary vertex does not exist, $b = r(u)$. In addition, $\mathcal O_T$ stores an approximate distance $\hat d_T(u,b)$ such that $d_T(u,b)\leq \hat d_T(u,b)\leq d_T(u,b) + \epsilon/(2k) = d_T(u,b) + O(\epsilon)$. The output $\hat d_T(u,r(u))$ is then $\hat d_T(u,b)$ plus a recursive query for the pair $(b,r(u))$.

Since the number of recursive calls is at most $k$, the additive error for $\hat d_T(u,r(u))$ is at most $k(\epsilon/(2k)) = \epsilon/2$ and the query time is $O(k) = O(1)$. Using a similar analysis as in Section~\ref{subsec:OracleSingleDistScale}, the space required for $\mathcal O_T$ (excluding the space for the $\lcads$-structure below) is $O(n_H\lg(r_k)/\lg n + \frac 1{\lg n}\sum_{i = 0}^{k-1}|\partial_{i+1}H|\lg r_i)$ which is within the space bound of Lemma~\ref{lemma:ApproxTreeDist}.

\paragraph{The $\lcads$-structure for $T$:}
To complete the proof, it remains to present the $\lcads$-structure for $T$ and bound its space and query time. The data structure is somewhat similar to that of Lemma~\ref{lemma:SuccinctLca}. As argued in the proof of that lemma, it suffices to give a data structure with the desired space bound which, given the label of a vertex $u$ of $G$ such that $u$ belongs to $T$, outputs the preorder number $\pi(u)$ of $u$ in $T$ in $O(1)$ time.

Consider the sequence of vertices of $T$ ordered by increasing preorder numbers. For $i = 1,2,\ldots,k$, let $\mathcal S_i$ be the set of maximal subsequences each of which is fully contained in a piece in an $r_i$-division in the recursive decomposition of $G$. Thus, each subsequence in $\mathcal S_i$ has length at most $r_i$. We claim that $|\mathcal S_i| = O(|\partial_i H|)$. To show this, it suffices to give an $O(|\partial_i H|)$ bound on the number of maximal subwalks of the full walk of $T$ inducing the preorder walk such that each subwalk is fully contained in a piece of an $r_i$-division; this suffices since the preorder walk is obtained from the full walk by removing a subset of the vertices.

Every subwalk except possibly the first and last starts and ends in boundary vertices of $\partial_i H$. Since these subwalks are pairwise non-crossing in a planar embedding of $G$, replacing each such walk $b_1\leadsto b_2$ between boundary vertices $b_1$ and $b_2$ with the directed edge $(b_1,b_2)$ gives a simple planar graph on the vertex set $\partial_i H$. Since simple planar graphs are sparse, the number of subwalks is $O(|\partial_i H|)$, as desired.

For $i = 1,2,\ldots,k$, we say that a vertex is \emph{$i$-special} if it is the start or endpoint of a sequence in $\mathcal S_i$. A vertex which is not $i$-special for any $i = 1,2,\ldots,k$ is said to be $(k+1)$-special.

The following is stored with each vertex $u\in T$. If $u$ is $1$-special then store $\pi(u)$ with $u$. Otherwise, let $i$ be the smallest index in $\{2,3,\ldots,k+1\}$ such that $u$ is $i$-special and store with $u$ a pointer to the $(i-1)$-special start vertex $s_{i-1}(u)$ of the subsequence of $\mathcal S_{i-1}$ containing $u$ and store also $\delta_{i-1}(u) = \pi(u) - \pi(s_{i-1}(u))$.

To obtain $\pi(u)$ from a query vertex $u\in T$, do as follows. Pick the smallest $i$ such that $u$ is $i$-special. If $i = 1$, the query is trivial since $\pi(u)$ is stored with $u$. Otherwise, obtain $s_{i-1}(u)$ and $\delta_{i-1}(u)$ from $u$, recursively obtain $\pi(s_{i-1}(u))$, and output $\pi(s_{i-1}(u)) + \delta_{i-1}(u)$.

Correctness is clear and query time is $O(k) = O(1)$. It remains to bound space.

The total space for storing preorder numbers with $1$-special vertices is $O(|\partial_1 H|) = O(\frac 1{\lg n}|\partial_1 H|\lg r_0)$. Let $u$ be a vertex in $T$ and let $i$ be the smallest index such that $u$ is $i$-special; assume $i > 1$. The sequence in $\mathcal S_{i-1}$ containing $u$ has length at most $r_{i-1}$ and hence only $O(\lg r_{i-1})$ bits are needed to store $s_{i-1}(u)$ and $\delta_{i-1}(u)$. Since the number of $i$-special vertices is at most $|\partial_i H|$, the total space required is $O(n_H/\lg n)$ for the data structure of~\cite{JanssonSS07} plus $O(n_H\lg(r_k)/\lg n + \frac{1}{\lg n}\sum_{i = 1}^{k}|\partial_iH|\lg r_{i-1})$. Summing over all $H\in\mathcal G^\alpha$, this is within the space bound of Lemma~\ref{lemma:ApproxTreeDist} since for $j = 1,2,\ldots,k$, $\sum_{H\in\mathcal G^\alpha}|\partial_{j} H| = O(n/\sqrt{r_{j}})$ and since $\sum_{H\in\mathcal G^\alpha}n_H = n$.

\section{Weak Nets in Linear Time}\label{section:weak-net}

\paragraph{Preprocessing} First, we delete from $G$ every edge of weight strictly larger than $r$ in $O(n)$ time. If the deletion disconnects $G(V,E,w)$, we apply the algorithm in this section to each connected component of $G$. Henceforth, we assume that $G(V,E,w)$ is connected and every edge has weight at most $r$. 

Our algorithm, described in Algorithm~\ref{alg:decompose}, is inspired by the algorithm to find a sparse cover for $K_{h,h}$-minor-free graphs by Abraham et al.~\cite{AGMW10}. The main idea is to find a \emph{well separate decomposition} that we formally define in Definition~\ref{def:WSD}. Here a \emph{$(\gamma\cdot r)$-assignment} $\mathcal{A}_i$ associated with a set $K_i$ is simply a family of $|K_i|$ sets such that for each $x\in K_i$, there is a set $\mathcal{A}_i[x]$ containing $x$ such that for every vertex $y\in \mathcal{A}_i[x]$, $d_G(x,y)\leq \gamma \cdot r$. Note that $\mathcal{A}_i[x]$ may contain vertices that are not in $K_i$.

\begin{definition}[Well Separate Decomposition]\label{def:WSD}A well $(r,\gamma)$-separate decomposition of $G$ for $K$ is a set of triples $$\mathcal{H} = \{(H_1,K_1,\mathcal{A}_1), \ldots, (H_s, K_s,\mathcal{A}_s)\}$$ such that:
	
	\begin{itemize}[noitemsep]
		\item[(1)]each $H_i$ is a subgraph of $G$, $K_i \subseteq K\cap H_i$ is a subset of terminals in $H_i$, and $\mathcal{A}_i$ is an $(\gamma\cdot r)$-assignment associated with $K_i$.
		\item[(2)] $\mathcal{C} = \{H_1,H_2,\ldots, H_s\}$ is a $(\gamma, \gamma, r)$-sparse cover of $G$.  
		\item[(3)]  $d_G(K_i,K_j)\geq r$ for every $i\not= j$.
		\item[(4)] $\cup_{i\in [s]}\mathcal{A}_j$ is a partition of $K$.
	\end{itemize}
\end{definition}

The technical bulk of this section is showing that the decomposition $\mathcal{H}$ returned in line~\ref{line:WSD} by Algorithm~\ref{alg:decompose} is a well separate decomposition with $\gamma = O(1)$.

\begin{lemma}\label{lm:WSD}The decomposition $\mathcal{H}$ returned in line~\ref{line:WSD} of Algorithm~\ref{alg:decompose} is a well $(r,O(1)$-separate decomposition $\mathcal{H}$ of $G(V,E,w)$ for $K$. Furthermore, the algorithm can be  implemented in time  $O(n)$.
\end{lemma}

Assuming that $\mathcal{H}$ returned in line~\ref{line:WSD} of Algorithm~\ref{alg:decompose} is well $(r,O(1)$-separate decomposition, in lines~\ref{line:Net} to~\ref{line:Net-end}, we compute a net $N$ and its associted assignment $\mathcal{A}$. 

\begin{lemma}\label{lm:net-correctness}
	Algorithm~\ref{alg:decompose} currrectly returns a weak $(r,O(1))$-net $N$ and its assignment $\mathcal{A}$ for $K$ in time $O(n)$.
\end{lemma}
\begin{proof}
	Clearly, given well separate decomposition $\mathcal{H}$, lines~\ref{line:Net} and~\ref{line:Net-end} can be implemented in time $O(n)$.

	By line~\ref{line:Net-Assignment} and property (4) in Definition~\ref{def:WSD}, $\mathcal{A}$ is a cover for $K$. By property (2), for every $x$ added to $N$ and every $y\in K_i \setminus x$ in line~\ref{line:Add-Net}, $d_G(x,y)\geq d_{H_i}(x,y) \leq \gamma r$ where $\gamma = O(1)$. By property (1), for every $z \in \mathcal{A}_i[y]$, $d_G(y,z)\leq \gamma r$. Thus, by triangle inequality, $d_G(x,z)\leq 2\gamma r = O(1)r$. 
	
	It remains to show that the distance in $G$ between any two different vertices in $N$ is at least $r$. By construction, for each triple $(H_i,K_i,\mathcal{A}_i)\in \mathcal{H}$ where $K_i\not= \emptyset$, the algorithm picks exactly one vertex to $N$. Thus, for every two points $x\not= y \in K$, they must be in two terminal sets, say $K_a$ and $K_b$, of two different triples. By property (3), $d_G(x,y)\geq d_G(K_a,K_b)\geq r$, as desired.
\end{proof}

For the rest of this section, we focus on proving Lemma~\ref{lm:WSD} The decomposition procedure, \texttt{Decompose}($H_i,K_i,\mathcal{A}_i$), is recusive and has three levels of recursion; the input to each level is a subgraph $H_i$ of $G$, subset of terminals $K_i \subseteq H_i\cap K$, an assignment $\mathcal{A}_i$ associated with $K_i$, and a subset of vertices $C_i\subseteq V(H_i)$ called the \emph{core} of $H_i$. $H_i$ is assumed to be a connected graph; lines~\ref{line:CC} to~\ref{line:CC-end} guarantee this assumption. The algorithm guarantees that $K_i \subseteq C_i\cap K$; we call $K_i$ a set of \emph{core terminals}. 

The algorithm starts by picking a root vertex $r_T\in V(H_i)$ (line~\ref{line:root-pick}) and slicing the graphs into \emph{slices} $\{S_1,S_2,\ldots, S_{J}\}$ each of \emph{width} $jr$: $S_j$ contains all vertices of distance at least $jr$ and at most $(j+1)r$ from $x$ (line~\ref{line:Slice}). We then compute the subset of vertices, $\bar{C}_{j,i+1}$, in the core in each slide $S_j$ -- we call $\bar{C}_{j,i+1}$ a \emph{subcore} -- and  construct $G_{j,i+1}$ as a subgraph induced by vertices in the ball of radius $(i+1)r$ from the subcore (line~\ref{line:Gij}).  (See Figure~\ref{fig:Net-Const}(a).)

The family of $\{X_i\}^J_{0}$ is a partition of core terminals $K_i$; $X_0$ is set to be $\emptyset$ for notatinal convenience.    Set $X_j$ (line~\ref{line:Xj}) is the subset of terminals within distance at most $(i+1)r$ from $\bar{K}_{i+1}$, which is a set of core terminals in the subcore $\bar{C}_{i+1}$.  In  constructing  $\bar{K}_{i+1}$ (line~\ref{line:Kji}), we ignore all the core terminals that was included in the previous sets $X_{\leq j-1}$; this will guarnatee that $\{X_i\}^J_{0}$ is a partition of $K_i$. To form assignment $\mathcal{A}_{i+1}$, we simply assign to each terminal $x\in \bar{K}_{i+1}$ nearest terminals in $X_j$ (line~\ref{line:Ai}); this will guarantee that:

\begin{observation}\label{obs:term-partition} Every terminal  $z \in  K_i\setminus \cup_{j=0}^{J_i}\bar{K}_{j,i+1}$, there is a terminal $x \in \bar{K}_{j,i+1}$ for some $j$ such that $z\in \mathcal{A}_{i+1}[x]$.
\end{observation}

Let $\tau$ be the recursion tree representing the execution of the decomposition algorithm; $\tau$ has depth $3$. Each node $\nu \in \tau$ is associated with argument $(H_i,K_i,\mathcal{A}),C_i,i)$ to the recursive call corresponding to $\nu$.

\begin{figure}[!htb]
	\center{\includegraphics[width=0.9\textwidth]
		{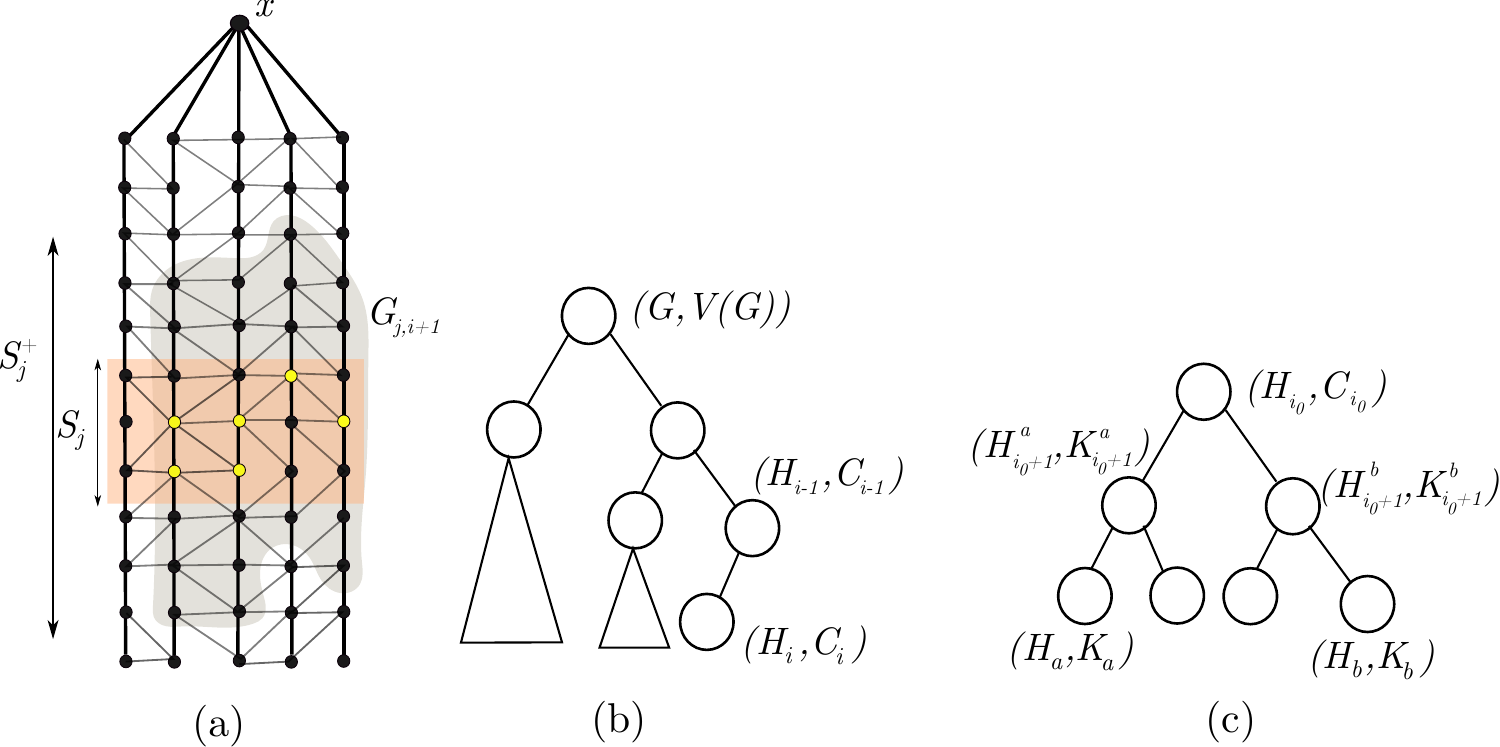}}
	\caption{(a) A slice $S_j$, the subcore $\bar{C}_{j,i+1}$ which consists of yellow vertices, and $G_{j,i+1}$ which contains vertices in the shaded region. (b) Illustration for the proof of Lemma~\ref{lm:ball-prop}. (c) Illustration for the proof of Lemma~\ref{lm:core-term-dist}}
	\label{fig:Net-Const}
\end{figure}

\begin{lemma}\label{lm:ball-prop} Let $C_i$ be a core at level $i$ of subgraph $H_i$ associated with a node $\nu \in \tau$, then for every $v\in C_i$:
	\begin{equation*}
		B_{G}(v,r) = B_{H_0}(v,r) \subseteq B_{H_1}(v,2r) \ldots \subseteq B_{H_i}(v,(i+1)r).
	\end{equation*}
	where $H_a$, $a \leq i-1$, is a graph associated with the ancestor at level $a$ of $\nu$ in $\tau$. In particular, this implies that $B_{G}(v, r)\subseteq V(H_i)$.	
\end{lemma}
\begin{proof}
	See Figure~\ref{fig:Net-Const}(b) for an illustration.	Note that by construction, $C_{i}\subseteq C_{i-1} \ldots\subseteq C_{0} = V(G)$.   We prove the lemma by induction on $i$.  By the construction in line~\ref{line:Gij} and the fact that $v\in C_0$, $B_{H_0}(v,r) = B_{G}(v,r)\subseteq V(G_{j,1})$ for some $j \in [0,J]$. Since $H_1$ is a connected component of $G_{1,j}$ containing $v$, it holds that $B_G(v,r)\subseteq V(H_1)$.
	
	Inductively, we assume that $ B_{H_a}(v,(a+1)r) \subseteq V(H_{i})$ for all $a \leq i-1$.  That means $B_{H_a}(v, (a+1)r)\subseteq B_{H_{i}}(v, i \cdot r)$ as $a+1\leq i$. By the construction in line~\ref{line:Gij}, $B_{H_{i}}(v,(i+1)r)\subseteq G_{j,i+1}$ for some $j \in [0,J]$. Note that $H_{i+1}$ is the connected component of $G_{j,i+1}$ containing $v$. Thus, $B_{H_{i}}(v,(i+1)r) \subseteq H_{i+1}$ and this implies $ B_{H_a}(v, (a+1)r) \subseteq B_{H_{i+1}}(v,(i+2)r)$ for all $a \leq i$.
\end{proof}

\begin{lemma}\label{lm:core-term-dist} Let $K_a,K_b$ be sets of core terminals associated with two leaves of $\tau$, then $$d_G(K_a,K_b)\geq r.$$
\end{lemma}
\begin{proof}
	Let $\mu_a,\mu_b$ be two leaves of $\tau$ corresponding to $K_a,K_b$, respectively. Let $\nu$ be the lowest common ancestor of $\mu_a$ and $\mu_b$ and $i_0$ be the level of $\nu$ (see Figure~\ref{fig:Net-Const}(c)). Let $H_{i_0}, K_{i_0}$ be the subgraph and core terminal set associated with $\nu$. 	By construction, $u,v \in K_{i_0}$. Let $K^a_{i_0+1}$ and $K^b_{i_0+1}$ be core terminal sets associated with two children of $\nu$ that contain $u$ and $v$, respectively. 
	
	Observe by construction that either (a) $u$ and $v$ are in different slices of $H_{i_0}$ or (b) $u$ and $v$ are in different connected components of graph $G_{j,i_0+1}$ for some $j \in [0,J_i]$. We consider each case separately.
	\begin{itemize}[noitemsep]
		\item \textbf{Case a.~} In this case, let $S_{j_1}$ and $S_{j_2}$ where $j_1 < j_2$ be two slices containing $u$ and $v$. Observe that $u\in \bar{C}_{i_0+1,j_1}$ and  $v\in \bar{C}_{i_0+1,j_2}$ since $\{u,v\}\subseteq C_{i_0}$. By construction in line~\ref{line:Kji}, it must be that $v\not \in B_{H_{i_0}}(u, (i_0+1)r)$. By Lemma~\ref{lm:ball-prop}, $v\not \in B_{G}(u,r)$ and hence $d_G(u,v)\geq r$ as desired.
		\item \textbf{Case b.~} In this case, $v\not\in B_{H_{i_0+1}}(u, (i_0+2)r)$ since $u$ and $v$ are in different connected components. By Lemma~\ref{lm:ball-prop}, $v\not\in B_{G}(u,r)$ and hence $d_G(u,v)\geq r$.
	\end{itemize}
	Thus, in both cases, $d_G(u,v)\geq r$.
\end{proof}

\noindent We obtain the following corollary of  Lemma~\ref{lm:core-term-dist}.
\begin{corollary}\label{cor:prop-(3)} $\mathcal{H}$ satisfies property (3) in Definition~\ref{def:WSD}.
\end{corollary}

\begin{lemma}\label{lm:Sparse-cover} $\mathcal{C} \stackrel{\mbox{\tiny{def.}}}{=} \{H : (H,K,\mathcal{A})\in \mathcal{H}\}$ is a $(O(1),O(1),r)$-sparse cover of $G$. 
\end{lemma}
\begin{proof}
	$\mathcal{C}$ is exacly the family of clusters found by the algorithm of Abraham et al.~\cite{AGMW10} when the excluded minor is $K_{3,3}$, instead of $K_{h,h}$. Since clusters in Abraham et al.~\cite{AGMW10} is a $(O(h^2), 2^{O(h)}h!, r)$-sparse cover, with $h = 3$, $\mathcal{C}$ is a $(O(1),O(1), r)$-sparse cover of $G$. We remark the proof of Abraham et al.~\cite{AGMW10} implicitely use the fac that edges of $G$ have weight at most $r$, that we achieve in the preprocessing step.
\end{proof}

\begin{algorithm}[!ht]
	\hspace*{\algorithmicindent} \textbf{Input:} $ G(V,E,w),r$ \\
	\hspace*{\algorithmicindent} \textbf{Output:}  $N,\mathcal{A}$ \Comment{Net $N$ and its associated assignment $\mathcal{A}$.}
	\begin{algorithmic}[1]
		\Procedure{WeakNet}{$G(V,E,w),r$}
		\State $\mathcal{A} \leftarrow \{\{z\} : z\in K\}$
		\State $\mathcal{H} \leftarrow$ \texttt{Decompose}($G,K,\mathcal{A}, V(G),0$) \label{line:WSD}
		\State $N\leftarrow \emptyset$
		\For{each triple $(H_i,K_i,\mathcal{A}_i)\in \mathcal{H}$ where $K_i\not= \emptyset$} \label{line:Net}
			\State pick an arbitrary vertex $x\in K_i$ and add to $N$\label{line:Add-Net}
			\State set $\mathcal{A}[x] \leftarrow \cup_{y\in K_i\setminus x}   \mathcal{A}[y]$\label{line:Net-Assignment}
		\EndFor \label{line:Net-end}
		\State return $(N, \mathcal{A})$
		\EndProcedure
		\Procedure{Decompose}{$H_i,K_i,\mathcal{A}_i, C_i,i$}
			\If{$i = 3$}
				\State	return $\{(H_i, K_i,\mathcal{A}_i)\}$
			\EndIf
			\State Pick a vertex $r_T \in V(H_i)$\label{line:root-pick}
			\State $\mathcal{H}\leftarrow\emptyset$; $X_0 = \emptyset$
			\State $J_i = \lceil  \frac{\max_{v\in V(H_i)}d_{H_i}(r_T,v)}{r} \rceil$
			\For{$ j \leftarrow 0$ to $J_i$} 
				\State $S_{j} \leftarrow  \{u : j r \leq d_{H_i}(u,r_T) < (j+1)r\}$\label{line:Slice}
				\State $\bar{C}_{j,i+1} \leftarrow S_{j}\cap C_{i}$
				\State $G_{j,i+1} \leftarrow H_i[B_{H_i}(\bar{C}_{j,i+1},(i+1)r)]$\label{line:Gij}
				\State $X_{\leq j-1} \leftarrow \bigcup_{0\leq t\leq j-1}X_t$
				\State $\bar{K}_{j,i+1}\leftarrow (K_i\cap \bar{C}_{j,i+1})\setminus X_{\leq j-1}$\label{line:Kji}
				\State $X_{j} \leftarrow B_{H_i}(\bar{K}_{i+1}, (i+1)r)\setminus X_{\leq j-1}$\label{line:Xj}
				\State $\bar{\mathcal{A}}_{i+1}\leftarrow \emptyset$
				\For{each terminal $x \in \bar{K}_{i+1}$}
					\State $\mathcal{A}_{i+1}[x] \leftarrow \mathcal{A}_i[x]\cup \{\mbox{terminals in $X_j$ closest to $x$, ties are broken arbitrarily}\}$ \label{line:Ai}
				\EndFor
				\For{each connected component $H_{i+1}$ of $G_{j,i+1}$}\label{line:CC}
					\State $\mathcal{A}_{i+1} \leftarrow \{\bar{\mathcal{A}}_{i+1}[x] : x\in \bar{K}_{i+1}\cap V(H_{i+1})\}$
					\State $K_{i+1}\leftarrow \bar{K}_{j,i+1}\cap V(H_{i+1})$
					\State $C_{i+1}\leftarrow V(H_{i+1})\cap \bar{C}_{j,i+1}$
					\State $\mathcal{H} \leftarrow \mathcal{H}\cup$\texttt{Decompose}($H_{i+1},K_{i+1},  \mathcal{A}_{i+1}, C_{i+1},i+1$)
				\EndFor \label{line:CC-end}
			\EndFor
			\textbf{return} $ \mathcal{H}$
		\EndProcedure
		\caption{Computing a weak $(r,O(1))$-net of a planar graph $G(V,E,w)$.}
		\label{alg:decompose}
	\end{algorithmic}
\end{algorithm}

\begin{lemma}\label{lm:Prop-H} $\mathcal{H}$ is a well $(r,O(1))$-separate decomposition of $G$.
\end{lemma}
\begin{proof}
	Property (3) follows from Corollary~\ref{cor:prop-(3)}. Property (2) follows directly from Lemma~\ref{lm:Sparse-cover}. It remains to show (1) and (4).
	
	Observe that for any $x \in K_j$ for some $(H_j,K_i,\mathcal{A}_j)\in \mathcal{H}$, by construction, every $v\in \mathcal{A}_j[x]$ is with in distance $(i+1)r \leq 4r$ from $x$ by the construction in line 16. Thus, $\mathcal{A}_j$ is a $4r$-assignment associated with $K_i$; this implies (1). The fact that $\cup_{i\in [s]}A_j$ is a partition of $K$ follows by induction and Observation~\ref{obs:term-partition}; this implies (4). 
\end{proof}

To complete the proof of Lemma~\ref{lm:WSD}, we show that Algorithm~\ref{alg:decompose} can be efficiently implemented.

\begin{lemma}Algorithm~\ref{alg:decompose} can be implemented in $O(n)$ time.
\end{lemma}
\begin{proof}
	We only need to show that the procedure decompose applied to $H_i$	 can be implemented in $O(|V(H_i)|\log (|V(H_i)|))$ time. This is because, the recursion tree $\tau$ only has 3 levels, and the total number of vertices in all graphs $H_i$ associated with nodes in each level is bounded by $\sum_{(H_j,K_i,\mathcal{A}_j) \in \mathcal{H}} |V(H_j)| = O(n)$ by property (2) in Definition~\ref{def:WSD}.
	
	We now focusing on implementing one decomposition step on $H_i$; note that $i\leq 1$. Let $n_i = |V(H_i)|$. First, we find a shortest path tree $T$ rooted at vertex $r_T$ chosen in line~\ref{line:root-pick} in $O(n_i)$ using the algorithm of Henzinger et al.~\cite{HKRS97}. Given $T$, all slices $\{S_j\}_{j=1}^J$ can be found in $O(n_i)$ time.  Let $S_j^{+} = \cup_{j-(i+1) \leq a \leq j + (i+1)} S_{a}$. Since $i+1\leq 4$, we have:
	\begin{equation}\label{eq:Slices-size}
		\sum_{j\in [J]}|S_j^{+}| \leq 9 \sum_{j\in [J]}|S_j| = O(n_i)
	\end{equation}
	
	Note that each $S_j^{+}$ is induces a planar graph in $H_i$.  Observe that $G_{j,i+1}$ is a subgraph of $H_i[S_j^{+}]$ and hence, line~\ref{line:Gij} can be implemented in $O(|S_j^{+}| )$ by adding a dummy vertex $y$,  connecting to every vertices in $\bar{C}_{j,i+1}$ by edges of length $+\infty$ and applying the algorithm of Henzinger et al.~\cite{HKRS97} from $y$. We note that single source shortest path algorithm of Henzinger et al.~\cite{HKRS97} applies not only to planar graphs but also to graphs whose subgraphs have sublinear separators which can be found in linear time; clearly a planar graph plus a single dummy vertex connected to all other vertices belongs to this class. 
	
	Using the same idea, we can find $X_j$ in line~\ref{line:Xj} in time $O(|S_j^{+}|)$ by finding a shortest path tree rooted at a dummy vertex that is connected to every vertex in $\bar{K}_{j,i+1}$ by an edge of weight $+\infty$. Once the shortest path tree (of $H_i[S_j^{+}]$) rooted at $y$ is given, finding terminals in $X_j$ closes to a terminal $x\in\bar{K}_{j,i+1}$ in line 19 can be done in $O(|S_j^{+}|)$ time. Thus, the total running time is:
	\begin{equation}
		\sum_{j\in [J]}O(|S_j^{+}|) \stackrel{\mbox{\tiny{Eq.~\ref{eq:Slices-size}}}}{=} O(n_i),
	\end{equation} 
	as desired.
\end{proof}

\paragraph{Acknowledgement.~} Hung Le is supported by the National Science Foundation under Grant No. CCF-2121952. We thank Arnold Filtser for informing us about~\cite{BFN19B}. 

	\bibliographystyle{alphaurlinit}
	\bibliography{spanner}
	
	\pagebreak
	\appendix
\section{Omitted Proofs}

\subsection{Proof  of the Decomposition Lemma}\label{subsec:proof-decomp-lemma}

We start by restating the lemma.

\DecompositionLemma*
\begin{proof}
	Assume that $G$ is given as a planar embedded graph. Let $T$ be a shortest-path tree of $G$ rooted at a vertex $r$. Let $G_{\Delta}$ is a triangulation of $G$. The root $R$ of $\mathcal{T}$ has $\chi(R) = G$ and the graph associated with $R$ is $G_{\Delta}$. 
	
	Let $X$ be a node of $\mathcal{T}$ in an intermediate step and $H_X$ be the subgraph of $G_\Delta$ associated with $X$; initially, $X = R$ and $H_X = G_{\Delta}$. We mark a constant number of special faces of $H_X$ that we call \emph{holes}. Each hole consists of two shortest paths rooted at the same vertex. If $X = R$, then $H_X$ has no hole.  We maintain the following invariant over the course of the construction.
	
	\begin{quote}
		\textbf{Separation Invariant} $H_X$ has at most 4 holes and $T_X \stackrel{\mathrm{def}}{=} T\cap H_X$ is a shortest-path tree of $H_X$. Each non-triangular face $f$ of $H_X$ is either a hole or an outer face of $H_X$, and consists of two monotone shortest paths of $T_X$ rooted at the same vertex $s$ and a (pseudo-)edge. 
	\end{quote}
	Recall that an $e$ is a pseudo-edge if $e\in E(G_{\Delta})\setminus E(G)$. The separation trivially holds when $X = R$ as $T_X = T$ and there is no hole in $G$.
	
	For each non-triangular face $f$ of $H_X$, we call the root $s$ of the two monotone shortest paths guaranteed by the Separation Invariant \emph{the root} of $f$. We triangulate $f$ by adding edges between $s$ and other vertices in $f$. Let $H_{\Delta}$ be the triangulation of $H$ obtained by triangulating every non-triangular face. We now design a weight function $\omega: V(H_X)\rightarrow \{0,1\}$ as follows:
	
	\begin{enumerate}
		\item \textbf{$H_X$ has less than $4$ holes} Let $\omega$ be the weight function that assign $w(v)= 1$  to every vertex $v \in \chi(X)$ and $0$ otherwise. 
		\item \textbf{$H$ has exactly $4$ holes} Let $\{F_i\}i=1^4$ be the holes of $H$. We choose four vertices $\{v\}_{i=1}^4$ such that $v_i$ is on the boundary of $F_i$ but not on the boundary of any other hole.  Let $\omega$ be the weight function that assigns $\omega(v)= 1$  if $v\in \{v_i\}_{i=1}^4$ and $\omega(v) = 0$ otherwise.  
	\end{enumerate}
	Let $C_X$ be the shortest path separator of $H_{\Delta}$ w.r.t $\omega$ and $T_X$ as guaranteed by~\Cref{thm:shortest-path-sep}.  Let $\Pi(X)$ to be the set of shortest paths on the boundary of the holes of $H$ and the two shortest paths constituting $C_X$. Again, when $X = R$, $\Pi(X)$ only contains two shortest paths from $C_X$.
	
	\begin{claim}\label{clm:non-intersecting} $C_X\subseteq H_X$. 
	\end{claim}
	\begin{proof}
		Suppose that $C_X$ contains an edge $e \in E(H_{\Delta})\setminus E(H_X)$. Let $F$ be the hole where $e$ is added. Let $s$ be the root of $F$. By the construction of $H_{\Delta}$, it must hold that $e = (s,v)$ for some vertex $v$ on the boundary of $F$.  Since $C_X$ is a fundamental cycle of $E(H_{\Delta})$ w.r.t the shortest path tree $T_X$, it must be that $C = T[s,v] \cup \{(s,v)\}$. Hence, $\omega(V(H_X)\cap \int(C)) = 0$ and $\omega(V(H_X)\cap \ext(T)) = W$ where $W = \sum_{v\in V(H_X)}\omega(v)$; this contradicts~\Cref{thm:shortest-path-sep}.
	\end{proof}
	
	Let $H_X^{in} = \left(H_{\Delta}\cap \int(C_X))\cup C_X\right)\setminus E(H_{\Delta})$ and  $H_X^{out} = \left(H_{\Delta}\cap \ext(C_X))\cup C_X\right)\setminus E(H_{\Delta})$. Note that, we do not keep the edges of $E(H_{\Delta})\setminus E(H_X)$ in $H_X^{in}$ and $H_X^{out}$. This will guarantee that every hole of $H$ will be a hole in $H^{in}$ or in $H^{out}$ by Claim~\ref{clm:non-intersecting}. $C_X$ will be a new hole of $H_X^{out}$.  
	
	We create two children $X_1$ and $X_2$ of $X$, and set $\chi(X_1) = \chi(X)\cap \int(C_X)$ and $\chi(X_2) = \chi(X)\cap \ext(C_X)$. We associate $H_X^{in}$ and $H_X^{out}$ with $X_1$  and $X_2$, respectively and recurse on $X_1$ and $X_2$. 
	
	If a node $X$ has $\chi(X)\leq \tau$, we \emph{stop} the recursive decomposition and $X$ will be a leaf of $\mathcal{T}$, and set $\Pi(X) = \emptyset$.	  This completes the description of the recursive decomposition.
	
	\begin{claim}\label{clm:sep-invariant} $H_X^{in}$ and $H_X^{out}$ satisfy Separation Invariant.
	\end{claim}
	\begin{proof}
		We assume inductively that $H_X$ satisfies Separation Invariant. Observe that the total  number of holes of $H_X^{in}$ and $H_X^{out}$ is at most the number of holes of $H_X$ plus $1$. Thus, if $H_X$ has at most $3$ holes, then the number of holes of $H_X^{in}$ and $H_X^{out}$ is at most $4$. 	If $H$ has exactly 4 holes, then the design of weight function $\omega$ guarantees that the number of holes of $H_X^{in}$ and $H_X^{out}$ is at most $\lfloor\frac{2}{3} \cdot 4 \rfloor  + 1 \leq 3$. Thus, the number of holes in Separation Invariant is guaranteed. In the remaining, we show other properties in Separation Invariant for $H_X^{in}$; the argument for $H_X^{out}$ is exactly the same. 
		
		By Claim~\ref{clm:non-intersecting}, any hole of $H_X^{in}$ is also a hole of $H_X$, and that $C_X$ is the outer face of $H_X^{in}$. By definition of shortest path separators, $C_X$  consists of two paths of $T_X$ rooted at a vertex $s$ and a (pseudo-)edge, say $e$, between the paths' endpoints. Let $F_{in}$ be the set of edges of $T_X$ strictly enclosed by $C_X$. By the Jordan Curve Theorem, $F_{in}\cup C_X$ is a spanning subgraph of $H_X^{in}$.  Hence $(F_{in}\cup C_X)\setminus \{e\}$ is a spanning tree of $H_X^{in}$, and it is a shortest path tree since it is a subtree of $T_X$. Thus, $H_X^{in}$ satisfies Separation Invariant. 
	\end{proof}

	We now show that $\mathcal{T}$ has all claimed properties (1)  - (4). Since  $\Pi(X)$ contains two shortest paths for each hole and two shortest paths of $C_X$, by Separation Invariant, $|\Pi(X)| \leq 10$. Thus, property (1) follows.
	
	For property (2), we observe that an internal node $X$ must have $\chi(X) > \tau$ by construction; otherwise, $X$ will be a leaf. Properties (2b)-(2c) follows directly from the construction. For (2d), we observe that any vertex $v\in \chi(X)\setminus (\chi(X_1)\cup\chi(X_2))$ must belong to the cycle separator $C_X$, hence belong to one of the two shortest paths constituting $C_X$. 
	
	Property (3) follows directly from the observation that any path from a vertex $u$ inside $H_X$ to a vertex $v$ outside $H_X$ must cross at least one of the holes of $H_X$ and that boundaries of these holes are shortest paths associated with $X$'s parent. 
	
	To show (4), we observe that when going up the tree $\mathcal{T}$ from a node $X$ to its \emph{grandparent} $Y$, $\chi(Y) \geq \frac{3}{2} \chi(X)$. It remains to bound the number of leaves of $\mathcal{T}$. 
	We will assign vertices of $S$ to each leaf of $\mathcal{T}$ in such a way that each leaf is assigned at least $\Omega(\tau)$ vertices and no vertex is assigned to more than one leaf; this will implies $O(\frac{|S|}{\tau})$ bound on the number of leaves of $\mathcal{T}$. Let $X$ be a leaf of $\mathcal{T}$, and $Y$ be $X$'s parent. By construction, $|\chi(Y)| > \tau$ and no vertex of $\chi(Y)$ belongs to any other nodes of $\mathcal{T}$ in on the same level with $Y$. Note that it could be that all vertices in $\chi(X) = \emptyset $; in this case at least $\frac{|\chi(Y)|}{3}$ active vertices belongs to shortest path separator $C_Y$ of subgraph $H_Y$  associated with $Y$. If $|\chi(X)| < \tau/3$, we assign an arbitrary subset of vertices, say $S_X$, of $\chi(Y) cap C_Y$ to $X$ such that $|S_X\cup \chi(X)|  = \tau/3$. If both children of $Y$ are leaves and are associated with less than $\tau/3$ vertices of $S$, we can always assign vertices of $|\chi(Y)|\cap C_Y$ to $Y$'s children in a way that each vertex is assigned to exactly one leaf. By this assignment scheme, no vertex of $S$ is assigned to more than one leaf, and each leaf is assigned at least $\frac{\tau}{3}$ vertices as desired.	
\end{proof}

\end{document}